\def\isArxiv{1}
\newtheorem{theorem}{Theorem}[section]
\newtheorem{lemma}[theorem]{Lemma}
\newtheorem{definition}[theorem]{Definition}
\newtheorem{corollary}[theorem]{Corollary}
\newtheorem{fact}[theorem]{Fact}
\newtheorem{remark}[theorem]{Remark}
\newtheorem{claim}[theorem]{Claim}
\newcommand{\eps}{\epsilon}
\newcommand{\R}{\mathbb{R}}
\renewcommand{\varepsilon}{\epsilon}
\renewcommand{\bar}{\overline}
\renewcommand{\eps}{\epsilon}
\newcommand{\SAT}{\mathsf{SAT}}
\newcommand{\MAX}{\mathsf{MAX}}
\newcommand{\val}{\textsf{value}}
\newcommand{\xnote}[1]{{\color{red}[Xi: #1]}}
\newcommand{\Thomas}[1]{{\color{green}[Thomas: #1]}}
\newcommand*{\RN}[1]{\expandafter\@slowromancap\romannumeral #1@}
\title{Computational Hardness of the Hylland-Zeckhauser Scheme}
\author{Thomas Chen \\ Columbia Univeristy \\ \texttt{tc3039@columbia.edu}
	\and Xi Chen\\ Columbia University\\ \texttt{xichen@cs.columbia.edu} 
	\and Binghui Peng\\ Columbia University\\ \texttt{bp2601@columbia.edu}
	\and Mihalis Yannakakis  \\ Columbia University \\ \texttt{mihalis@cs.columbia.edu}
}
\begin{document}
\maketitle

\begin{abstract}
We study the complexity of the classic Hylland-Zeckhauser scheme \cite{hz79} for one-sided matching markets.
We show that the problem
of finding an $\eps$-approximate equilibrium in the HZ scheme is PPAD-hard, and this holds even when $\eps$ is polynomially small
  and when each agent has no more than four distinct utility values.
Our hardness result, when combined with the PPAD membership result of \cite{vy21}, resolves the approximation complexity of the HZ scheme.
We also show that the problem of approximating the optimal social welfare (the weight of the matching) 
  achievable by HZ equilibria within a certain constant factor is NP-hard.

\end{abstract}

\newpage

\section{Introduction}
\label{sec:intro}

In a {\em one-sided matching} problem, there is a set $A$ of $n$ agents and a set $G$ of $n$ goods, and we are given a specification of the preferences of the agents for the goods.\footnote{In the statement of the problem, we have the same number of agents and goods for simplicity. In general there can be $n_1$ agents, each with their own demand $d_i$, and $n_2$ goods, each with its own supply, $s_j$,
where $\sum_{i \in n_1} d_i = \sum_{j \in n_2} s_j$. 
It is clear that this setting reduces to the simpler case of equal numbers of agents and goods.}
The problem is to find a matching between the agents and the goods (assigning a distinct good to each agent) that has desirable properties, such as Pareto optimality, envy-freeness, incentive compatibility.
This situation, where only one side has preferences, arises in many settings, such as assigning students to schools, assigning faculty members to committees, workers to tasks, program committee members to papers, students to courses with limited capacity, etc.

Since many agents may have the same or similar preferences, it is usually not possible to offer everybody their favorite good. So a solution mechanism has to strive to be equitable, satisfy the agents as much as possible, and incentivize them to give their true preferences (i.e., not gain an advantage by lying).
Randomization is often useful to meet fairness requirements.
A randomized solution mechanism has probability $x_{i,j} \in [0,1]$ of matching each agent $i$ to each good $j$; these probabilities form a doubly stochastic matrix, i.e., a fractional perfect matching in the bipartite graph between agents and goods.
In some applications, the goods may be divisible, or they may represent tasks or resources that can be shared among agents;
in these cases the quantities $x_{i,j}$ represent the shares of the agents in the goods. 

There are two main ways of specifying the preferences of each agent $i \in [n]$ for the goods: (1) {\em cardinal preferences}, where we 
are given the utility $u_{i,j}$ of agent $i$ for each good $j \in [n]$,
or (2) {\em ordinal preferences}, where we are given the agent's total
ordering of the goods.
Cardinal preferences allow for a finer specification of the agents' preferences (although they may require more effort to produce them).
As a result, they can yield better assignments.
Consider for instance the following example from \cite{hz79}: There are 3 agents and 3 goods. The utilities of agents 1 and 2 for the three goods are 100, 10, 0, while agent 3 has utilities 100, 80, 0.
The ordinal preferences of the three agents are the same, so any fair mechanism will not distinguish between them, and will give them each probability 1/3 for each good.
The expected utilities of the three agents in this solution is
$36 \frac{2}{3}, 36 \frac{2}{3}, 60$.
This solution is not Pareto optimal, i.e., there is another solution where all agents are better off. Agent 3 is assigned good 2, and agents 1, 2 randomly split goods 1 and 3. The expected utilities of the three agents in this solution are $50, 50, 80$.

In 1979 Hylland and Zeckhauser proposed a, by now, classic scheme for the one-sided matching problem under cardinal preferences \cite{hz79}.
The scheme uses a pricing mechanism to produce an assignment of probability shares $\{ x_{i,j} | i,j \in [n] \}$ of goods to agents, 
i.e. a fractional perfect matching, and these are then used in
a standard way to generate probabilistically an integral perfect matching. The basic idea is to imagine a market where
every agent has 1 dollar, and the goal is to find prices for the goods and (fractional) allocations $x_{i,j}$ of goods to the agents, 
such that
the market clears (all goods are sold), while every agent maximizes
her utility subject to receiving a bundle of goods of size 1 and cost at most 1. Hylland and Zeckhauser showed that such an equilibrium
set of prices and allocations always exists, using Kakutani's fixed point theorem. Note that money here is fictitious; no money changes hands. The only goal is to produce the allocation (the shares $ x_{i,j}$) so that it reflects the preferences of the agents. The HZ scheme has several desirable properties: it is Pareto optimal, envy-free \cite{hz79}, and it is incentive compatible in the large \cite{he18}.  The scheme has been extended and generalized in various ways since then.\newpage

Although the HZ scheme has several nice properties, one impediment is that, despite much effort, there is no efficient algorithm known to compute an equilibrium solution. This has remained an open problem till now. In \cite{akt17}, Alaei, Khalilabadi and Tardos gave polynomial-time algorithms for the case that the number of goods or the number of agents is a fixed constant (the case of a constant number of goods can be derived also from \cite{dk08}). Recently in \cite{vy21}, Vazirani and Yannakakis gave a polynomial-time algorithm for the bi-valued case, where every agent's utilities take only two values.
They also gave an example showing that the equilibrium prices and allocations can be inherently irrational. In the general case, they showed that the problem of computing an equilibrium solution is in the class FIXP. Furthermore, computing an $\epsilon$-approximate equilibrium is in the class PPAD, where in an approximate equilibrium an agent may get a slightly suboptimal allocation and may spend $1+\epsilon$ dollars. They leave open the problem whether computing an exact or approximate equilibrium is complete for the classes.

In this paper we resolve the complexity of computing an approximate equilibrium
  of the HZ scheme. Our main result is:

\begin{theorem}[Main]
	\label{thm:hz-equilibrium}
	The problem of computing an $\eps$-approximate equilibrium
	  of the HZ scheme is PPAD-complete 
	  when $\eps=1/n^c$ for any constant $c>0$.
\end{theorem}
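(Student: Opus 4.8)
The PPAD membership of the $\epsilon$-approximate HZ equilibrium problem (for every $\epsilon>0$, in particular $\epsilon = 1/n^c$) is exactly the result of \cite{vy21}, so all the content is the matching PPAD-hardness, and I would obtain it by a gadget reduction from a robust form of the generalized circuit problem (the $\epsilon$-Gcircuit problem), which is PPAD-hard for some absolute constant $\epsilon_0>0$. An instance is a collection of $m$ gates $G_\zeta$ (constant), $G_{=}$ (copy), $G_{+},G_{-}$ (addition and subtraction truncated to $[0,1]$), $G_{\times\zeta}$ (scaling), and a brittle comparator $G_{<}$, over variables taking values in $[0,1]$, and a solution is an assignment satisfying every gate up to additive error $\epsilon_0$. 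Given such an instance I would build an HZ market with $N=\poly(m)$ agents and goods in which each circuit variable $v$ is attached to a designated good $g_v$, so that at any $\epsilon$-approximate HZ equilibrium the value of $v$ can be read off as a fixed affine function of the price $p_{g_v}$, and the resulting assignment is an $O(\poly(m)\cdot\epsilon)$-approximate solution of the circuit. Taking $\epsilon = 1/N^c$ with $c$ larger than the degree of this polynomial makes the error at most $\epsilon_0$, and a padding argument (blowing up the instance with non-interacting dummy agents and goods) then extends hardness to every constant $c>0$.

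The engine of the reduction is the demand of a unit-size agent. Facing prices $p$, agent $i$ solves $\max_x \sum_j u_{ij}x_{ij}$ subject to $\sum_j x_{ij}=1$ and $\sum_j p_j x_{ij}\le 1$, so an agent who only values two goods $A,B$ with $u_{iA}>u_{iB}$ and $p_A>1>p_B$ is forced, at optimum, to set $x_{iA}=(1-p_B)/(p_A-p_B)$, a fractional-linear function of the two prices. Imposing the market-clearing identities $\sum_i x_{ij}=1$ on goods shared by several such agents therefore yields a system of algebraic relations among prices, and the first task is to fix the scale: I would include a normalization gadget — a small set of agents whose indifferences pin the price of a reference good and force every price of interest into a known window — so that each such price can be read as a number in $[0,1]$. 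On top of this I would realize each gate type as a constant-size sub-market: copy and scaling gates from two-good agents with tuned utility ratios; addition and subtraction by superposing the demands of several two-good agents on a common output good and balancing through clearing; and the brittle comparator $G_{<}$ from an agent with a large utility gap whose demand swings entirely to one side as soon as one input price overtakes the other, producing the required step in the output price. Each gadget must behave correctly not just at exact equilibria but at $\epsilon$-approximate ones, where an agent may be slightly suboptimal and may overspend by $\epsilon$; the point is that these slacks perturb the encoded value of a wire by only $O(\poly(N)\cdot\epsilon)$. With some additional care one can keep every agent to at most four distinct utility values, matching the sharper statement in the abstract.

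The last step is composition: the output good of one gadget becomes an input good of the next, fan-out is routed through copy gadgets so that each good is read by only boundedly many downstream agents, and all utilities are scaled so that the demand any single gadget places on a shared good is a negligible fraction of its supply, which prevents downstream load from corrupting upstream prices (the standard buffering trick for market gadget reductions). Assembling all gate gadgets together with the normalization gadget yields a market whose $\epsilon$-approximate equilibria correspond, up to $O(\poly(N)\epsilon)$ error on each wire, to $\epsilon_0$-approximate solutions of the circuit; since every Gcircuit instance has a solution, the market always has an equilibrium, so an algorithm finding an $\epsilon$-approximate HZ equilibrium would solve $\epsilon_0$-Gcircuit, contradicting PPAD-hardness.

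The step I expect to be the main obstacle is precisely the design and analysis of the gate gadgets under the unit-size constraint. Unlike linear Fisher or Arrow–Debreu markets, where an agent's demand is governed purely by bang-per-buck and is monotone in prices, the HZ agent juggles a size budget and a money budget at once, so its demand correspondence is piecewise fractional-linear with several regimes, and the intended regime must be \emph{forced} at all approximate equilibria, not merely exhibited at some exact one. Keeping prices from drifting into an unintended regime, making the comparator genuinely brittle in the approximate setting, and ensuring that the per-wire errors of size $\poly(N)\epsilon$ do not compound across the depth of the circuit into something exceeding $\epsilon_0$, is where the real work lies, and it is also what dictates the polynomial relation between $\epsilon$ and $n$ in the theorem.
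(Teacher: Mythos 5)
Your proposal gets the easy parts right (membership via \cite{vy21}, the padding argument, the general ``prices encode values'' framework with a normalization/reference-price gadget), but all of the actual content of the theorem lies in the gate gadgets, and that is exactly the part you leave as a sketch --- and the sketch you give would not survive contact with the HZ model. You propose copy/scaling/addition/subtraction gadgets built from two-good agents whose fractional-linear demands are superposed on a common output good, and a comparator built from a single agent with a large utility gap. The paper's reduction had to abandon precisely this kind of design for a concrete structural reason: for any fixed choice of utilities, the total demand that such agents place on the goods of a variable sub-market is monotonically \emph{increasing} in that sub-market's own price, so one cannot directly implement the negative or input-only linear dependence that an arithmetic wire requires (the naive two-good edge agent yields a demand $\approx p_u - p_v$ on the input side and $\approx p_v - p_u$ on the output side, i.e.\ an undirected interaction). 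The paper circumvents this with three ingredients that are absent from your plan: auxiliary goods whose price is pinned at $2$ by dedicated agents; a richer variable gadget with three groups of goods whose equilibrium prices are forced to be $p_v$, $(1+p_v)/2$ and $2-p_v$; and, crucially, families of $\Theta(m)$ ``step-function'' agents with enumerated utility thresholds $\ell/(2m^3)$ whose aggregate demand approximates the desired affine form of $(p_u,p_v)$ while cancelling the unwanted dependence on the output price. Your single brittle comparator is a one-agent version of one of these step agents, but without the enumeration-and-cancellation idea it does not give you addition, subtraction, or even a usable directed copy.

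There is also a quantitative gap in your error accounting. You treat gadget error as $O(\poly(N)\cdot\eps)$ per wire, but the dominant error in this model does not come from the $\eps$-slack in the equilibrium: the agents' demands are fractional-linear in prices, and making them look affine requires a first-order approximation whose error is a \emph{constant-factor-relative} term (of order $1/m^4$ against signals of order $1/m^2$ in the paper), independent of $\eps$. This is one reason the paper reduces from threshold games \cite{pp21} rather than from generalized circuits: a threshold game only needs a sum compared against $1/2$ with constant robustness $\kappa$, so a constant-relative-accuracy simulation with $m=\lceil C/\kappa\rceil$ suffices, whereas a Gcircuit reduction needs a full gate set whose errors must stay below a fixed constant across arbitrary composition, which your argument does not establish. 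In short, the route you outline is the classical one, but the two ideas that make it work for HZ --- the enriched variable gadget with fixed-price-$2$ goods, and the discrete family of comparator agents used to synthesize linear behavior --- are missing, and without them the gadgets you describe cannot be realized.
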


In our construction, every agent has at most 4 different utilities for the goods. Thus, the problem is PPAD-complete even for 4-valued utilities\footnote{The case of a small number of values is natural. For example, the authors have been in committees that ask them to
rate their level of interest in the submissions by values in a limited range, e.g. 0-4.}. We leave the 3-valued case open. We give however a simple example with values in $\{0, 0.5, 1 \}$ showing that there can be multiple disconnected equilibria, thus suggesting that usual convex programming methods may not work (at least a convex program will not include all equilibria).

A given instance of the one-sided matching problem may have multiple HZ equilibria.
All of them are Pareto optimal, but some may be preferable to others when  other criteria are considered.
One such criterion is the social welfare, i.e., the total weight of the matching
  (or the sum of utilities of agents).
We study the problem of approximating the optimal social welfare achievable
  by an HZ equilibrium. 
We show that this is an NP-hard problem:

\begin{theorem}
	\label{thm:social-welfare}
   Given an instance of the one-side matching problem and a value $w$, it is NP-hard to distinguish the case that the maximum social welfare of an HZ equilibrium is at least $w$ from the case that it is at most
$(\frac{175}{176} + \eps)w$ for any constant $\eps>0$.
\end{theorem}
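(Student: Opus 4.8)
The plan is to reduce from a gap version of a bounded-occurrence constraint satisfaction problem, say the NP-hardness of distinguishing satisfiable instances of $3$-SAT in which each variable occurs a bounded number of times from instances in which no assignment satisfies more than a $(1-\delta)$ fraction of the clauses, for an explicit constant $\delta>0$. Given such a formula $\varphi$ with variables $x_1,\dots,x_n$ and clauses $C_1,\dots,C_m$, I would build a one-sided matching instance from three kinds of local gadgets: (i) a \emph{variable gadget} for each $x_i$ — a small set of agents and goods arranged so that in \emph{every} HZ equilibrium the allocation inside the gadget is (essentially) one of two integral configurations, read as $x_i=\text{true}$ or $x_i=\text{false}$; since $x_i$ occurs a bounded number of times I would chain copies of a basic ``choice'' gadget in a cycle so that all occurrences of $x_i$ are forced to agree, as in consistency constructions used elsewhere for market-equilibrium hardness; (ii) a \emph{clause gadget} for each $C_j$ whose agents obtain strictly higher equilibrium utility (hence contribute strictly more to the social welfare) when at least one literal of $C_j$ is set true by the incident variable gadgets than when all three are false; and (iii) \emph{linking goods} connecting the literal slots of the variable gadgets to the clause gadgets that contain them. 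Only constant-factor hardness is sought here, so the number of distinct utility values is not a constraint (though it can be kept small, as in Theorem~\ref{thm:hz-equilibrium}).

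\textbf{Completeness.} If $\varphi$ is satisfiable, fix a satisfying assignment and build the ``intended'' solution: set each variable gadget to the integral configuration encoding that variable, price the linking goods accordingly, and let every clause gadget settle into its high-welfare state, which is available because each clause has a satisfied literal. I would verify gadget by gadget that this price/allocation pair is a genuine HZ equilibrium — every good is fully sold and every agent's bundle has size exactly $1$, costs at most $1$, and maximizes her utility among affordable size-$1$ bundles — and that its total social welfare equals a fixed value $w$ determined by the gadget parameters and by $n$ and $m$.

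\textbf{Soundness.} This is the heart of the argument. Starting from an arbitrary HZ equilibrium of the constructed instance, the first task is a \emph{rigidity lemma}: using market clearing together with the facts that each agent's bundle has size and cost exactly $1$ and is utility-maximal among affordable size-$1$ bundles — and, crucially, that HZ equilibria are Pareto optimal, which should rule out fractional ``compromise'' states inside a gadget with strict preferences — I would show that inside each variable gadget the allocation is pinned to within a small slack $\eta$ of one of its two integral configurations, so the equilibrium canonically induces a Boolean assignment $\sigma$, with the cycle structure forcing $\sigma$ to be consistent across all occurrences of each variable. Then, for every clause $C_j$ not satisfied by $\sigma$, its gadget is forced into the low-welfare state and so contributes at least a fixed additive amount $\Delta>0$ less welfare than in the completeness solution; bounding the welfare of the rest of the instance by market clearing and the size constraints shows this loss cannot be recouped. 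Since no assignment satisfies more than a $(1-\delta)$ fraction of the clauses, at least $\delta m$ clause gadgets are in the low state, so the social welfare of the equilibrium is at most $w - \delta m\Delta + (\text{slack})$. Scaling the gadgets so that $\delta m\Delta/w = 1/176$ and taking $\eta$ small enough that the aggregate slack is below $\eps w$ yields the claimed $(\frac{175}{176}+\eps)$ factor.

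\textbf{Main obstacle.} The hard part will be the soundness step. Unlike in a purely combinatorial reduction, an HZ equilibrium need not be integral and its prices and allocations may even be irrational, so I cannot simply assume the gadgets are in clean states; the real work is proving the rigidity lemma from the equilibrium conditions (market clearing, optimality of each affordable size-$1$ bundle, Pareto optimality, and the coupling of prices across linked gadgets) and quantitatively controlling how much social welfare fractional deviations can buy, so that a violated clause provably costs a definite $\Delta$ and the constant $175/176$ comes out exactly rather than merely as ``some constant $<1$.''
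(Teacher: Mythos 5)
Your high-level plan (reduce from gap MAX 3SAT, with variable gadgets read off from equilibrium prices and clause gadgets whose single ``clause agent'' earns more welfare when a literal is true) is the same family of construction as the paper's, but the step you yourself flag as the heart of the argument --- the rigidity lemma forcing every variable gadget into one of \emph{two} Boolean states, with Pareto optimality ruling out fractional ``compromise'' equilibria --- is a genuine gap, and the evidence in the paper suggests it is not just unproven but unlikely to hold for natural gadgets. The paper's own variable gadget (Lemma \ref{lem:np-variable}) admits \emph{three} equilibrium price configurations for non-vacant gadgets, including the symmetric compromise $(2/3,4/3,2/3)$, plus entirely vacant gadgets; Appendix \ref{sec:disconnectedEq} exhibits exactly this gadget with three disconnected equilibria, all of which are legitimate (hence Pareto optimal) HZ equilibria. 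The paper's soundness argument therefore does \emph{not} establish two-state rigidity. Instead it tolerates the extra states: it extracts a truth value only when one of the two extreme price patterns occurs, and observes that whenever a clause is unsatisfied by the extracted assignment, every $5/6$-utility literal good available to its clause agent has price at least $2/3 - O(1/m^2)$ (this covers the wrong extreme, the compromise state, and vacancy), so that agent's utility is at most $7/8 + O(1/m^2)$, versus $11/12$ when some literal good has price near $0$. Your plan, by contrast, needs the compromise states to be impossible, and offers only ``Pareto optimality should rule them out'' as justification --- which the paper's example refutes.

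A second, related gap is your treatment of the constant: you propose to ``scale the gadgets so that $\delta m \Delta / w = 1/176$,'' but the constant is not a free parameter you can dial by scaling. In the paper it emerges from the ratio of the two attainable per-clause utilities, $(7/8)/(11/12)=21/22$, combined with H{\aa}stad's $7/8$ gap: the soundness welfare is at most $\bigl(\tfrac{7}{8}\cdot\tfrac{11}{12}+\tfrac{1}{8}\cdot\tfrac{7}{8}\bigr)m + o(m) = \tfrac{175}{192}m + o(m)$ against a completeness welfare of $\tfrac{11}{12}m - o(m)$, giving exactly $175/176$. Moreover, your proposal to start from \emph{bounded-occurrence} 3SAT (with chained consistency gadgets) would force $\delta < 1/8$ and hence a weaker constant than the one in the theorem statement; the paper avoids any occurrence bound by making each variable gadget large ($K=m^3$ goods per group) so that a single gadget serves all clauses containing that variable, with the $O(m)$ outside demand absorbed as lower-order error in the price analysis.
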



\subsection{Proof Overview}
\label{sec:overview}

We give an overview of the proof of Theorem \ref{thm:hz-equilibrium}. 
To prove the problem of finding an approximate HZ equilibrium is PPAD-hard, we 
  give a polynomial-time reduction from 
   {\em threshold games}, introduced recently by Papadimitriou and Peng \cite{pp21}.
A threshold game is defined on a directed graph $G = (V, E)$, with a variable $x_v \in [0,1]$ associated with each node $v \in V$. The equilibrium condition is characterized by a comparison operator: $x_v = 1$ if $\sum_{ (u, v) \in E} x_u \leq 1/2 -\kappa$; $x_v = 0$ if $\sum_{ (u, v) \in E} x_u \geq 1/2 +\kappa$ and $x_v$ can take an arbitrary value in $[0,1]$, otherwise.
The PPAD-hardness of threshold game is proved in \cite{pp21}, and it holds for some positive constant $\kappa > 0$ and for sparse graphs.
As we will see later, the use of threshold games significantly simplifies the reduction.

From a high-level view, our reduction follows the general framework 
  of previous hardness results on market equilibria \cite{cddt09,ct09,vy11,cpy17}:
we use  prices of an HZ market to simulate variables $x_v$ in a threshold game
  and the construction is based on the design of two gadgets:
  \emph{variable gadgets} for each $v\in V$ (to simulate variables $x_v$ and enforce the equilibrium condition at each node $v$) and \emph{edge gadgets} for each $e=(u,v)\in E$ (to simulate the action of sending 
  $x_u$ to the sum at $x_v$ in the threshold game). 
However, 
a major challenge of working with the HZ scheme is that it is difficult to characterize the equilibrium behavior of agents in this model, as it is complex, nonlinear, and does not admit a closed form solution. As a consequence, it is hard both, to analyze even small instances, and to synthesize instances with desired characteristics. 
Below we discuss some of the key ideas behind the construction.



\vspace{+2mm}
{\bf \noindent Variable gadgets.}
To simulate variables $x_v$ of a threshold game,
  our starting point is the following simple sub-market.
There are two agents and two goods. Both agents have utility $1$ for good $1$ and utility $\frac{1}{2\delta^{-1} - 1}$ for good $2$ ($0 < \delta < 1$ which should be considered as a small constant as discussed below). 
It is easy to observe that the set of equilbrium prices are $(p, 2-p)$ for $p \in [0,\delta]$.
After scaling  by $1/\delta$,
  the price $p$ of good $1$ in this sub-market can be used to simulate 
  a variable $x_v\in [0,1]$ in the threshold game.
So we create such a sub-market $M_v$ for each node $v$ and 
  denote the price of good $1$ in $M_v$ by $p_v$.
To finish the reduction, it suffices to create agents that 
  are interested in goods in $M_u$ and $M_v$, 
  for each edge $e=(u,v)\in E$, such that
  the total allocation of goods from $M_v$ to them is captured by 
  $-p_u$.\footnote{The negative allocation
  may look strange but can be achieved (essentially) by offsetting the supply carefully.} 
The agents created for this task are what we referred to earlier as the edge gadget for $e$.
If achieved, the total allocation of goods of $M_v$ to agents outside
  would give the desired linear form $-\sum_{(u,v)\in E} p_u$ which is 
  a scaled version of $-\sum_{(u,v)\in E} x_u$.
The sub-market $M_v$ can help enforce the equilibrium condition of the
  threshold game.
When $\sum_{(u,v)\in E} x_u$ is too small (so the allocation to agents outside 
  is high due to the negative sign),  there is a 
  shortage of goods of $M_v$, which 
  would lead to $p_v\approx \delta$ and thus, $x_v\approx 1$; on the other hand, if the sum is too large, then there is a surplus of goods in the sub-market $M_v$, which forces 
    $p_v\approx 0$ and thus, $x_v\approx 0$.

\vspace{+2mm}
{\bf \noindent Edge gadgets.}
The key technical challenge lies in the construction of edge gadgets.
Our first attempt is to create an agent who has utility $1/2$ for good 
  $1$ in $M_u$ and utility $1$ for good $2$ in $M_v$ (with price $2-p_v$).
The optimal bundle for this agent, however, is not easy to work with
  at first sight: for example, the agent is allocated
  $\frac{1 -p_u}{2 - p_u - p_v}$ unit of good $2$ in $M_v$. 
{{\bf The first key idea of our reduction is to use first order approximation to simplify a complex function}.
That is to say}, we set $\delta$ to be a sufficiently small constant and apply first-order
  approximation on the allocation.
Ignoring constant factors and constant or lower-order terms, 
  the agent described earlier has $D_{u}\approx  p_u - p_v$
    unit of good $1$ in $M_u$ and $D_{v} \approx p_v - p_u$ unit of good $2$
    in $M_v$ in her optimal bundle.
This, however, is far from what we hoped for,  as (1) we don't want $p_v$ to appear in $D_v$, and (2) we want $D_u = 0$.
(Notably if we use this agent as our edge gadget for every $e\in E$, then they together
  are essentially simulating a threshold game over an undirected graph, which 
  admits trivial equilibria.)
What about agents with a different set of utilities for goods in $M_u$ and $M_v$?
Perhaps surprisingly, all our attempts fail and there is a fundamental reason for that: it can be shown that, no matter how the utilities are set, the allocation of goods in $M_v$ is always monotonically increasing with $p_v$, which is undesirable for our purpose. 

We circumvent this obstacle using the following three steps.
First, we introduce extra goods with a fixed price $2$ into the picture
  (where the fixed price can be enforced easily by creating agents who
  are interested in these goods only).
Second, we replace the current variable gadget with a richer 
  sub-market with three goods, with the price of the new good 
  set to be $\frac{(1+p_v)}{2}$.
Agents in our edge gadget can now have access to these new goods
  which give us a larger design space for their utilities and equilibrium behavior.
The last step, which is {\bf the second key idea of our reduction, is to use discrete functions to approximate a continuous function}.
 This is essential in our proof and we believe it might be of independent interest in reductions of similar settings.
 Concretely, we design agents that perform 
  comparison operations: there are two possible optimal bundles for each
  of these agents, and which one it is depends on the sign (in a robust sense)
  of a certain affine linear form
  of $p_u$ and $p_v$.
  The agent behaves like a step function, which is not useful on its own. However, when combined, one can construct a series of agents by enumerating utilities; these lead to careful cancellations that make sure
  the total allocation $D_u$ of goods from $M_u$ is $0$ as desired.

\subsection{Related Work}
\label{sec:relate}

We have already mentioned the most relevant work on the complexity of the Hylland-Zeckhauser scheme. The problem of computing an exact HZ equilibrium is in FIXP, and computing an approximate equilibrium is in PPAD \cite{vy21}. Polynomial-time algorithms for a fixed number of agents~or goods were given in \cite{akt17}. It has been a longstanding open problem about whether there is a polynomial-time algorithm in the general case.

The input in the one-sided matching problem is the same as in the
classical assignment problem (equivalently, maximum weight perfect matching problem in bipartite graphs). This is one of the most well-studied problems in Operations Research and Computer Science, and several very efficient algorithms have been developed for it over the years. The difference in the one-sided matching problem is that the primary consideration is to produce a solution that has certain desirable fairness and optimality properties for the agents;  the goal is not simply the maximization of the weight of the matching.
As we show in this paper, computing an HZ solution for the one-sided matching problem is  probably computationally harder: it is PPAD-hard
to compute any approximate HZ solution, and if we want to maximize the total weight of the matching as a secondary criterion then the problem becomes NP-hard.

In the case of ordinal preferences for the agents in the one-sided matching problem, there are other schemes with nice properties: the Random Priority (also called Random Serial Dictatorship) scheme \cite{as98, moulin18} and the Probabilistic Serial scheme \cite{bm01}.
These have polynomial-time (randomized) algorithms. However, since they are based only on ordinal preferences, they are suboptimal with respect to the agents' utilities, as the earlier simple example shows. 

The setting in the HZ scheme is the same as in the linear Fisher market model: the input consists of the utilities $u_{i,j}$ of the agents for the goods, and the problem is to compute equilibrium prices and allocations. The only difference is that when an agent picks her optimal bundle of goods, she must get exactly one unit (in addition to the cost being within the budget of 1 dollar), i.e. the solution must be a (fractional) perfect matching. Although this may seem like a small difference, it has a substantial effect both in the structure of the problem and in its computational complexity: exact solutions may be irrational, and as we show in this paper, finding an approximate solution is PPAD-hard. The linear Fisher model has been studied extensively and there are polynomial-time algorithms for computing equilibria in this model, as well as in the more general Arrow-Debreu model with linear utilities \cite{dpsv08,orlin10,jain07}.

There is furthermore extensive work on markets with more complex utility functions than linear, such as piecewise linear, Leontief, CES utilities and others, and for many of them it is PPAD-hard or FIXP-hard to compute an approximate or exact equilibrium
(e.g. \cite{cddt09,ct09,vy11,cpy17,ey10,gmv17}). 

Several researchers have proposed Hylland-Zeckhauser-type mechanisms for a number of appli\-cations, e.g. see \cite{budish11,he18,le17,mclennan18}. There are also recent works that have generalized and extended the basic HZ scheme in several directions, for example  to two-sided matching markets and to an Arrow-Debreu-type setting where the agents own initial endowments \cite{emz19a,emz19b,gtv20}. Note that in the case of initial endowments, an HZ equilibrium may not always exist, so some approximation or slack is needed to ensure existence
(see \cite{emz19a,gtv20}). 




\section{Preliminaries}
\label{sec:pre}

We write $[n]$ to denote $\{1,2,\ldots, n\}$. Given two integers $n$ and $m$ we use $[n:m]$
  to denote integers between $n$ and $m$, with $[n:m]=\emptyset$ when $m<n$.
Given two real number $x, y \in \R$, we use $x = y\pm \eps$ to denote $x\in [y-\eps, y+\eps]$.

\subsection{The Hylland-Zeckhauser Scheme}
We provide a formal description of the Hylland-Zeckhauser scheme for one-sided matching markets
  \cite{hz79, vy21}. 
It will be convenient for us to describe it using the language of linear Fisher markets.
An \emph{HZ market} $M$ consists of  
  a set $A=[n]$ of $n$ agents and a set $G = [n]$ of $n$ (infinitely) divisible goods. 
Each agent $i\in A$ has one dollar and there is one unit of each good $j\in G$ in the market.
We write $u_{i,j}\in [0,1]$\footnote{As it will become clear in Definition \ref{def:hz-equilibrium},
shifting and scaling utilities of agents does not change the set of HZ equilibria.
We assume utilities to lie in $[0,1]$ because we will consider an additive approximation
  of HZ equilibria in Definition \ref{def:hz-approx-equilibrium}.}
 to denote the utility of one unit of good $j$ to agent $i$, for each $i\in A$ and $j\in G$.
Hence an HZ market $M$ is  specified by a positive integer $n$ and utilities $(u_{i,j}:i,j\in [n])$.

Given an HZ market $M$ with $n$ agents and goods, an \emph{HZ equilibrium} \cite{hz79}
  consists of  an allocation $x=(x_{i,j}:i,j\in [n])$ and a price vector $p=(p_j:j\in [n])$
  that are nonnegative and satisfy a list of properties to be described in Definition \ref{def:hz-equilibrium}.
Given $x$ and $p$, we will refer to $x_i=(x_{i,j}:j\in [n])$ as the \emph{bundle of goods allocated to agent $i$}.
The \emph{cost} of the bundle $x_i$ is given by $\sum_{j\in [n]} p_j x_{i,j}$ and the \emph{value}  of $x_i$ to agent $i$ is 
  $\sum_{j\in [n]} u_{i,j} x_{i,j}$.
We are ready to define HZ equilibria:


\begin{definition}[HZ Equilibria \cite{hz79}]
	\label{def:hz-equilibrium}
A pair $(x,p)$, where $x=(x_{i,j}:i,j\in [n])\in \mathbb{R}_{\ge 0}^{n\times n}$ and $p=(p_i:i\in [n])\in \mathbb{R}^n_{\ge 0}$, 
  is an \emph{HZ equilibrium} of an HZ market $M$ if: 
	\begin{flushleft}\begin{enumerate}
		\item The total allocation of each good $j\in [n]$ is $1$ unit, i.e., $\sum_{i\in [n]}x_{i,j} =1$.
		\item The total allocation of each agent $i\in [n]$ is $1$ unit, i.e., $\sum_{j\in [n]}x_{i,j} =1$.
		\item The cost of the bundle $x_i$ of each agent $i\in [n]$ is at most $1$, i.e.,
		$\smash{\sum_{j \in [n]}p_jx_{i,j} \leq 1}$.
		\item For each $i\in [n]$, $x_i$ maximizes its value 
		$\sum_{j\in [n]} u_{i,j}x_{i,j}$ to agent $i$ subject to 2 and 3 above.\footnote{We note that in \cite{hz79, vy21},
		  $x_i$ is required (as a tie-breaking rule) to minimize its cost among all those that maximize the value subject to items 2 and 3.
This is needed to ensure Pareto optimality of the equilibrium allocations. However, we 		
do not need this condition for our hardness results, and this only makes the results stronger. So for simplicity, we omit the condition
from the definition of exact and approximate equilibria.}
	\end{enumerate}\end{flushleft}
\end{definition}

Equivalently the last condition in the definition above can be captured by the following LP:
\begin{align*}
 \text{\emph{maximize}}\ \ &\sum_{j\in [n]}  u_{i,j}x_{i,j}\\ \text{\emph{s.t.}}\ \ 
&\sum_{j\in [n]}x_{i,j} = 1,\ 
\sum_{j\in [n]}p_j x_{i,j} \leq 1,\ \text{and}\ 
 x_{i,j} \geq 0\ \text{for all $j\in [n]$}.
\end{align*}
Taking $\mu_i$ and $\alpha_i$ to be the dual variables, one has the following dual LP that will be useful:
\begin{align*}
\text{\emph{minimize}}\ \ \hspace{0.05cm}&\alpha_i + \mu_i\\[0.3ex]
\text{\emph{s.t.}}\ \ \hspace{0.05cm}&\alpha_i \geq 0\ \text{and}\ 
\alpha_i p_j + \mu_i \geq u_{i,j},\ \text{for all $j\in [n]$}.
\end{align*}
We will refer to the LP (and its dual LP) above as the LP (or dual LP) for 
  agent $i$ with respect to the price vector $p$. 
Let $\val_p(i)$ denote their optimal value. Then it captures the optimal value of any bundle of goods
  to agent $i$ subject to conditions 2 and 3 in Definition \ref{def:hz-equilibrium}. 

Hylland and Zeckhauser \cite{hz79} showed that an HZ equilibrium always exists:

\begin{theorem}[Existence \cite{hz79}]
Every HZ market admits an HZ equilibrium.
\end{theorem}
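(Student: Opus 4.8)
The plan is to prove existence by Kakutani's fixed point theorem, following the classical argument of \cite{hz79} (an Arrow--Debreu-style tâtonnement construction). Normalize prices to the scaled simplex $P=\{p\in\R^n_{\ge 0}:\sum_{j} p_j=n\}$, and let $S=\{y\in\R^n_{\ge 0}:\sum_{j} y_j=1\}$ be the simplex of single-agent allocations; both are compact and convex. For $p\in P$, the budget-feasible set $S(p)=\{y\in S:\sum_{j} p_j y_j\le 1\}$ is always nonempty, since the uniform allocation $y_j=1/n$ gives $\sum_{j} p_j/n=1$. Moreover $S(p)$ varies continuously with $p$ in the Hausdorff sense: if the $p_j$ are not all equal there is a point of $S$ putting extra weight on the cheapest good with $p\cdot y<1$, so Slater's condition holds and the constraint correspondence is both upper and lower hemicontinuous; and if all $p_j$ are equal then $S(p)=S$. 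Hence, by Berge's maximum theorem, the demand correspondence $D_i(p)=\arg\max_{y\in S(p)}\sum_{j} u_{i,j} y_j$ is upper hemicontinuous with nonempty, compact, convex values.

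Next I would define a correspondence $F$ on the compact convex set $P\times S^n$ by sending $(p,(x_i)_{i\in[n]})$ to the set of all $(q,(y_i)_{i\in[n]})$ such that $y_i\in D_i(p)$ for every $i$, and $q\in\arg\max_{q'\in P}\sum_{j} q'_j(z_j-1)$, where $z_j=\sum_{i} x_{i,j}$ is the aggregate demand for good $j$. The price component is the set of maximizers of a linear functional (continuous in $(x_i)$) over the polytope $P$, hence an upper hemicontinuous correspondence with nonempty, compact, convex values; combined with the $D_i$'s, the product correspondence $F$ has the same properties. Kakutani's theorem then yields a fixed point $(p^\ast,(x_i^\ast))$.

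It remains to verify that $(x^\ast,p^\ast)$ is an HZ equilibrium in the sense of Definition~\ref{def:hz-equilibrium}. Conditions 2 and 3 hold because $x_i^\ast\in D_i(p^\ast)\subseteq S(p^\ast)$, and condition 4 is exactly the statement $x_i^\ast\in D_i(p^\ast)$ (no cost-minimizing tie-break is required by our definition). For condition 1, set $z_j^\ast=\sum_i x_{i,j}^\ast$; summing $\sum_j x_{i,j}^\ast=1$ over $i$ gives $\sum_j z_j^\ast=n$, so $\sum_j(z_j^\ast-1)=0$ and $m:=\max_j(z_j^\ast-1)\ge 0$. Since $p^\ast$ maximizes $q'\mapsto\sum_j q'_j(z_j^\ast-1)$ over $P$, its support lies in $\{j:z_j^\ast-1=m\}$, so $\sum_j p_j^\ast z_j^\ast=(1+m)\sum_j p_j^\ast=(1+m)n$. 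On the other hand $\sum_j p_j^\ast z_j^\ast=\sum_i\sum_j p_j^\ast x_{i,j}^\ast\le\sum_i 1=n$ by condition 3. Hence $(1+m)n\le n$, forcing $m=0$, so $z_j^\ast\le 1$ for all $j$; together with $\sum_j z_j^\ast=n$ this gives $z_j^\ast=1$ for every $j$, which is condition 1.

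The main obstacle is the usual delicate point in market-equilibrium existence proofs: showing that the demand correspondence has closed graph (upper hemicontinuity), and designing the price-adjustment map so that its fixed points force market clearing. Here the first difficulty is actually milder than in general Fisher or Arrow--Debreu markets, since the constraint $\sum_j x_{i,j}=1$ keeps every budget set $S(p)$ nonempty and bounded no matter how small prices become. The one genuine subtlety is that an agent need not exhaust her dollar (her utilities may all be $0$ on the goods she is forced to take), so Walras' law only holds as the inequality $\sum_j p_j^\ast z_j^\ast\le n$ rather than an equality --- which, as the computation above shows, is exactly what is needed to conclude.
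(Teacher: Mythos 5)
Your overall architecture — Kakutani applied to the product of the agents' demand correspondences and a linear price-adjustment correspondence, plus the observation that the inequality form of Walras' law suffices to force market clearing at a fixed point — is the classical route of \cite{hz79} that the paper simply cites (it proves nothing here beyond the citation), and your verification that a fixed point, if it exists, satisfies conditions 1--4 is correct. The gap is in the continuity step. You claim the constraint correspondence $p\mapsto S(p)$ is continuous on all of $P$, arguing Slater's condition when the $p_j$ are not all equal and "$S(p)=S$" when they are. The second clause does not give lower hemicontinuity, and in fact it fails at $p=\mathbf{1}$, the unique point of $P$ with no cheaper point: take $n\ge 2$, an agent with $u_i=(1,0,\dots,0)$, and $p^k=(1+\tfrac1k,\,1-\tfrac1k,\,1,\dots,1)\to\mathbf{1}$. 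At $p^k$ the unique optimal bundle is $(\tfrac12,\tfrac12,0,\dots,0)$ (to buy more of good 1 the agent must pair it with the strictly cheaper good 2, and $y_1\le\tfrac12$), whereas $D_i(\mathbf{1})=\{e_1\}$. So $D_i$ does not have closed graph at $\mathbf{1}$, Berge's theorem does not apply there, and your correspondence $F$ fails the hypotheses of Kakutani's theorem on $P\times S^n$. Consequently the existence of a fixed point is not established by the argument as written.

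This is not a removable artifact of your normalization: any compact convex price domain rich enough to contain the equilibrium will contain prices at which every unit bundle costs exactly the budget, and there the demand correspondence genuinely jumps; a sequence of approximate fixed points can drift to such a price without producing a fixed point in the limit. Your closing remark that the continuity issue is "milder" here than in Fisher or Arrow--Debreu markets is exactly backwards on this point — nonemptiness and boundedness of $S(p)$ were never the danger; the absence of a Slater point at $p=\mathbf{1}$ is, and it is precisely the delicate feature that Hylland and Zeckhauser's original Kakutani construction (and later treatments, e.g.\ via perturbing the budgets to $1+\eps$ so that Slater's condition holds everywhere and then taking a limit, which itself requires an argument because the optimal value $\val_p(i)$ can jump upward at such prices) is designed to circumvent. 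To complete your proof you must either modify the correspondence or the domain so that closed graph holds everywhere, or carry out such a perturbation-and-limit argument; as it stands the key hypothesis of the fixed point theorem is unverified, and false, at one point of the domain.
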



If $(x,p)$ is an equilibrium, then it is easy to see that if we scale
the difference of all prices from 1, the resulting price vector $p'$
together with the same allocation $x$ forms also an equilibrium;
i.e. for any $r>0$ with $r \leq \min \{ 1/(1-p_j) | p_j <1 \}$, setting $p'_j = 1 + r(p_j -1)$ for all $j \in [n]$ yields a vector
$p'$ such that $(x,p')$ is also an equilibrium (see \cite{vy21}).
The reason is that this scaling does not affect the set of feasible allocations, as can be easily seen, and
$\val_{p'}(i)=\val_p(i)$ for all agents $i \in [n]$.
That is, price vectors related to each other by this scaling are in a sense equivalent. A consequence of this observation
is that we may always assume w.l.o.g. that an equilibrium contains a good with price 0 \cite{hz79}:
If one of the goods has price $<1$, then we can always scale the prices so that the minimum price is 0.
On the other hand if all prices in an equilibrium are $\ge 1$,
then all prices must be 1 (the sum of the prices must be $\leq n$,
the sum of the agents' budgets), and in this case the cost
condition 3 is redundant, and the all-0 vector forms also an equilibrium with the same allocation.
 
We say that a price vector $p$ is {\em normalized} if $\min_i p_i =0$.
We will restrict our attention henceforth to normalized price vectors, without always mentioning it explicitly.

Our hardness results hold for the following relaxation studied by 
  Vazirani and Yannakakis \cite{vy21}:

\begin{definition}[Approximate HZ Equilibria]
	\label{def:hz-approx-equilibrium}
Given some $\eps>0$, a pair $(x,p)$, where $x=(x_{i,j}:i,j\in [n])\in \mathbb{R}_{\ge 0}^{n\times n}$~and $p=(p_i:i\in [n])\in \mathbb{R}^n_{\ge 0}$ (where $\min_{i\in [n]} p_i =0$)\footnote{The requirement that $p$ be normalized is important in the definition because otherwise condition 3 on the cost has no effect: if $(x,p)$ is any pair that satisfies conditions 1,2,4, then we can always scale $p$ as above 
to a vector $p'$
where all prices are sufficiently close to 1 so that condition 3 is also satisfied for $(x,p')$.},
  is an \emph{$\eps$-approximate HZ equilibrium} of an HZ market $M$ if: 
	\begin{flushleft}\begin{enumerate}
			\item The total allocation of each good $j\in [n]$ is $1$ unit, i.e., $\sum_{i\in [n]}x_{i,j} =1$.
		\item The total allocation of each agent $i\in [n]$ is $1$ unit, i.e., $\sum_{j\in [n]}x_{i,j} =1$.
		
			\item  The cost of $x_i$ is at most $1+\eps$ for each   $i\in [n]$, i.e.,
			  $\smash{\sum_{j \in [n]}p_jx_{i,j} \leq 1+\eps}$.
			\item The value $\sum_{j\in [n]} u_{i,j} x_{i,j}$ of $x_i$ to agent $i$ is at least $\emph{\val}_p(i) - \eps$ for 
			each $i\in [n]$. 
	\end{enumerate}\end{flushleft}
\end{definition}

An alternative, more relaxed notion of an $\eps$-approximate equilibrium, where condition 1 is also relaxed to $|\sum_{i\in [n]}x_{i,j} -1| \leq \eps$ for all goods $j \in [n]$, is polynomially equivalent to the above notion \cite{vy21}. Thus, it follows that computing an $\eps$-approximate equilibrium under the more relaxed
notion is also PPAD-complete.


\subsection{Threshold Games}
Our PPAD hardness results use \emph{threshold games}, introduced recently by Papadimitriou and Peng \cite{pp21}.
They showed that the problem of finding an approximate equilibrium in a threshold game is PPAD-complete. %
\begin{definition}[Threshold game \cite{pp21}]
	\label{def:threshold}
	A threshold game is defined over a directed graph $H = (V, E)$. Each node $v\in V$ represents a player with strategy space $x_v\in [0,1]$. 
	Let $N_v$ be the set of nodes~$u\in V$ with $(u,v)\in E$.
Then $x = (x_v:v\in V) \in [0, 1]^{V}$ is a $\kappa$-\emph{approximate equilibrium} if every $x_v$ satisfies
%
	\begin{align*}
	x_v \in  
	\begin{cases}
	[0,\kappa] & \sum_{u \in N_v} x_u > 0.5 + \kappa\\
	[1-\kappa,1] & \sum_{u \in N_v} x_u < 0.5 - \kappa\\
	[0,1] & \sum_{u \in N_v}x_u \in[ 0.5 - \kappa, 0.5+\kappa] 
	\end{cases} 
	\end{align*}
\end{definition}

\begin{theorem}[Theorem 4.7 of \cite{pp21}]
	\label{thm:ppad-threshold}
	There is a positive constant $\kappa$ such that the problem of 
  finding a $\kappa$-approximate equilibrium in a threshold game is PPAD-hard. 
This holds even when the in-degree and out-degree of each node is at most $3$ in the threshold game.
\end{theorem}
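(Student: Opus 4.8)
This is Theorem~4.7 of \cite{pp21}; here I only describe the approach I would take. The plan is to reduce from the generalized circuit problem: given a circuit over variables in $[0,1]$ assembled from a fixed finite set of gate types --- a constant gate, a copy/assignment gate, scaling by a fixed rational, addition and subtraction truncated to $[0,1]$, and a \emph{brittle comparator} that outputs a value near $1$ when its input is below a threshold, near $0$ when above, and arbitrary in the threshold band --- find an assignment satisfying every gate up to a small additive error. This problem is PPAD-hard even for a fixed constant error parameter and even for circuits of bounded fan-in and fan-out (see \cite{pp21} and the references therein). The central observation is that a single threshold node is \emph{exactly} a brittle comparator with threshold $1/2$ applied to the sum of the values of its in-neighbors. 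So I would represent each circuit variable by the value $x_v$ of a node $v$, and then the comparator gates of the generalized circuit come essentially for free.

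The next step is to set up ``reference'' infrastructure. A node with no in-neighbors has in-sum $0<1/2-\kappa$ (for the small constant $\kappa$), hence is pinned to $x_v\in[1-\kappa,1]$, a constant-$1$ signal; combining constant-$1$ signals with scaling, addition and subtraction gadgets then pins a node to any desired rational constant in $[0,1]$, in particular to $1/2$, which is what is needed to realign the built-in threshold. I would then build the remaining gadgets --- copy/fan-out, scaling by a constant, and truncated addition and subtraction --- each as a constant-size block of threshold nodes, wired so that at every $\kappa$-approximate equilibrium the output node carries the intended affine-linear function of the inputs up to $O(\kappa)$ error. Because all edges are unweighted and the hard-wired threshold is fixed at $1/2$, this requires feeding in the reference constants, and, wherever a genuinely continuous value must be ``held'', closing a short feedback loop through a comparator so that a node whose in-sum lands in $[1/2-\kappa,1/2+\kappa]$ (and is therefore free) is nonetheless forced to equal its target value. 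Throughout, I would keep every node of every gadget at in-degree and out-degree at most $3$, splitting larger fan-in/fan-out with short chains (or binary trees) of copy nodes.

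It remains to handle the approximation bookkeeping. A generalized-circuit instance is a system of \emph{local} constraints, one per gate, not an iterated computation, so the $O(\kappa)$ errors introduced by the individual gadgets do not compound with circuit depth. Choosing $\kappa$ to be a sufficiently small constant multiple of the (constant) error parameter of the generalized circuit problem then guarantees that any $\kappa$-equilibrium of the threshold game decodes to a valid approximate satisfying assignment of the circuit, and conversely. Totality is not an issue, since threshold games always have equilibria by a Brouwer fixed-point argument, and the reduction is clearly polynomial time and preserves bounded degree; this yields PPAD-hardness for a fixed constant $\kappa$ and in-/out-degree at most $3$.

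I expect the main obstacle to be precisely the arithmetic and copy gadgets. The only nonlinearity available, the brittle comparator, is handed to us for free by the threshold semantics; the real work is simulating \emph{linear} operations --- weighted sums, copying, and robustly storing a continuous value --- using only unit-weight edges, a single hard-wired threshold of $1/2$, and fan-in/fan-out at most $3$, with every gadget provably robust against the $\kappa$-slack. That is where essentially all of the construction and its analysis will concentrate.
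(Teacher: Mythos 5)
First, a point of comparison: the paper does not prove this statement at all. Theorem~\ref{thm:ppad-threshold} is quoted verbatim from \cite{pp21} (their Theorem~4.7) and is used here purely as the imported starting point of the reduction in Section~\ref{sec:ppad}, so there is no in-paper proof to measure your attempt against. That said, your high-level plan --- reduce from the constant-error generalized circuit problem, observe that a threshold node is exactly a brittle comparator of an in-sum against $1/2$, realize constants, copy, scaling and truncated addition/subtraction as constant-size blocks of threshold nodes, and enforce the degree bound with chains of copy nodes --- is the natural route for a result of this type and is in the spirit of the actual argument in \cite{pp21}.

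The genuine gap is that your sketch defers exactly the step where all of the content lies, and the one mechanism you do indicate is not sound as stated. In a threshold game the condition at a node can only force its value to be near $0$, to be near $1$, or leave it unconstrained; hence a ``short feedback loop through a comparator'' does not by itself pin a free node to an interior target. For example, the $2$-cycle $v\to a\to v$ admits the spurious equilibria $(x_v,x_a)\approx(1,0)$ and $(0,1)$ alongside the intended in-band values, and plain comparator cycles in general only guarantee that \emph{some} node lies in the band, not which one or at what value. What does work is feedback in which a node's own value enters its own in-sum: if $v$ has a self-loop and its other in-neighbours contribute $s$, then the equilibrium conditions force $x_v=\max(0,\tfrac12-s)\pm 2\kappa$ (a value above this band puts the in-sum over $\tfrac12+\kappa$ while $x_v>\kappa$, a value below puts it under $\tfrac12-\kappa$ while $x_v<1-\kappa$, both contradictions). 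It is from this truncated-affine primitive that the constants such as $\tfrac12$ and $\tfrac14$, copying, halving, and truncated addition/subtraction have to be assembled, with each gadget shown to exclude unwanted boundary equilibria, to be robust to the $\kappa$-slack, and to respect in-/out-degree at most $3$. Since you explicitly leave all of these constructions and their analysis open (and yourself identify them as the crux), the proposal is an outline of intent about the known result rather than a proof of the statement; that is harmless for the present paper, which only cites the theorem, but it does not establish it.
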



\ifdefined\isArxiv
\else
\graphicspath{{./Figures/}}
\fi

\section{PPAD-hardness}
\label{sec:ppad}

Our goal in this section is to prove the following theorem:

\begin{theorem}\label{theo:ppadhard}
The problem of finding a $(1/n^5)$-approximate HZ equilibrium in an HZ market 
  with $n$ agents and goods is PPAD-hard. 
\end{theorem}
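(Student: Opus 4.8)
The plan is to give a polynomial-time reduction from the problem of finding a $\kappa$-approximate equilibrium in a threshold game on a sparse directed graph $H=(V,E)$, which is PPAD-hard by Theorem~\ref{thm:ppad-threshold}, to the problem of finding a $(1/n^5)$-approximate HZ equilibrium. Given $H$, I would build an HZ market $M$ whose prices encode the threshold-game variables. For each node $v\in V$ I create a \emph{variable sub-market} $M_v$ — a constant-size collection of goods together with dedicated agents who are interested only in those goods — engineered so that in any approximate equilibrium the normalized price $p_v$ of a designated good of $M_v$ lies in a controlled interval and, after rescaling by $1/\delta$ for a small constant $\delta$, plays the role of $x_v$. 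The baseline version of $M_v$ has two goods and two agents, both valuing good~$1$ at $1$ and good~$2$ at $\tfrac{1}{2\delta^{-1}-1}$; its equilibrium prices are exactly $(p,2-p)$ with $p\in[0,\delta]$. I would then enrich $M_v$ with an auxiliary good whose price is pinned to $\tfrac{1+p_v}{2}$ (enforced by adding agents who only want that good and one zero-priced good) and with a supply of "anchor" goods whose price is pinned to $2$; these extra goods are what enlarge the design space needed for the edge gadgets.

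Next, for each edge $e=(u,v)\in E$ I would construct an \emph{edge gadget}: a constant-size family of agents interested only in goods of $M_u$, $M_v$, and the anchor goods, designed so that their \emph{net} consumption of goods of $M_v$ equals a scaled, offset copy of $-p_u$ (the apparent negative allocation being realized by offsetting the supply, as in the overview), while their net consumption of goods of $M_u$ is $0$. Summing edge gadgets over all $e=(u,v)$ then makes the residual demand for $M_v$'s goods from outside $M_v$ behave like $-\sum_{(u,v)\in E}p_u$, so that market clearing inside $M_v$ forces the threshold condition: if $\sum p_u$ is too small there is a shortage of $M_v$ goods, which pushes $p_v$ up toward $\delta$, i.e. $x_v\to1$; if $\sum p_u$ is too large there is a surplus, pushing $p_v$ down toward $0$, i.e. $x_v\to0$. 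The two tools I would use to build and analyze the edge gadget are (i) first-order Taylor expansion of agents' optimal-bundle allocations (which are smooth functions of the prices, all near $1$): taking $\delta$ small enough lets me replace these by affine-linear approximations with rigorously controlled error, read off from the LP / dual LP for each agent; and (ii) a discretization trick — a single agent with a carefully chosen utility profile and budget constraint has (essentially) two candidate optimal bundles and switches between them according to the sign of an affine form $\ell(p_u,p_v)$, i.e. behaves like a step function. A single such agent is useless, but taking a family of comparison agents indexed by a fine grid of threshold values, with appropriate weights, synthesizes a piecewise-constant approximation to the desired linear response in $p_u$ while arranging the $M_u$-contributions to telescope to $0$.

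I expect step (ii) to be the main obstacle. The difficulty is to achieve simultaneously that (a) the net $M_v$-allocation is approximately linear in $p_u$ and essentially \emph{independent} of $p_v$ — recall from the overview that any naive single-agent gadget has $M_v$-allocation monotone in $p_v$, which is fatal, so the step-function/cancellation construction is essential — and (b) all accumulated error terms stay below the budget: the Taylor-truncation error (governed by $\delta$), the step-function granularity error, and the $\eps$-slack of the approximate HZ equilibrium, all of which must remain below $\kappa$ after multiplication by $1/\delta$. Finally I would do the bookkeeping: $M$ has $n=\poly(|V|)$ agents and goods and is polynomial-time constructible; the bounded in- and out-degree of $H$ keeps every gadget constant-size and bounds the error accumulated at each node; and for correctness, any $(1/n^5)$-approximate HZ equilibrium of $M$ restricts via $x_v=p_v/\delta$ to a $\kappa$-approximate equilibrium of the threshold game (and, conversely, any threshold equilibrium lifts to an HZ equilibrium of $M$, so the reduction is onto). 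Choosing $\delta$ a sufficiently small constant and $\eps=1/n^5$ (the argument works verbatim for any polynomially small $\eps$) makes every slack fit, which yields the theorem; tracking the number of distinct utility values used by each agent in this construction is what will later give the $4$-valued refinement.
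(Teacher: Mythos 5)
Your plan follows the paper's route (reduction from threshold games, variable sub-markets whose low price simulates $x_v$, price-$2$ anchor goods, first-order approximation, and step-function ``comparison'' agents whose contributions cancel), but as written it has a genuine gap: the edge gadget, which is the crux of the theorem, is never actually constructed or analyzed. You explicitly defer it (``I expect step (ii) to be the main obstacle''), yet this is exactly where the work lies: one must exhibit concrete families of agents (in the paper, the groups $A_{e,1}$, $A_{e,2,\ell}$, $A_{e,3,\ell}$, $A_{e,4,\ell}$ with utilities such as $\ell/(2m^3)$ and $\frac14+\frac1{4m^2}+\frac1{m^3}$), solve each agent's dual LP to show that in \emph{any} $(1/n^5)$-approximate equilibrium the optimal bundle switches between two regimes according to the sign of an affine form in $p(G_{u,1}),p(G_{v,1})$ with quantified slack, and then verify by summation that the totals come out as in Lemma~\ref{lem:edgedemand}: net consumption of $G_u$ equal to a constant ($24m^3+12m\pm O(1)$, i.e.\ $p_v$-independent) and net consumption of $G_v$ equal to $-6m^3p(G_{u,1})+24m^3+15m\pm O(1)$. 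Without these explicit utilities and the cancellation computation, the claim that one can make the $M_v$-response linear in $p_u$ and independent of $p_v$ is precisely the assertion to be proved, not a step of the proof. The same applies to the quantitative bookkeeping you wave at: the paper needs the machinery of Section~\ref{sec:basic} (bounds on dual variables, the $\delta$-suboptimal-goods lemma, the spending bounds) and a careful hierarchy of group sizes ($m^{10}$ for variable gadgets, $32m^5$ for $G_e$, $\Theta(m^3)$ and $\Theta(m)$ for edge agents) so that per-gadget errors are $O(1)$ against signals of size $\Theta(m)$--$\Theta(m^3)$; ``choose $\delta$ small enough'' does not by itself establish that all errors fit under the threshold-game constant $\kappa$.

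One concrete step in your sketch would also fail as stated: you propose to pin the auxiliary good's price at $\frac{1+p_v}{2}$ by ``adding agents who only want that good and one zero-priced good.'' Dedicated agents of that kind can only pin a price at a fixed constant (this is how the paper forces $p(G_e)\approx 2$); they cannot make a price track another price $p_v$. In the paper the relation $p(G_{v,2})\approx\frac{1+p(G_{v,1})}{2}$ is not enforced by outside agents at all: it emerges from the variable gadget itself, because the $A_v$ agents' utilities satisfy $u_2=(3u_1+u_3)/4$, which together with the suboptimality lemma forces $q_2\approx(3q_1+q_3)/4$, and market clearing inside the gadget forces $q_1+q_3\approx 2$ (Lemma~\ref{lem:variable-price}). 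So the coupling between $p(G_{v,2})$ and $p(G_{v,1})$ has to be built into the variable gadget's utility structure; your proposed enforcement mechanism would need to be replaced by something of this kind. (Minor remark: the converse direction you mention, lifting threshold equilibria to HZ equilibria, is not needed for PPAD-hardness; existence of HZ equilibria is already guaranteed.)
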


 In Appendix \ref{sec:padding} (via a standard padding argument),
  we give a polynomial-time reduction from
  the problem of finding a $(1/n^5)$-approximate HZ equilibrium to
  that of finding a $(1/n^c)$-approximate HZ equilibrium in an HZ market, for any positive constant $c$.
Theorem \ref{thm:hz-equilibrium} follows by combining the PPAD membership result of \cite{vy21}.

Our plan is as follows. Let $\eps=1/n^5$ throughout this section wherever an HZ market with
  $n$~agents and goods is concerned.
We start with some basic facts about approximate HZ equilibria in~Section \ref{sec:basic}
  (mainly about how to work with approximately optimal
  bundles for agents).
Then we describe the polynomial-time reduction from threshold games to HZ markets 
  in Section \ref{sec:construction}. Our reduction constructs two types of gadgets,
  \emph{variable gadgets} and \emph{edge gadgets},
  which simulate variables $x_v$ and edges $(u,v)$ in a threshold game, respectively.
Using these gadgets, we finish the reduction's correctness proof in Section \ref{sec:correctness}; 
  the analysis of these two gadgets is presented afterwards in Section \ref{sec:gadgets} and \ref{sec:edgegadgets}, respectively.

\subsection{Basic Facts}
\label{sec:basic}
Let $M$ be an HZ market with $n$ agents and goods.
As it will become clear later, the HZ market we construct in the reduction satisfies
   $\max_{j\in [n]} u_{i,j}=1$ for every agent $i\in [n]$.
\emph{Hence we assume this is the case in every HZ market discussed in the rest of this section.}
In all lemmas of this subsection we assume $(x,p)$ to be an $\eps$-approximate HZ equilibrium
  of $M$ (and skip it in their statements).
Recall that prices are normalized: $\min_i p_i =0$.

We give first an upper bound on the sum of prices: 

\begin{lemma}
	\label{fact:maximum-price}
	$\sum_{j\in [n]} p_j\le 2n$.
\end{lemma}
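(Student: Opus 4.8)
The bound is an immediate consequence of the budget (cost) constraint together with the market-clearing condition on goods. The plan is to sum condition 3 of Definition~\ref{def:hz-approx-equilibrium} over all agents and then swap the order of summation.

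First I would write, for each agent $i\in[n]$, the approximate budget constraint $\sum_{j\in[n]} p_j x_{i,j}\le 1+\eps$. Summing this over all $i\in[n]$ gives $\sum_{i\in[n]}\sum_{j\in[n]} p_j x_{i,j}\le n(1+\eps)$. Next I would exchange the two sums: the left-hand side equals $\sum_{j\in[n]} p_j\big(\sum_{i\in[n]} x_{i,j}\big)$, and by condition 1 (the total allocation of each good $j$ is exactly $1$) the inner sum is $1$ for every $j$. Hence $\sum_{j\in[n]} p_j\le n(1+\eps)$.

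Finally, since $\eps=1/n^5\le 1$ (indeed $n(1+\eps)=n+n^{-4}$), we conclude $\sum_{j\in[n]} p_j\le 2n$, which is the claimed bound. There is no real obstacle here; the only things being used are the nonnegativity of prices and allocations (so that the sums behave as expected) and the two hypotheses just cited, both of which hold by assumption on $(x,p)$.
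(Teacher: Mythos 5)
Your proof is correct and is essentially the paper's own argument: summing the budget constraints $\sum_j p_j x_{i,j}\le 1+\eps$ over all agents and using that every good is fully sold ($\sum_i x_{i,j}=1$) gives $\sum_j p_j\le n(1+\eps)<2n$ since $\eps=1/n^5$. The paper states this in one line; you have merely made the summation-exchange explicit.
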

\begin{proof}
Since $(x,p)$ is an $\eps$-approximate HZ equilibrium, every good must be sold out 
  and no agent can spend more than $1+\eps$. Thus, $\sum_{j\in [n]} p_j\le n(1+\eps)< 2n$
  using $\eps=1/n^5$. 
\end{proof}

Next, we consider an optimal solution $(\alpha_i^*,\mu_i^*)$ to the dual LP for agent $i$
  and prove the following: 

\begin{lemma}\label{simplelemma1}
$\mu_i^*\ge 0$ and $\alpha_i^*\le 1$ for every $i\in [n]$.
\end{lemma}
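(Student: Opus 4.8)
The plan is to analyze the dual LP for agent $i$ directly. Recall the dual LP asks to minimize $\alpha_i + \mu_i$ subject to $\alpha_i \geq 0$ and $\alpha_i p_j + \mu_i \geq u_{i,j}$ for all $j \in [n]$. I would argue about the optimal solution $(\alpha_i^*, \mu_i^*)$ by exhibiting explicit feasible points and using optimality, together with the structural facts already established (normalized prices, so $\min_j p_j = 0$, and $\max_j u_{i,j} = 1$).

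First I would prove $\alpha_i^* \le 1$. Consider the candidate dual solution $\alpha_i = 1$, $\mu_i = 1$. Since every price satisfies $p_j \le 2n$ — actually we only need $p_j \ge 0$ here — and every utility satisfies $u_{i,j} \le 1$, we get $\alpha_i p_j + \mu_i = p_j + 1 \ge 1 \ge u_{i,j}$, so $(1,1)$ is feasible with objective value $2$. Hence the optimal value $\alpha_i^* + \mu_i^* \le 2$. That alone does not pin down $\alpha_i^*$, so instead I would use the good $j_0$ with $p_{j_0} = 0$ (exists by normalization): the constraint for $j_0$ reads $\alpha_i^* \cdot 0 + \mu_i^* \ge u_{i,j_0}$, i.e. $\mu_i^* \ge u_{i,j_0} \ge 0$, which already gives the first claim $\mu_i^* \ge 0$. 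For $\alpha_i^* \le 1$: let $j_1$ be a good with $u_{i,j_1} = 1$ (exists since $\max_j u_{i,j} = 1$). The constraint for $j_1$ gives $\alpha_i^* p_{j_1} + \mu_i^* \ge 1$. Combined with $\mu_i^* \ge 0$ this is not yet enough unless $p_{j_1}$ is bounded below; instead I would compare against the feasible point $(\alpha_i, \mu_i) = (1, 1)$: optimality gives $\alpha_i^* + \mu_i^* \le 2$, and since $\mu_i^* \ge 0$, we get $\alpha_i^* \le 2$ — still a factor of $2$ off. To get exactly $\alpha_i^* \le 1$, I would instead use the feasibility of $(\alpha_i, \mu_i) = (1, 0)$: this requires $1 \cdot p_j + 0 \ge u_{i,j}$ for all $j$, which need not hold. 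So the honest route is: since $\val_p(i) = \alpha_i^* + \mu_i^*$ equals the primal optimum, and the primal has the all-mass-on-$j_0$ allocation $x_{i,j_0} = 1$ as a feasible point (cost $0 \le 1$, total allocation $1$), we get $\val_p(i) \ge u_{i,j_0}$; and the uniform-ish bundle shows $\val_p(i) \le 1 = \max_j u_{i,j}$ since every feasible allocation is a distribution over goods. Thus $\alpha_i^* + \mu_i^* \le 1$, and with $\mu_i^* \ge 0$ this forces $\alpha_i^* \le 1$.

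So the clean argument is: (i) $\mu_i^* \ge u_{i,j_0} \ge 0$ from the zero-price good $j_0$, giving the first inequality; (ii) $\alpha_i^* + \mu_i^* = \val_p(i) \le \max_j u_{i,j} = 1$ because any bundle satisfying condition 2 is a convex combination of the goods and hence has value at most $\max_j u_{i,j}$; (iii) subtracting, $\alpha_i^* \le 1 - \mu_i^* \le 1$. I would want to double-check step (ii): the primal LP maximizes $\sum_j u_{i,j} x_{i,j}$ subject to $\sum_j x_{i,j} = 1$, $x_{i,j} \ge 0$, and $\sum_j p_j x_{i,j} \le 1$; dropping the cost constraint can only increase the optimum, and the relaxed optimum is exactly $\max_j u_{i,j} = 1$. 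By LP strong duality $\alpha_i^* + \mu_i^* = \val_p(i) \le 1$.

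The main obstacle — really the only subtlety — is making sure strong duality applies (the primal is feasible, e.g. $x_{i,j_0} = 1$ works, and bounded, so there is no duality gap) and that the dual optimum is attained so that talking about "an optimal solution $(\alpha_i^*, \mu_i^*)$" is legitimate; the dual feasible region is nonempty (take $\alpha_i$ large, $\mu_i = 1$) and the objective is bounded below by $0$, so a minimizer exists. Everything else is a one-line inequality. I do not expect to need Lemma~\ref{fact:maximum-price} for this particular claim, though it is available.
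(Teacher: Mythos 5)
Your final argument is correct and is essentially the paper's own proof: the zero-price good gives $\mu_i^*\ge 0$, and $\alpha_i^*+\mu_i^*=\val_p(i)\le 1$ because all utilities lie in $[0,1]$, so $\alpha_i^*\le 1$. The exploratory detours with the candidate points $(1,1)$ and $(1,0)$ are harmless, and the extra care about strong duality is fine but not needed beyond what the paper already assumes.
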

\begin{proof}
Let $\ell$ be a good with $p_\ell=0$. From the dual LP constraints, we have 
  $ 0 \leq u_{i,\ell} \leq \alpha_i^* p_\ell + \mu^*_i = \mu_i^*.$ 
Moreover, since all utilities are in $[0,1]$ we have trivially that
$\alpha_i^*+\mu_i^*=\val_p(i)\le 1$. 
Therefore, we have $\alpha_i^* \leq 1$.
\end{proof}
\begin{lemma}\label{heheclaim1}
If $\emph{\val}_p(i)\le 0.9$ then $\alpha_i^*\ge 1/(20n)$
  and $\sum_{j\in [n]}p_j x_{i,j}\ge 1-20n\eps$.
\end{lemma}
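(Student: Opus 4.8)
The plan is to reason about the dual LP for agent $i$ and about where the agent's budget goes. Suppose $\val_p(i) \le 0.9$. First I would show $\alpha_i^* \ge 1/(20n)$ by contradiction: if $\alpha_i^* < 1/(20n)$, then for every good $j$ the dual constraint gives $u_{i,j} \le \alpha_i^* p_j + \mu_i^* \le \alpha_i^* p_j + \val_p(i) - \alpha_i^*$, but this is not immediately a contradiction, so instead I would argue via feasibility of a candidate primal bundle. Specifically, consider the bundle that puts all of agent $i$'s unit of allocation on her favorite good, i.e. a good $\ell^*$ with $u_{i,\ell^*} = \max_j u_{i,j} = 1$ (such a good exists by the standing assumption $\max_j u_{i,j} = 1$). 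This bundle has value $1$ but may cost $p_{\ell^*}$, which could exceed $1+\eps$. To get a feasible bundle, mix it with a zero-priced good $\ell$ (which exists since $p$ is normalized): for $\theta \in [0,1]$, put mass $\theta$ on $\ell^*$ and $1-\theta$ on $\ell$; the cost is $\theta p_{\ell^*}$, so taking $\theta = 1/p_{\ell^*}$ (if $p_{\ell^*} \ge 1$; otherwise $\theta = 1$) makes it feasible for the exact LP, with value $\theta \cdot 1 + (1-\theta) u_{i,\ell} \ge \theta = 1/p_{\ell^*}$. Hence $\val_p(i) \ge 1/p_{\ell^*} \ge 1/(2n)$ by Lemma~\ref{fact:maximum-price}. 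That alone isn't enough; I need to connect $\val_p(i)$ to $\alpha_i^*$.

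The cleaner route: by LP duality $\val_p(i) = \alpha_i^* + \mu_i^*$, and by Lemma~\ref{simplelemma1} $\mu_i^* \ge 0$, $\alpha_i^* \le 1$. The dual constraint at the favorite good $\ell^*$ reads $\alpha_i^* p_{\ell^*} + \mu_i^* \ge u_{i,\ell^*} = 1$. Subtract the objective identity $\alpha_i^* + \mu_i^* = \val_p(i) \le 0.9$: we get $\alpha_i^*(p_{\ell^*} - 1) \ge 1 - \val_p(i) \ge 0.1$. Since $p_{\ell^*} \le \sum_j p_j \le 2n$ (Lemma~\ref{fact:maximum-price}), we have $p_{\ell^*} - 1 \le 2n$, so $\alpha_i^* \ge 0.1/(2n) = 1/(20n)$. (Note $p_{\ell^*} > 1$ necessarily, consistent with $\alpha_i^* > 0$.) This gives the first claim.

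For the second claim, I would use complementary slackness in its approximate form. In an exact equilibrium, $\alpha_i^* > 0$ forces the budget constraint to be tight. For an $\eps$-approximate equilibrium, the value of $x_i$ is within $\eps$ of $\val_p(i)$, i.e. $\sum_j u_{i,j} x_{i,j} \ge \val_p(i) - \eps = \alpha_i^* + \mu_i^* - \eps$. On the other hand, using the dual constraints $u_{i,j} \le \alpha_i^* p_j + \mu_i^*$ and $\sum_j x_{i,j} = 1$ (condition 2), we get
\[
\sum_j u_{i,j} x_{i,j} \le \alpha_i^* \sum_j p_j x_{i,j} + \mu_i^* \sum_j x_{i,j} = \alpha_i^* \cdot \big(\textstyle\sum_j p_j x_{i,j}\big) + \mu_i^*.
\]
Combining, $\alpha_i^* \sum_j p_j x_{i,j} + \mu_i^* \ge \alpha_i^* + \mu_i^* - \eps$, so $\alpha_i^* \big(1 - \sum_j p_j x_{i,j}\big) \le \eps$. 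Dividing by $\alpha_i^* \ge 1/(20n)$ gives $1 - \sum_j p_j x_{i,j} \le 20 n \eps$, i.e. $\sum_j p_j x_{i,j} \ge 1 - 20 n \eps$, as claimed.

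The only genuinely delicate point is the first half — producing a good at which the dual constraint is "active enough". The key observation making it work is the standing normalization $\max_j u_{i,j} = 1$, which pins down $u_{i,\ell^*} = 1$ and turns the dual feasibility inequality at $\ell^*$ into a lower bound on $\alpha_i^*(p_{\ell^*}-1)$; everything else is a two-line manipulation using $\mu_i^* \ge 0$, $\val_p(i) \le 0.9$, and the price bound from Lemma~\ref{fact:maximum-price}. The second half is just approximate complementary slackness and is routine once the first bound is in hand.
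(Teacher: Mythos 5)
Your proof is correct and follows essentially the same route as the paper's: the bound $\alpha_i^*\ge 1/(20n)$ comes from the dual constraint at a utility-$1$ good together with $\alpha_i^*+\mu_i^*=\val_p(i)\le 0.9$ and the price bound of Lemma~\ref{fact:maximum-price}, and the spending bound is exactly the paper's approximate complementary-slackness computation (multiply the dual constraints by $x_{i,j}$, sum, use $\sum_j x_{i,j}=1$ and the $\eps$-optimality of $x_i$, then divide by $\alpha_i^*$). The opening digression about building a feasible primal bundle is unnecessary, as you yourself note, but harmless.
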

\begin{proof}
Let $u_{i,\ell}=1$. It follows from Lemma \ref{fact:maximum-price} that $p_\ell\le 2n$ and thus,
  $$1=u_{i,\ell}\le \alpha_i^* p_\ell+\mu_i^*\le 2n\alpha_i^*+\mu_i^*.$$
On the other hand, $\val_p(i)=\alpha^*_i+\mu^*_i\le 0.9$. The first part of the lemma follows from adding these two inequalities.

Next, multiplying both sides of the inequalities 
$\alpha_i^* p_j + \mu_i^* \ge u_{i,j}$ by $x_{i,j}$,
summing over all $j \in [n]$, and using $\sum_j x_{i,j}=1$, we have
$$
\sum_{j\in [n]} \alpha_i^* p_jx_{i,j}+\mu_i^*\ge  \sum_{j\in [n]} u_{i,j}x_{i,j}
\ge \alpha_i^*+\mu_i^*-\eps.
$$
The second part of the lemma then follows from $\alpha_i^*\ge 1/(20n)$. 
\end{proof}

Recall that all goods $j$ satisfy $u_{i,j} \leq \alpha_i^* p_j + \mu_i^*$.
We say a good $j$ is $\delta$-\emph{suboptimal} for agent $i$ if
$u_{i,j} + \delta \leq \alpha_i^* p_j + \mu_i^*$.
We show that agent $i$'s good-bundle, $x_i$, cannot contain significant quantities of suboptimal goods.

\begin{lemma}
	\label{fact:subopt-goods}
For every $i\in [n]$, the 
total allocation in $x_i$ to $\delta$-suboptimal goods is at most $2 \epsilon / \delta$.
\end{lemma}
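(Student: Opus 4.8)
The plan is to bound the total allocation to $\delta$-suboptimal goods by exploiting the near-tightness of the complementary slackness condition for an approximate equilibrium. First I would recall the key inequality from the proof of Lemma~\ref{heheclaim1}: multiplying each dual constraint $u_{i,j} \le \alpha_i^* p_j + \mu_i^*$ by $x_{i,j} \ge 0$ and summing over $j \in [n]$, while using $\sum_{j} x_{i,j} = 1$, gives
\begin{align*}
\sum_{j\in [n]} \big(\alpha_i^* p_j + \mu_i^*\big) x_{i,j} - \sum_{j\in[n]} u_{i,j} x_{i,j}
= \alpha_i^* \sum_{j\in[n]} p_j x_{i,j} + \mu_i^* - \sum_{j\in[n]} u_{i,j} x_{i,j}.
\end{align*}
The point is that this quantity is small: by condition~3 of Definition~\ref{def:hz-approx-equilibrium} we have $\sum_j p_j x_{i,j} \le 1+\eps$, and by Lemma~\ref{simplelemma1} $\alpha_i^* \le 1$, so $\alpha_i^* \sum_j p_j x_{i,j} + \mu_i^* \le \alpha_i^* + \mu_i^* + \alpha_i^*\eps = \val_p(i) + \alpha_i^*\eps \le \val_p(i) + \eps$; and by condition~4, $\sum_j u_{i,j} x_{i,j} \ge \val_p(i) - \eps$. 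Hence the displayed quantity is at most $2\eps$.

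Next I would observe that the left-hand side is a sum of nonnegative terms: each term $\big(\alpha_i^* p_j + \mu_i^* - u_{i,j}\big) x_{i,j}$ is nonnegative because $\alpha_i^* p_j + \mu_i^* \ge u_{i,j}$ (dual feasibility) and $x_{i,j} \ge 0$. For each $\delta$-suboptimal good $j$, we have by definition $\alpha_i^* p_j + \mu_i^* - u_{i,j} \ge \delta$, so the corresponding term is at least $\delta\, x_{i,j}$. Summing only over the $\delta$-suboptimal goods $j$ and using that the full sum is at most $2\eps$, I get
\begin{align*}
\delta \sum_{j\,:\,j\ \delta\text{-suboptimal}} x_{i,j} \;\le\; \sum_{j\in[n]} \big(\alpha_i^* p_j + \mu_i^* - u_{i,j}\big) x_{i,j} \;\le\; 2\eps,
\end{align*}
which yields $\sum_{j\ \delta\text{-suboptimal}} x_{i,j} \le 2\eps/\delta$, as claimed.

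There is essentially no serious obstacle here; the only thing to be careful about is the handling of the $\alpha_i^* \eps$ slack term, for which I need $\alpha_i^* \le 1$ from Lemma~\ref{simplelemma1}, and the sign bookkeeping that makes the per-good terms nonnegative so that restricting the sum to suboptimal goods only decreases it. One subtlety worth a sentence in the write-up is that $\val_p(i)$ cancels cleanly between the upper bound $\val_p(i)+\eps$ and the lower bound $\val_p(i)-\eps$, so the constant does not depend on whether $\val_p(i)$ is large or small — unlike in Lemma~\ref{heheclaim1}, we do not need any case analysis on the size of $\val_p(i)$ here.
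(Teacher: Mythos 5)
Your proof is correct and follows essentially the same route as the paper: both multiply the dual feasibility constraints by the allocations, sum them, and compare against the bounds $\sum_j p_j x_{i,j} \le 1+\eps$ and $\sum_j u_{i,j}x_{i,j} \ge \val_p(i)-\eps$, using $\alpha_i^* \le 1$ from Lemma~\ref{simplelemma1} to absorb the $\alpha_i^*\eps$ slack. The only difference is cosmetic bookkeeping (you restrict a sum of nonnegative terms to the suboptimal goods, while the paper folds the $W\delta$ term directly into the same inequality), so nothing further is needed.
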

\begin{proof}
Fix an agent $i\in [n]$. 
We have $u_{i,j} \leq \alpha_i^* p_j + \mu_i^*$ for all $j\in [n]$, and 
  $u_{i,j}+\delta \le \alpha_i^* p_j+ \mu_i^*$ for $\delta$-suboptimal goods.
Let $W$ be the total allocation in $x_i$ to $\delta$-suboptimal goods.
Then $$\sum_{j\in [n]} u_{i,j} x_{i,j} + W \delta \leq \alpha_i^* \sum_{j\in [n]} p_j x_{i,j} + \mu_i^* \sum_{j\in [n]} x_{i,j}.$$
Using the definition of $\epsilon$-approximate HZ equilibria, 
  the LHS is at least 
$$\val_p(i)-\eps+W\delta=\alpha_i^*+\mu_i^*+W\delta-\eps$$
and the RHS is at most $\alpha_i^*(1+\eps)+\mu_i^*$.
The lemma follows from $\alpha_i^*\le 1$ by Lemma \ref{simplelemma1}.
\end{proof}

We use some of the lemmas above to obtain following corollaries: 

\begin{corollary}\label{maincoro}
Let $J$ be the set of $j\in [n]$ that are \emph{not} $\delta$-suboptimal for $i$. 
If $\emph{\val}_p(i)\le 0.9$, then 
\begin{align*}
1-2\eps/\delta\le \sum_{j\in J} x_{i,j}\le 1\quad\text{and}\quad
1-20n\eps -\frac{4n\eps}{\delta}\le 
  \sum_{j\in J} p_jx_{i,j}\le 1+\eps.
\end{align*}
\end{corollary}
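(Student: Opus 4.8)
The plan is to derive Corollary \ref{maincoro} directly by combining Lemma \ref{fact:subopt-goods} (which bounds the mass on $\delta$-suboptimal goods) with Lemma \ref{heheclaim1} (which, under the hypothesis $\val_p(i)\le 0.9$, gives the lower bound $\sum_{j\in[n]} p_j x_{i,j}\ge 1-20n\eps$), together with the trivial facts that $\sum_{j\in[n]} x_{i,j}=1$ and $\sum_{j\in[n]} p_j x_{i,j}\le 1+\eps$ from conditions 2 and 3 of Definition \ref{def:hz-approx-equilibrium}. Since $J$ is the complement of the set of $\delta$-suboptimal goods, we have $\sum_{j\in J} x_{i,j} = 1 - W$ where $W$ is the total allocation to $\delta$-suboptimal goods, so the first pair of inequalities follows immediately: $W\le 2\eps/\delta$ gives $\sum_{j\in J} x_{i,j}\ge 1-2\eps/\delta$, and $W\ge 0$ gives the upper bound $\sum_{j\in J} x_{i,j}\le 1$.

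For the second pair of inequalities, the upper bound is immediate since $\sum_{j\in J} p_j x_{i,j}\le \sum_{j\in[n]} p_j x_{i,j}\le 1+\eps$. For the lower bound, I would write $\sum_{j\in J} p_j x_{i,j} = \sum_{j\in[n]} p_j x_{i,j} - \sum_{j\notin J} p_j x_{i,j}$. The first term is at least $1-20n\eps$ by Lemma \ref{heheclaim1}. For the subtracted term, each price satisfies $p_j\le 2n$ by Lemma \ref{fact:maximum-price}, and the total allocation $\sum_{j\notin J} x_{i,j} = W\le 2\eps/\delta$ by Lemma \ref{fact:subopt-goods}, so $\sum_{j\notin J} p_j x_{i,j}\le 2n\cdot (2\eps/\delta) = 4n\eps/\delta$. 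Combining, $\sum_{j\in J} p_j x_{i,j}\ge 1-20n\eps - 4n\eps/\delta$, as claimed.

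There is no real obstacle here; this is a bookkeeping corollary that stitches together the three preceding lemmas. The only points requiring a moment of care are (i) making sure the hypothesis $\val_p(i)\le 0.9$ is genuinely needed — it is used precisely to invoke Lemma \ref{heheclaim1} for the price-sum lower bound, whereas the allocation bounds and the price-sum upper bound hold unconditionally — and (ii) correctly accounting for the two separate error contributions ($20n\eps$ from the suboptimal price-gap slack via the dual, and $4n\eps/\delta$ from the mass that can escape onto suboptimal but possibly expensive goods). Everything else is a one-line substitution.

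\begin{proof}
All four inequalities concern the set $J$ of goods that are \emph{not} $\delta$-suboptimal for $i$; write $W=\sum_{j\notin J} x_{i,j}$ for the total allocation in $x_i$ to $\delta$-suboptimal goods, so that $\sum_{j\in J} x_{i,j}=1-W$ by condition 2 of Definition \ref{def:hz-approx-equilibrium}. By Lemma \ref{fact:subopt-goods} we have $0\le W\le 2\eps/\delta$, which gives $1-2\eps/\delta\le \sum_{j\in J} x_{i,j}\le 1$.

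For the price sums, the upper bound is immediate: $\sum_{j\in J} p_j x_{i,j}\le \sum_{j\in[n]} p_j x_{i,j}\le 1+\eps$ by condition 3. For the lower bound, since $\val_p(i)\le 0.9$, Lemma \ref{heheclaim1} gives $\sum_{j\in[n]} p_j x_{i,j}\ge 1-20n\eps$. Moreover, by Lemma \ref{fact:maximum-price} every price satisfies $p_j\le 2n$, so
$$\sum_{j\notin J} p_j x_{i,j}\le 2n\sum_{j\notin J} x_{i,j}=2nW\le \frac{4n\eps}{\delta}.$$
Therefore
$$\sum_{j\in J} p_j x_{i,j}=\sum_{j\in[n]} p_j x_{i,j}-\sum_{j\notin J} p_j x_{i,j}\ge (1-20n\eps)-\frac{4n\eps}{\delta},$$
which completes the proof.
\end{proof}
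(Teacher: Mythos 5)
Your proof is correct and follows exactly the route the paper intends: the allocation bounds come from Lemma \ref{fact:subopt-goods}, and the price-sum lower bound combines Lemma \ref{heheclaim1} with the $p_j\le 2n$ bound from Lemma \ref{fact:maximum-price} applied to the mass $W\le 2\eps/\delta$ on suboptimal goods. The paper's own proof is just a citation of these same three lemmas, so your write-up is simply a fuller version of the same argument.
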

\begin{proof}
The first part follows directly from Lemma \ref{fact:subopt-goods}.

The second part follows from Lemma \ref{fact:maximum-price}, Lemma \ref{heheclaim1}, 
  and Lemma \ref{fact:subopt-goods}.
\end{proof}

\begin{corollary}\label{secondcoro}
Let $J$ be the set of goods $j\in [n]$ with $u_{i,j}>0$.
If $\emph{\val}_p(i)\le 0.9$, then
$$
1-20n\eps-\frac{1}{n^2}\le \sum_{j\in J} p_jx_{i,j}\le 1+\eps.
$$
\end{corollary}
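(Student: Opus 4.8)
The plan is to deduce this from Corollary \ref{maincoro} by comparing the set $J = \{j : u_{i,j} > 0\}$ with the set $J'$ of goods that are \emph{not} $\delta$-suboptimal for $i$, for a suitably chosen $\delta$. The key observation is that goods with $u_{i,j} = 0$ may or may not be $\delta$-suboptimal, but goods with $u_{i,j} > 0$ that are \emph{not} $\delta$-suboptimal certainly have $u_{i,j} > 0$, so $J' \setminus \{j : u_{i,j}=0\} \subseteq J$. Conversely, a good in $J$ (i.e. $u_{i,j} > 0$) that happens to be $\delta$-suboptimal lies in $J \setminus J'$, and by Lemma \ref{fact:subopt-goods} the total allocation to such goods is at most $2\eps/\delta$, contributing at most $(2\eps/\delta)\cdot \max_j p_j \le 4n\eps/\delta$ to the price sum by Lemma \ref{fact:maximum-price}. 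On the other side, goods in $J' \setminus J$ have $u_{i,j} = 0$; these are \emph{not} $\delta$-suboptimal, so their contribution is already accounted for inside the bound of Corollary \ref{maincoro}, and dropping them only decreases $\sum_j p_j x_{i,j}$.

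Concretely, first I would fix $\delta$ so that the error terms in Corollary \ref{maincoro} collapse into the claimed $1/n^2$; since $\eps = 1/n^5$, taking $\delta$ to be a small polynomial in $1/n$ (e.g. $\delta = 1/n^2$, giving $4n\eps/\delta = 4/n^2 \cdot n \cdot n^{-5} \cdot n^2$, which one checks is $O(1/n^2)$ — more carefully, $2\eps/\delta = 2/n^3$ and $4n\eps/\delta = 4/n^2$, so a slightly larger $\delta$ such as $\delta = 1/n$ makes all these terms $O(1/n^3)$) works; I would just pick $\delta$ large enough that $2\eps/\delta + 4n\eps/\delta + 20n\eps \le 1/n^2$, which holds comfortably for $\delta = 1/n$ and $\eps = 1/n^5$. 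Then from Corollary \ref{maincoro} applied with this $\delta$ we get $\sum_{j \in J'} p_j x_{i,j} \ge 1 - 1/n^2$ (folding $20n\eps$ into the bound as well).

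Next I would pass from $J'$ to $J$. Write $J' = (J' \cap J) \cup (J' \setminus J)$. Since $J' \setminus J$ consists of goods with $u_{i,j}=0$ and $p_j x_{i,j} \ge 0$, we have $\sum_{j \in J' \cap J} p_j x_{i,j} \ge \sum_{j \in J'} p_j x_{i,j} - \sum_{j \in J' \setminus J} p_j x_{i,j}$; but this subtraction goes the wrong way, so instead I would argue directly: $\sum_{j \in J} p_j x_{i,j} \ge \sum_{j \in J' \cap J} p_j x_{i,j}$, and I need a lower bound on the right. Rather than splitting $J'$, the cleaner route is: $\sum_{j \in J} p_j x_{i,j} = \sum_{j \in J'} p_j x_{i,j} - \sum_{j \in J' \setminus J} p_j x_{i,j} + \sum_{j \in J \setminus J'} p_j x_{i,j} \ge \sum_{j \in J'} p_j x_{i,j} - \sum_{j \in J' \setminus J} p_j x_{i,j}$. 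Now $J' \setminus J$ are goods with $u_{i,j} = 0$ that are not $\delta$-suboptimal, meaning $\delta \le 0 < \alpha_i^* p_j + \mu_i^* - u_{i,j} + \delta$... this does not immediately bound $p_j x_{i,j}$, so I expect the actual argument instead replaces $J'$ by the set of goods that are not $\delta$-suboptimal \emph{and} have positive utility; re-examining, the intended reading is that $J$ in Corollary \ref{secondcoro} should be compared against $J'$ via $J \supseteq J' \setminus \{u_{i,j}=0\}$ together with Lemma \ref{fact:subopt-goods} controlling $J \setminus J'$, and the upper bound $\sum_{j\in J} p_j x_{i,j} \le \sum_{j\in[n]} p_j x_{i,j} \le 1+\eps$ is immediate. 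The main obstacle is pinning down the contribution of the zero-utility, non-suboptimal goods; I expect the resolution is that one simply uses $\sum_{j\in J} p_j x_{i,j} \ge \sum_{j\in[n]} p_j x_{i,j} - \sum_{j\notin J} p_j x_{i,j}$ and bounds $\sum_{j \notin J} p_j x_{i,j}$ (all goods with $u_{i,j}=0$) by noting every such good is $\mu_i^*$-suboptimal when $\mu_i^* \ge \delta$, handling the case $\mu_i^* < \delta$ separately via the fact that then the budget spent on zero-utility goods is small. I would carry out exactly this case split to finish, choosing $\delta$ at the end to make the two contributions sum to at most $1/n^2$.
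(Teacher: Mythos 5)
Your upper bound is fine, but the lower bound has a genuine gap, and it sits exactly where you yourself flag uncertainty. The statement reduces to showing that agent $i$ spends at most $1/n^2$ on zero-utility goods (given $\sum_j p_jx_{i,j}\ge 1-20n\eps$ from Lemma \ref{heheclaim1}), and neither of your two routes establishes this. In the route through Corollary \ref{maincoro} with a single parameter $\delta$, the problematic set $J'\setminus J$ (zero-utility goods that are \emph{not} $\delta$-suboptimal) can only be controlled via $\alpha_i^*p_j+\mu_i^*<\delta$ and $\alpha_i^*\ge 1/(20n)$, which gives $p_j\le 20n\delta$ and hence a spending bound of order $n\delta$; combined with the $4n\eps/\delta$ term from Lemma \ref{fact:subopt-goods} (units $2\eps/\delta$ times max price $2n$), the total error is $\Omega\bigl(n\eps/\delta+n\delta\bigr)=\Omega(n\sqrt{\eps})=\Omega(1/n^{3/2})$ for \emph{every} choice of $\delta$, which does not meet the required $1/n^2$ (your suggested $\delta=1/n$ makes the first term small but leaves $J'\setminus J$ goods with prices up to $\approx 20$ completely uncontrolled). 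In the fallback case split on $\mu_i^*\lessgtr\delta$, the case $\mu_i^*<\delta$ is precisely the hard case, and the assertion ``then the budget spent on zero-utility goods is small'' is unsupported: a zero-utility good of price $\Theta(1)$ is only $\Theta(1/n)$-suboptimal there, so a single global suboptimality level again caps you at $\Theta(n\sqrt{\eps})$. (The difficulty can be repaired, e.g.\ by applying Lemma \ref{fact:subopt-goods} separately at each good's own suboptimality level $\alpha_i^*p_j$, which bounds the spending on each zero-utility good by $2\eps/\alpha_i^*\le 40n\eps$ and the total by $40n^2\eps=40/n^3$ --- but that idea is not in your proposal.)

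The paper's own proof avoids this bookkeeping entirely with an exchange argument: if the first inequality failed, the agent would spend more than $1/n^2$ on zero-utility goods and hence hold at least $1/(2n^3)$ units of them (prices are at most $2n$ by Lemma \ref{fact:maximum-price}); replacing those units by a good of utility $1$ yields a feasible bundle (unit total unchanged, cost still at most $1$) whose value exceeds that of $x_i$ by about $1/(2n^3)>\eps$, contradicting condition 4 of Definition \ref{def:hz-approx-equilibrium}, i.e.\ $\sum_j u_{i,j}x_{i,j}\ge \val_p(i)-\eps$. If you want to salvage your dual-based framing, the cleanest version is direct: from $u_{i,j}\le \alpha_i^*p_j+\mu_i^*$ summed only over $j\in J$ against $\sum_{j\in J}u_{i,j}x_{i,j}\ge \alpha_i^*+\mu_i^*-\eps$ one gets $\sum_{j\in J}p_jx_{i,j}\ge 1-\eps/\alpha_i^*\ge 1-20n\eps$, which is even stronger than the stated bound; but as written, your plan does not close the lower bound.
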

\begin{proof}
The second inequality clearly holds since $(x,p)$ is an $\epsilon$-approximate equilibrium. Suppose that the first inequality does not hold.
Then by Lemma \ref{heheclaim1}, agent $i$ spends more than
$1/n^2$ on zero-utility goods, hence she buys at least an amount $1/2n^3$ of these, since all prices are at most $2n$.
Consider a new bundle for $i$ obtained by replacing $1/2n^3$ of
the zero-utility goods by a good with utility 1.
The cost of the new bundle is still less than 1,
i.e. it is a feasible bundle, and
the value exceeds that of the original bundle $x_i$ by
$1/2n^3 > \epsilon$, contradicting the fact that
$(x,p)$ is an $\epsilon$-approximate equilibrium.
\end{proof}

Finally we include a simple lemma about the optimal value of an agent:
\begin{lemma}\label{simplesimplelemma}
Let $i\in [n]$ and $\ell\in [n]$ with $u_{i,\ell}=1$. Then
  $\emph{\val}_p(i)\ge \min(1,1/p_\ell)$.
\end{lemma}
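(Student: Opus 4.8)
The goal is to show $\val_p(i) \ge \min(1, 1/p_\ell)$ when $u_{i,\ell} = 1$. The plan is to exhibit a feasible bundle for agent $i$ (one satisfying constraints 2 and 3 in Definition \ref{def:hz-equilibrium}) whose value is at least $\min(1, 1/p_\ell)$, since $\val_p(i)$ is defined as the optimum of the LP for agent $i$, hence at least the value of any feasible bundle.

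First I would split into two cases according to whether $p_\ell \le 1$ or $p_\ell > 1$. If $p_\ell \le 1$, then allocating the entire unit to good $\ell$ (i.e. $x_{i,\ell} = 1$ and $x_{i,j} = 0$ otherwise) is feasible: the total allocation is $1$, and the cost is $p_\ell \le 1$. Its value is $u_{i,\ell} = 1 = \min(1, 1/p_\ell)$, so $\val_p(i) \ge 1$ in this case. If $p_\ell > 1$, a pure allocation to $\ell$ would overspend, so instead I would mix good $\ell$ with a zero-priced good. Recall prices are normalized, so there is some good $m$ with $p_m = 0$. Put $x_{i,\ell} = 1/p_\ell$ and $x_{i,m} = 1 - 1/p_\ell$ (both in $[0,1]$ since $p_\ell > 1$), and $x_{i,j} = 0$ otherwise. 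The total allocation is $1$, and the cost is $p_\ell \cdot (1/p_\ell) + 0 = 1 \le 1$, so this bundle is feasible. Its value is $u_{i,\ell}/p_\ell + u_{i,m}(1 - 1/p_\ell) \ge 1/p_\ell$ since $u_{i,m} \ge 0$. Hence $\val_p(i) \ge 1/p_\ell = \min(1,1/p_\ell)$ in this case too.

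Combining the two cases gives $\val_p(i) \ge \min(1, 1/p_\ell)$, as claimed. There is no real obstacle here; the only thing to be slightly careful about is that the bundle used in the second case is genuinely feasible (nonnegative entries summing to $1$, cost $\le 1$) and that the existence of a zero-priced good is guaranteed by the normalization convention $\min_j p_j = 0$ in force throughout the section. One could alternatively argue via the dual LP — any dual feasible $(\alpha_i, \mu_i)$ satisfies $\alpha_i p_\ell + \mu_i \ge u_{i,\ell} = 1$ with $\alpha_i, \mu_i \ge 0$ (the latter by Lemma \ref{simplelemma1}), so $\alpha_i + \mu_i \ge 1/\max(p_\ell,1) = \min(1,1/p_\ell)$ after checking the minimization — but the primal construction is cleaner and more transparent, so that is the route I would write up.
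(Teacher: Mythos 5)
Your proof is correct and follows essentially the same route as the paper: exhibit a feasible bundle consisting of good $\ell$ (in quantity $\min(1,1/p_\ell)$) topped up with a zero-priced good guaranteed by the normalization $\min_j p_j=0$. The only difference is the cosmetic case split ($p_\ell\le 1$ vs.\ $p_\ell>1$ instead of the paper's $p_\ell=0$ vs.\ $p_\ell>0$), which changes nothing of substance.
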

\begin{proof}
If $p_\ell=0$, then agent $i$ can get value $1$ by buying one unit of good $\ell$ for free.

If $p_\ell>0$ then there is another good with zero price and thus, agent $i$ can 
  get value $\min(1,1/p_\ell)$ by buying $\min(1,1/p_\ell)$ unit of good $\ell$ and 
  $1-\min(1,1/p_\ell)$ unit of a zero price good.
\end{proof}

\def\indeg{\text{in-deg}}
\def\outdeg{\text{out-deg}}

\subsection{The Construction}\label{sec:construction}

Let $\kappa\in (0,1)$ be the positive constant in Theorem \ref{thm:ppad-threshold}.
Recall that our goal is to give a polynomial-time reduction from the problem of finding a $\kappa$-approximate equilibrium
  in a threshold game (with both in-degree and out-degree at most $3$) to that of finding an $\eps$-approximate HZ
  equilibrium in an HZ market with $\eps=1/n^5$.

Let $C$ be a sufficiently large universal constant, and $m=\lceil C/\kappa \rceil$.
Let $H=(V, E)$ be a threshold game with $|V|=N$.
(Note that $N$ is asymptotically large and should be considered as larger than any function of $m$.)
We write $\indeg(v)$ and $\outdeg(v)\le 3$ to denote the in-degree and out-degree of $v\in V$, respectively.
We construct an HZ market $M_H$ from $H$ in three steps as described below.
This is done by creating \emph{groups} of goods and \emph{groups} of agents, with 
  the guarantee that agents in the same group have the same utility for any good as each other, and that goods in the same group yield the same utility to any agent.
We say a group~$A_i$~of~agents have utility $u$ for a group $G_j$ of goods if all agents
  in $A_i$ share the same utility $u$ for all goods in $G_j$.
(Intuitively we create a group $A_i$ of agents to  simulate an agent with demand and budget $|A_i|$ instead of $1$, and
   a group $G_j$ of goods to simulate a good with a supply of $|G_j|$ units in the market.
A technical subtlety though is that in an approximate HZ equilibrium, goods in the same group
  may not share exactly the same price and agents in the same group may not have exactly the same allocation.)


\subsection*{Step 1: Creating Variable Gadgets}

We start with an empty market and create a \emph{variable gadget} 
  for each node $v\in V$ to simulate the variable $x_v$ in the threshold game $H$.
For each node $v \in V$, the variable gadget of $v$ consists of the following three groups of goods and one group of agents: 
\begin{flushleft}\begin{enumerate}
\item Create three groups of goods $G_{v,1},G_{v,2}$ and $G_{v,3}$: 
$G_{v,1}$ has $m^{10}+S_u$ goods, where
$$
S_{u}:= (24m^3 + 12m)\cdot  \outdeg(u) + (24m^3 + 15m)\cdot \indeg(u) -3m,$$
and $G_{v,2}$ and $G_{v,3}$ both have $2m^{10}$ goods.
Let $G_v$ denote the union of $G_{v,1}, G_{v,2}$ and $G_{v,3}$. 
\item Create a group $A_v$ of $5m^{10}$ agents. 
Each agent in  $A_v$ has the following utilities for $G_{v}$:
\begin{equation}\label{utilityutility}
\frac{1}{2m^2-1}
\ \text{for}\ G_{v, 1 },\quad  \frac{m^2+1}{4m^2 -2}\ \text{for}\ G_{v, 2 }, \quad
  1\ \text{for}\ G_{v,3},
\end{equation}
and utility $0$ for every other good in the market (including those created later).
\end{enumerate}\end{flushleft}

Looking ahead, we will prove (in Lemma \ref{lem:variable-price}) that in any $\eps$-approximate HZ equilibrium $(x,p)$ of the final
  HZ market $M_H$, $p(G_{v,1}),p(G_{v,2})$ and $p(G_{v,3})$ must satisfy \vspace{0.15cm}
$$
0\le p(G_{v,1})\lesssim \frac{1}{m^2},\quad p(G_{v,2})\approx \frac{1+p(G_{v,1})}{2}  
\quad\text{and}\quad p(G_{v,3})\approx 2-p(G_{v,1}), \vspace{0.15cm}
$$
 where $p(G_{v,\ell})$ denotes the minimum price of goods in $G_{v,\ell}$.
Indeed, $p(G_{v,1})$ will be used to simulate the variable $x_v$ in the threshold game $H$ and at the end,
  we set $x_v\approx m^2 p(G_{v,1})$ for each $v\in V$ to obtain a
  $\kappa$-approximate equilibrium of $H$.

\subsection*{Step 2: Creating Edge Gadgets}

Next we create an \emph{edge gadget} for each edge $e=(u,v)\in E$ to simulate the action of vertex $u$
  sending a contribution $x_u$ to the summation at vertex $v$ in the threshold game $H$ (see definition \ref{def:threshold}).
For each (directed) edge $e=(u,v)\in E$, the edge gadget of $e$ consists of the following multiple
  groups of goods and agents (for convenience, we only 
  list goods with positive utilities for each group of agents; every other good has utility $0$):
\begin{flushleft}\begin{enumerate}
\item Create a group $G_{e}$ of $32m^5$ goods.
\item Create a group $A_{e,* }$ of $64m^5$ agents. They 
  have utility $1$ for $G_e$. %
\item Create a group of $48m^3$ agents $A_{e,1}$.
 They have utility $1$ for $G_{u, 3}$ and $1/2$ for $G_{v, 1}$. 
\item Create $m$ groups $A_{e,2,\ell}$, $\ell\in [m]$, each of $6$ agents. 
They have $1$ for  $G_{e}$ and $\ell/(2m^3)$ for $G_{v,1}$. 
\item Create $m$ groups $A_{e,3,\ell}$, $\ell\in [m]$, each of $8$ agents. 
They have $1$ for $G_{e}$ and $\ell/(2m^3)$ for $G_{u,1}$. 
\item Create $m$ groups $A_{e,4,\ell}$, $\ell\in [2m]$, each of $18$ agents. They have $1$ for $G_{e}$, $\ell/(2m^3)$ for $G_{v,1}$, $$\frac{1}{4}+ \frac{1}{4m^2} + \frac{1}{m^3}$$ for goods in $G_{u,2}$. 
\end{enumerate}\end{flushleft}
For convenience we write $A_e$ to denote the union of groups $A_{e,1}, A_{e,2,\ell}, A_{e,3,\ell}$ and 
  $A_{e,4,\ell}$, for all $\ell$. 

\subsection*{Step 3: Adding Dummy Goods}

So far we have created
\begin{equation}\label{eq:n}
5m^{10}\cdot |V|+(64m^5+48m^3+50m)\cdot |E|
\end{equation}
many agents and 
$$
\sum_{u\in V} (5 m^{10}+S_u)+32m^5\cdot |E|= (5m^{10}-3m)\cdot |V|+(32m^5+48m^3+27m)\cdot |E| 
$$
many goods.
To finish the construction (since the number of goods needs to match that
  of agents), we create a group of $3m|V|+(32m^5+23m)|E|$ dummy goods,
  which have utility $0$ to every agent in the market.
This finishes the construction of $M_H$ with $n$ agents and goods, where $n$ is given in (\ref{eq:n}).
It is clear that $M_H$ can be built in polynomial time.

Before moving forward,
  we record a list of simple properties about $M_H$:

\begin{fact}\label{basicfact}
The HZ market $M_H$ satisfies the following properties:
\begin{flushleft}\begin{enumerate}
\item Every agent in the market has maximum utility $1$;
\item For each node $v\in V$, 
  the number of agents outside of $A_v$ that have a positive utility on
  at least one group of goods in $G_v$ is at most
  $288m^3+258m=O(m^3)$;
\item For each edge $e\in E$,
  the number of agents outside of $A_{e,^*}$ that have a positive utility on $G_e$
  is $50m=O(m)$.
\end{enumerate}\end{flushleft}
\end{fact}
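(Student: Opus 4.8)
The proof is a direct verification from the construction of $M_H$ in Section \ref{sec:construction}; there is no real idea involved, only bookkeeping, so the plan is to handle the three parts one at a time while keeping a careful running tally of which groups of agents are relevant in each case.

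For part 1, the plan is to walk through every group of agents introduced in Steps 1--3 and exhibit, for each, a good of utility exactly $1$: agents in $A_v$ have utility $1$ for $G_{v,3}$; agents in $A_{e,*}$, $A_{e,2,\ell}$, $A_{e,3,\ell}$ and $A_{e,4,\ell}$ have utility $1$ for $G_e$; and agents in $A_{e,1}$ have utility $1$ for $G_{u,3}$. One also checks that no utility listed in the construction exceeds $1$: the values $\frac{1}{2m^2-1}$, $\frac{m^2+1}{4m^2-2}$, the quantities $\ell/(2m^3)$ with $\ell\le 2m$, and $\frac14+\frac{1}{4m^2}+\frac{1}{m^3}$ are all $\le 1$ once $m$ is large (which holds since $m=\lceil C/\kappa\rceil$), and dummy goods give utility $0$. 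This settles part 1.

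For part 2, fix $v\in V$. Every variable-gadget agent $A_w$ with $w\ne v$ and every edge-gadget agent for an edge not incident to $v$ has utility $0$ on all of $G_v=G_{v,1}\cup G_{v,2}\cup G_{v,3}$, so only edge gadgets of edges incident to $v$ can contribute, and I would split these into two cases. If $e=(u,v)$ is an incoming edge of $v$, the subgroups with positive utility on $G_v$ are exactly those touching $G_{v,1}$, namely $A_{e,1}$ ($48m^3$ agents), the $m$ groups $A_{e,2,\ell}$ ($6m$ agents in total) and the $2m$ groups $A_{e,4,\ell}$ ($36m$ agents in total), for a subtotal of $48m^3+42m$; here $A_{e,*}$ and the groups $A_{e,3,\ell}$ only touch $G_e$ and $G_{u,1}$, not $G_v$. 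If $e=(v,w)$ is an outgoing edge of $v$ (so $v$ plays the role of the tail $u$ in the notation $e=(u,v')$), the subgroups touching $G_v$ are $A_{e,1}$ via $G_{v,3}$ ($48m^3$ agents), the $m$ groups $A_{e,3,\ell}$ via $G_{v,1}$ ($8m$ agents in total) and the $2m$ groups $A_{e,4,\ell}$ via $G_{v,2}$ ($36m$ agents in total), for a subtotal of $48m^3+44m$. Since $\indeg(v)\le 3$ and $\outdeg(v)\le 3$ by Theorem \ref{thm:ppad-threshold}, summing gives at most $3(48m^3+42m)+3(48m^3+44m)=288m^3+258m$, as claimed. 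For part 3, fix $e\in E$: the only goods named $G_e$ are those created in the gadget of $e$, and among the agents of that gadget, $A_{e,1}$ has positive utility only on $G_{u,3}$ and $G_{v,1}$, while $A_{e,*}$ (excluded), $A_{e,2,\ell}$, $A_{e,3,\ell}$ ($\ell\in[m]$) and $A_{e,4,\ell}$ ($\ell\in[2m]$) have utility $1$ on $G_e$; every agent outside this gadget has utility $0$ on $G_e$. Adding the sizes of the non-excluded groups yields $6m+8m+36m=50m$.

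The only ``obstacle'' here is organizational rather than mathematical: one must enumerate every group of agents from Steps 1--3 without omission, distinguish in part 2 the role of $v$ as head versus tail of an incident edge (the relevant subgroups, and the components of $G_v$ they touch, are different in the two cases), and avoid double-counting or dropping the small groups $A_{e,2,\ell}$, $A_{e,3,\ell}$, $A_{e,4,\ell}$. Once the enumeration is written out the arithmetic is immediate and matches the stated bounds exactly.
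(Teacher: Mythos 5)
Your verification is correct and is exactly the intended argument: the paper states this Fact without proof as a direct consequence of the construction, and your enumeration (including resolving the paper's typo so that $A_{e,4,\ell}$ ranges over $\ell\in[2m]$, giving $36m$ agents, consistent with the $50m$ total and with Lemmas \ref{lem:gadget3} and \ref{lem:gadget3-com}) reproduces the stated counts $288m^3+258m$ and $50m$ exactly.
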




\subsection{Proof of Correctness}\label{sec:correctness}

Let $\eps=1/n^5$.
We prove three lemmas about variable gadgets in $M_H$ in Section \ref{sec:gadgets}.

We use $p(G_i)$ to denote the minimum price of goods in a group $G_i$.
The first lemma shows that  $p(G_{v,1})$ is between (roughly)
  $0$ and $1/m^2$ and it determines $p(G_{v,2})$ and $p(G_{v,3})$ (approximately). 
\begin{lemma}
	\label{lem:variable-price}
Let $(x,p)$ be an $\eps$-approximate HZ equilibrium of $M_H$. Then $p(G_{v,1})$ satisfies 
$$
0\le p(G_{v,1})\le \frac{1}{m^2}+O\left(\frac{1}{m^6}\right) 
$$
for every $v\in V$.
Moreover, $p(G_{v,2})$ and $p(G_{v,3})$ satisfy\vspace{0.06cm} 
\begin{equation}\label{heheeqeq}
p(G_{v,2}) = \frac{1+p(G_{v,1})}{2}\pm O\left(\frac{1}{m^7}\right)\quad\text{and}\quad 
p(G_{v,3}) = 2 - p(G_{v,1}) \pm O\left(\frac{1}{m^7}\right).
\end{equation}
\end{lemma}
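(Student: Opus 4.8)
The plan is to analyze the variable gadget $M_v$ in isolation, using the fact (from Fact \ref{basicfact}) that only $O(m^3)$ agents outside $A_v$ touch $G_v$, so the "leakage" of goods in $G_v$ to outside agents and of outside goods to $A_v$ agents is at most an $O(m^3)$ quantity, which is negligible compared with the $\Theta(m^{10})$ supplies. First I would establish that $\val_p(i)\le 0.9$ for every $i\in A_v$: since an agent in $A_v$ has utility $1$ only for $G_{v,3}$ (which has $2m^{10}$ goods) and there are $5m^{10}$ agents in $A_v$ all with nonzero utility only on $G_v$, a counting argument forces $p(G_{v,3})$ to be bounded well away from $0$, which via the dual LP (Lemma \ref{simplelemma1}, Lemma \ref{heheclaim1}) gives the needed bound on the value; alternatively I can argue directly that $p(G_{v,3})$ cannot be small because the $5m^{10}$ agents of $A_v$ plus the at most $O(m^3)$ outside agents with positive utility on $G_v$ must consume all $5m^{10}+S_u$ goods of $G_{v,1}$, all $2m^{10}$ of $G_{v,2}$, and all $2m^{10}$ of $G_{v,3}$, i.e. roughly $9m^{10}$ goods, so agents in $A_v$ on average buy almost $9/5 > 1$ unit — wait, they buy exactly $1$ unit each, so roughly $4m^{10}$ of the $G_v$-goods must go to outside agents, which is impossible since there are only $O(m^3)$ of them. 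Hence the supplies must be balanced differently; the right reading is that $A_v$ agents (total demand $5m^{10}$) essentially must absorb all of $G_{v,1},G_{v,2},G_{v,3}$, forcing prices to arrange so that the cost-$1$ budget constraint binds, which is exactly where $p(G_{v,3})\approx 2-p(G_{v,1})$ and $p(G_{v,2})\approx (1+p(G_{v,1}))/2$ come from.

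Concretely, the key computation is the agent optimization for $i\in A_v$. Using Corollary \ref{secondcoro} (only goods in $G_v$ have positive utility for $i$), the agent spends $1\pm o(1)$ dollars entirely within $G_v$, while consuming $1$ unit within $G_v$ (up to $O(\eps/\delta)$ leakage from Lemma \ref{fact:subopt-goods} with a suitably chosen $\delta=\Theta(1/m^{c})$). The utilities in \eqref{utilityutility} are calibrated so that goods in $G_{v,1}$, $G_{v,2}$, $G_{v,3}$ are all simultaneously optimal precisely when $\alpha_i^* p(G_{v,1})+\mu_i^* = \tfrac{1}{2m^2-1}$, $\alpha_i^* p(G_{v,2})+\mu_i^*=\tfrac{m^2+1}{4m^2-2}$, $\alpha_i^* p(G_{v,3})+\mu_i^*=1$. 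Subtracting pairs of these equalities eliminates $\mu_i^*$ and gives $\alpha_i^*\bigl(p(G_{v,3})-p(G_{v,1})\bigr) = 1-\tfrac{1}{2m^2-1}$ and $\alpha_i^*\bigl(p(G_{v,2})-p(G_{v,1})\bigr)=\tfrac{m^2+1}{4m^2-2}-\tfrac{1}{2m^2-1}=\tfrac12$; dividing, $\tfrac{p(G_{v,3})-p(G_{v,1})}{p(G_{v,2})-p(G_{v,1})} = 2-\tfrac{2}{2m^2-1}$, which together with the total-consumption and total-cost constraints pins down $p(G_{v,2})$ and $p(G_{v,3})$ as affine functions of $p(G_{v,1})$ up to the stated $O(1/m^7)$ error. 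I would carry this out by first showing all three groups must be (near-)optimal — a group that is $\delta$-suboptimal would by Lemma \ref{fact:subopt-goods} be bought in total quantity $\le 2\eps/\delta$, contradicting the supply-absorption count above — and then turning the three near-equalities into the claimed approximate identities, absorbing $\eps$, $\delta$, and the $O(m^3)$ leakage into the error terms.

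For the bound $0\le p(G_{v,1})\le 1/m^2+O(1/m^6)$: the lower bound is just nonnegativity of prices. For the upper bound, I would use the cost constraint for $A_v$ agents together with the supply counts. Since the $5m^{10}$ agents of $A_v$ must (essentially) buy all $9m^{10}+S_u$ goods of $G_v$ minus the $O(m^3)$ that leak out — hmm, that is again $\approx 9m^{10}$, exceeding $5m^{10}$. So in fact the genuine constraint is that the outside agents with positive utility on $G_v$ (the $A_{e,1}$, $A_{e,2,\ell}$, $A_{e,4,\ell}$ agents, total $O(m^3)$) must absorb the remaining $\approx 4m^{10}+S_u$ goods of $G_v$ — which is again impossible. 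I must be misreading the group sizes; the resolution is that $A_v$ has $5m^{10}$ agents and $G_v=G_{v,1}\cup G_{v,2}\cup G_{v,3}$ has $(m^{10}+S_u)+2m^{10}+2m^{10}=5m^{10}+S_u$ goods, so the counts do match up to the $S_u=O(m^3)$ correction, and the $O(m^3)$ outside agents exactly soak up the $S_u$ extra units of $G_{v,1}$. So the clean picture is: $A_v$ buys almost exactly one unit each, almost entirely within $G_v$, and in aggregate buys $\approx 5m^{10}$ of the $\approx 5m^{10}$ goods in $G_v$. The binding budget constraint then forces $\sum$ of prices over $G_v$, weighted by $A_v$'s allocation, to equal $5m^{10}\cdot(1\pm o(1))$; combined with the three near-optimality equalities this yields the affine relations, and feeding those back shows $p(G_{v,1})$ cannot exceed roughly $1/m^2$ (else $p(G_{v,3})=2-p(G_{v,1})$ would be too small to be consistent with $\val_p(i)\le 0.9$ via Lemma \ref{simplesimplelemma} applied to $G_{v,3}$, or with the cost of the bundle exceeding $1+\eps$).

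The main obstacle I expect is the bookkeeping of error terms: carefully tracking how the $O(m^3)$ leakage of goods/agents, the $O(\eps/\delta)$ suboptimal-allocation bound, and the slack $\eps$ in the cost constraint propagate through the algebra of eliminating $\mu_i^*$ and solving for $p(G_{v,2}),p(G_{v,3})$, so that everything collapses into the claimed $O(1/m^6)$ and $O(1/m^7)$ errors (this requires $\delta$ to be polynomially small but not too small, e.g. $\delta\sim 1/m^{8}$, and uses $\eps=1/n^5$ with $n\gg m$). The conceptual content — that the three utility values in \eqref{utilityutility} are rigged so the unique consistent price configuration is the affine family parameterized by $p(G_{v,1})$ — is straightforward linear algebra; the work is in making the approximate version robust.
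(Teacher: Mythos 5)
Your route to the two relations in (\ref{heheeqeq}) is essentially the paper's: show that $A_v$ must absorb almost all of $G_v$ (the supplies $m^{10}+S_v,2m^{10},2m^{10}$ against the $5m^{10}$ agents, with only $O(m^3)$ leakage), conclude that all three dual constraints of an $A_v$-agent are tight up to $O(\eps)$ (via Lemma \ref{fact:subopt-goods}), and combine with the aggregate budget identity. Two issues, one small and one serious. The small one is arithmetic: writing $q_\ell=p(G_{v,\ell})$ and $u_\ell$ for the utilities in (\ref{utilityutility}), you claim $u_2-u_1=\tfrac12$, but in fact $u_2-u_1=\tfrac{m^2+1}{4m^2-2}-\tfrac{1}{2m^2-1}=\tfrac{m^2-1}{2(2m^2-1)}\approx\tfrac14$, and $(u_3-u_1)/(u_2-u_1)=4$ exactly, since the utilities are rigged so that $u_2=(3u_1+u_3)/4$. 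With your ratio $\approx 2$, combining with the budget identity $q_1+2q_2+2q_3\approx 5$ gives $q_3\approx 5/3$ rather than $q_3\approx 2-q_1$, so the lemma's relations would not come out; with the corrected ratio the same computation does give $q_2=(3q_1+q_3)/4$ and $q_1+q_3=2$ up to the stated errors. (Also, your ``leakage is $O(m^3)$'' step needs a short dual case analysis, not just Fact \ref{basicfact}: if some price in $G_v$ is near zero one argues $\mu^*\ge u_1/2$ so $A_v$ buys essentially nothing outside $G_v$; otherwise all prices in $G_v$ are $\Omega(1/m^2)$ and zero-utility agents buy a negligible amount.)

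The serious gap is the upper bound $p(G_{v,1})\le 1/m^2+O(1/m^6)$, where neither of your proposed mechanisms works. Lemma \ref{simplesimplelemma} applied to $G_{v,3}$ only gives $\val_p(i)\ge 1/q_3\approx 1/2$, which is nowhere near contradicting $\val_p(i)\le 0.9$ unless $q_1$ is a large constant; and the budget is never violated, since once $q_3\approx 2-q_1$ and $q_2\approx(1+q_1)/2$ the aggregate cost $q_1\cdot m^{10}+q_2\cdot 2m^{10}+q_3\cdot 2m^{10}=5m^{10}$ identically, for every $q_1$. Even a corrected primal value comparison cannot reach the claimed bound: the achieved per-agent value is $\approx(1+u_1)/2$, and the best alternative bundles (mixing $G_{v,3}$ with $G_{v,1}$, with $G_{v,2}$, or with a zero-priced good) have value $\max\bigl((1+u_1)/2,\,1/(2-q_1)\bigr)$, so no $\eps$-profitable deviation exists until $q_1>2u_1\approx 1/m^2+1/(2m^4)$; in the window $q_1\in(1/m^2,2u_1)$ the obstruction is not a value violation at all but dual infeasibility, i.e.\ failure of $G_{v,1}$ to clear. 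The missing ingredient is $\mu^*\ge 0$ (Lemma \ref{simplelemma1}, available because of the zero-priced dummy goods): eliminating $\alpha^*$ from the near-tight constraints for $G_{v,1}$ and $G_{v,3}$ gives $\mu^*=\bigl(u_1q_3-u_3q_1\pm O(\eps)\bigr)/(q_3-q_1)$, and $\mu^*\ge 0$ together with $q_3=2-q_1\pm O(1/m^7)$ forces $q_1\le 2u_1/(1+u_1)+O(1/m^7)=1/m^2+O(1/m^7)$ --- the value $u_1=\tfrac{1}{2m^2-1}$ is calibrated precisely so that $2u_1/(1+u_1)=1/m^2$. Without invoking $\mu^*\ge 0$ (or, equivalently, arguing that for larger $q_1$ the goods $G_{v,1}$ become suboptimal for $A_v$ and hence cannot be sold), your sketch does not establish the upper bound.
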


We prove Lemma \ref{lem:variable-price} in Section \ref{sec:gadgets}.

We next show that the variable gadget created for each node $v\in V$ is 
  sensitive to demand from agents outside of $A_v$.
To state the lemma (and the next one), we introduce the following notation:
Let $G^*$ be a subset of goods (which could be a group or the union of multiple groups
  of goods) and $A^*$ be a subset of agents in $M_H$
  (which could be a group or the union of multiple groups).
We let
$$
x^+(G^*,A^*)=\sum_{\substack{i\in A^*\\ j\in G^*:\\ 
u_{i,j}>0}} x_{i,j},
$$ 
i.e., the total allocation of $G^*$ to $A^*$ but limited to those goods 
  in $G^*$ with positive utilities to each agent in $A^*$ only.
We also write $\overline{A}_v$ to denote all agents in $M_H$ outside of $A_v$.

  The second lemma (which we also prove in Section \ref{sec:gadgets}) states that if the total allocation of 
  $G_v$ to agents outside of $A_v$ with positive utilities is
  either more than $S_v+1$ or less than $S_v-1$, 
  then $p(G_{v,1})$ must be at one of the two extreme cases accordingly, i.e., either close to $0$
  or close to $1/m^2$.

\begin{lemma}
	\label{lem:demand}
Let $(x,p)$ be an $\eps$-approximate HZ equilibrium of $M_H$. Then for every $v\in V$:
	\begin{flushleft}\begin{enumerate}
		\item If $x^+(G_v,\overline{A}_v)\ge S_v+1$, then we have
		$$
		p(G_{v,1})=\frac{1}{m^2}\pm O\left(\frac{1}{m^9}\right);
		$$
		\item If $x^+(G_v,\overline{A}_v)\le S_v-1$, then we have  
$p(G_{v,1})\le  O(1/n^2)$.
	\end{enumerate}\end{flushleft}
\end{lemma}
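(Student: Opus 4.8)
\textbf{Proof plan for Lemma \ref{lem:demand}.}

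The plan is to use the structure of the variable gadget $A_v$ directly: these $5m^{10}$ agents have positive utility only for $G_v$ (goods $G_{v,1}$ with utility $\frac{1}{2m^2-1}$, $G_{v,2}$ with utility $\frac{m^2+1}{4m^2-2}$, $G_{v,3}$ with utility $1$), so in equilibrium they must buy their one unit each entirely from $G_v$ (or from zero-price goods, but we will rule that out). Since $|G_v| = m^{10}+S_v + 2m^{10}+2m^{10} = 5m^{10}+S_v$, the total supply of $G_v$ is $5m^{10}+S_v$ while the demand from $A_v$ alone is $5m^{10}$; so $A_v$ buys essentially all of $G_v$ except for the $S_v$ units (approximately) that go to outside agents. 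Writing $D := x^+(G_v,\overline A_v)$ for the outside demand, a market-clearing/counting argument shows the total amount of $G_v$ bought by $A_v$ is $|G_v| - D \pm O(n\eps)$ (the error absorbing the slack from goods possibly not fully sold to $A_v$, and from outside agents buying small quantities of $G_v$-goods on which they have zero utility, which by Corollary \ref{secondcoro} is negligible). The engine of the proof is then: the agents in $A_v$, facing prices $p(G_{v,1}) \le \tfrac{1}{m^2}+O(m^{-6})$ and the induced prices in \eqref{heheeqeq} for $G_{v,2},G_{v,3}$, solve their LP; I will show their optimal bundle structure forces $p(G_{v,1})$ to sit at one extreme when the quantity of $G_{v,1}$ they must absorb is forced to be very small or very large.

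Concretely, for the first part: if $D \ge S_v+1$, then $A_v$ can buy at most $|G_v| - D + O(n\eps) = 5m^{10} - 1 + O(n\eps)$ units of $G_v$, so there is a deficit — $A_v$'s $5m^{10}$ agents cannot each get a full unit purely from $G_v$ unless $G_{v,1}$ is cheap enough that they are willing to overspend, or they must buy from zero-price goods. First I would rule out the zero-price option: by Lemma \ref{simplesimplelemma} each $A_v$ agent has $\val_p(i) \ge \min(1,1/p(G_{v,3})) \ge 1 - O(1/m^2)$ since $p(G_{v,3}) \approx 2-p(G_{v,1}) \le 2$; wait — that gives $\val_p \ge 1/2$, not close to $1$. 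Instead I use that good $G_{v,3}$ has utility $1$ and, being priced $\approx 2-p(G_{v,1})$, combining it with a free good gives value $\approx 1/(2-p(G_{v,1})) + \ldots$; more carefully, the cheapest way to get value close to the optimum uses $G_{v,1}$ (value/price ratio $\approx \frac{1}{2m^2}$ over price $\le \frac{1}{m^2}$, i.e. ratio $\ge \frac 12$, comparable to $G_{v,3}$'s ratio $\approx 1/2$). So $G_{v,1}$ is (nearly) an optimal good for $A_v$ exactly in the right price regime, and by Lemma \ref{fact:subopt-goods} the agents will buy $G_{v,1}$ up to its full supply unless it is $\delta$-suboptimal. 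The deficit created by $D \ge S_v+1$ forces $A_v$ to absorb more $G_{v,1}$ (relative to the balanced case $D = S_v$), which it will only do if $G_{v,1}$ is not suboptimal, i.e. if its price is pushed down — but a careful accounting of the budget constraint shows that for $A_v$ to absorb one extra unit beyond the $m^{10}$-unit nominal supply split, feasibility of their LP forces $p(G_{v,1})$ up to essentially its maximum $\frac{1}{m^2} \pm O(m^{-9})$ (the $G_{v,1}$ goods are the only "cheap" slack). Symmetrically, for the second part, $D \le S_v - 1$ means $A_v$ must absorb \emph{more} than its nominal share of $G_v$, in particular it is forced to buy essentially all of $G_{v,1}$ plus extra $G_{v,2},G_{v,3}$; for this to be budget-feasible with each agent spending $\le 1+\eps$, the price $p(G_{v,1})$ must be driven down to $O(1/n^2)$ — here I would use Corollary \ref{maincoro} applied to $A_v$ agents (whose $\val_p$ I first check is $\le 0.9$, or handle the $\val_p > 0.9$ case separately via Lemma \ref{simplesimplelemma} pinning $p(G_{v,3})$), together with the precise coefficients in \eqref{utilityutility} which are rigged so that the "balanced" price is interior.

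The quantitative heart, and the step I expect to be the main obstacle, is turning the coarse supply-demand imbalance "$D = S_v \pm 1$" into the sharp price bounds "$p(G_{v,1}) = \frac{1}{m^2} \pm O(m^{-9})$" and "$p(G_{v,1}) = O(1/n^2)$". This requires controlling, to within $O(1/m^9)$ (resp. $O(1/n^2)$) accuracy, how much of each of $G_{v,1},G_{v,2},G_{v,3}$ the block $A_v$ buys as a function of the three prices, i.e. essentially inverting the $A_v$ agents' LP solution near the degenerate vertex where two of the three goods are simultaneously optimal. The coefficients in \eqref{utilityutility} are chosen precisely so that at the "balanced" prices $p(G_{v,2}) = \frac{1+p(G_{v,1})}{2}$, $p(G_{v,3}) = 2-p(G_{v,1})$ all three goods have equal value-to-price ratio for $A_v$ agents (making the LP degenerate and the split of $A_v$'s demand across $G_{v,1},G_{v,2},G_{v,3}$ flexible), so a unit of imbalance in $D$ is absorbed by shifting that split; I would make this rigorous by writing the dual-feasibility conditions $\alpha^* p_j + \mu^* \ge u_{i,j}$ for $j$ ranging over $G_{v,1},G_{v,2},G_{v,3}$ and a zero-price good, solving for the regime of $(\alpha^*,\mu^*)$, and then using complementary slackness together with the approximate versions (Lemma \ref{fact:subopt-goods}, Corollary \ref{maincoro}) to pin down which goods carry positive allocation. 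The bookkeeping of $O(n\eps)$, $O(m^{-6})$, $O(m^{-7})$ error terms through this inversion — and checking they do not swamp the $O(m^{-9})$ target — is the delicate part; I would organize it by first proving the exact ($\eps = 0$) statement to get the clean picture, then re-running with error terms tracked, invoking Fact \ref{basicfact}(2) to bound the number of outside agents touching $G_v$ (hence the total "leakage" allocation) by $O(m^3)$, which comfortably fits inside the budget left by $\eps = 1/n^5$ and $N$ being polynomially large.
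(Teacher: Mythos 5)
Your supply/demand counting ($|G_v|=5m^{10}+S_v$ goods versus $5m^{10}$ unit-demand agents in $A_v$, with outside leakage controlled via Fact \ref{basicfact}) and the directions of the two conclusions are right, but the mechanisms you propose for pinning the price are not the ones that work, and in one case you argue against the step that actually drives the proof. For part 1 ($x^+(G_v,\overline{A}_v)\ge S_v+1$): you set out to ``rule out the zero-price option,'' i.e.\ to show $A_v$ agents do not buy goods outside $G_v$. This is backwards. The shortage forces some agent of $A_v$ to receive at least $1/(5m^{10})$ units of goods outside $G_v$, which have zero utility to her and are therefore $\mu^*$-suboptimal; Lemma \ref{fact:subopt-goods} then gives $\mu^*\le O(m^{10}\eps)$, i.e.\ $\mu^*\approx 0$. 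Combining this with the near-tight dual constraints (\ref{blablabla}) and $q_1+q_3=2\pm O(1/m^7)$ from Lemma \ref{lem:variable-price} yields $\alpha^*=\frac{u_1+u_3}{2}\bigl(1\pm O(1/m^7)\bigr)$ and hence $q_1=(u_1-\mu^*\pm\delta)/\alpha^*=1/m^2\pm O(1/m^9)$. Your substitute mechanism, a ``careful accounting of the budget constraint,'' cannot deliver this: the budgets of $A_v$ agents are not binding in a way that distinguishes $q_1\approx 1/m^2$ from smaller values (e.g.\ $q_1=0$, $q_2\approx 1/2$, $q_3\approx 2$ is perfectly feasible for them); what pins $q_1$ at its maximum is precisely the forced purchase of zero-utility goods pushing $\mu^*$ to zero.

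For part 2 ($x^+(G_v,\overline{A}_v)\le S_v-1$) your stated mechanism is also off. Since every agent in $A_v$ buys exactly one unit, $A_v$ absorbs at most $5m^{10}$ units of $G_v$ no matter what the prices are, so with outside positive-utility demand at most $S_v-1$, roughly one unit of $G_v$ must be sold to agents with \emph{zero} utility for it. Because $q_2,q_3=\Omega(1)$, those agents can absorb only $n\cdot O(n\eps)=o(1)$ of $G_{v,2}\cup G_{v,3}$ (Lemma \ref{fact:subopt-goods} with $\alpha^*\ge 1/(20n)$ from Lemma \ref{heheclaim1}), so the surplus must sit in $G_{v,1}$; and if $q_1>1/n^2$ the same suboptimality bound caps their total purchase of $G_{v,1}$ by $n\cdot O(n^3\eps)=o(1)$, a contradiction. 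Your argument that ``budget feasibility with each agent spending $\le 1+\eps$'' forces $q_1=O(1/n^2)$ does not work: budget considerations for $A_v$ are insensitive at the scale $1/n^2$ versus $1/m^2$, and the $O(1/n^2)$ bound can only come from the $\eps$-suboptimality bound applied to zero-utility buyers (this is also why the bound depends on $n$, not $m$). In short, the generic toolkit you list (dual feasibility, Lemma \ref{fact:subopt-goods}, Corollary \ref{maincoro}) is the right one, but the two decisive observations --- shortage $\Rightarrow$ an $A_v$ agent buys zero-utility goods $\Rightarrow$ $\mu^*\approx 0$, and surplus $\Rightarrow$ zero-utility agents must buy $G_{v,1}$ $\Rightarrow$ $q_1\le 1/n^2$ --- are missing, and the mechanisms offered in their place would not close the argument.
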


Finally we prove the following lemma about edge gadgets in $M_H$ in Section \ref{sec:edgegadgets}:

\begin{lemma}\label{lem:edgedemand}
Let $(x,p)$ be an $\eps$-approximate HZ equilibrium of $M_H$. For each $e=(u,v)\in E$,
$$x^+(G_u,A_e)= 24m^3+ 12m \pm O(1)\quad\text{and}\quad
x^+(G_v,A_e)=-6m^3p(G_{u,1}) + 24m^3 +15m\pm O(1).$$
\end{lemma}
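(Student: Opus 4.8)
The idea is to analyze each of the groups $A_{e,1}$, $A_{e,2,\ell}$, $A_{e,3,\ell}$, $A_{e,4,\ell}$ comprising $A_e$ separately, determine the (approximately) optimal bundle for each of these agent-groups in terms of the relevant prices $p(G_{u,1}), p(G_{u,2}), p(G_{u,3})$ and $p(G_{v,1})$ (which are all controlled by $p(G_{u,1})$ and $p(G_{v,1})$ via Lemma \ref{lem:variable-price}), and then sum the contributions. Throughout I will write $p_u = p(G_{u,1})$ and $p_v = p(G_{v,1})$ for brevity; recall $0 \le p_u, p_v \lesssim 1/m^2$, so quantities like $p_u^2$, $p_u p_v$ are $O(1/m^4)$ and can be absorbed into lower-order error as long as we scale allocations by a multiple of $m^3$. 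The key tool is Lemma \ref{fact:subopt-goods} (and Corollary \ref{maincoro}): for each agent the bundle is, up to total mass $O(\eps/\delta)$, supported on goods that are not $\delta$-suboptimal, i.e.\ on the goods that are ``tight'' in the dual LP. Since the agents in $A_e$ have only two or three goods with positive utility plus the zero-price dummy goods, determining which goods are tight reduces to comparing a few affine forms in $p_u, p_v$.

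\textbf{Step 1: the step-function agents $A_{e,2,\ell}, A_{e,3,\ell}, A_{e,4,\ell}$.} Each such agent has utility $1$ for $G_e$ (whose price I will have pinned down, via the $A_{e,*}$ agents and a counting argument, to be $\approx$ some fixed value — presumably $2 - o(1)$, analogous to $G_{v,3}$; this is where Fact \ref{basicfact}(3) gets used) and a small utility $\ell/(2m^3)$ (or the constant $\tfrac14 + \tfrac{1}{4m^2} + \tfrac{1}{m^3}$ in the $A_{e,4,\ell}$ case) for a good in $M_u$ or $M_v$. The agent's behavior is a threshold: depending on the sign of an affine form in $p_u$ (resp.\ $p_v$) relative to $\ell/(2m^3)$, the agent either spends everything on $G_e$ (plus dummy goods) and buys $0$ of the $M_u$/$M_v$ good, or buys a $\Theta(1)$ amount of the $M_u$/$M_v$ good. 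Concretely, for $A_{e,2,\ell}$ the comparison is between $\ell/(2m^3)$ and (a scaled version of) $m^2 p_v$: roughly $\lceil 2m^3 \cdot (\text{something}\cdot p_v)\rceil$ of the $m$ groups ``flip'', each contributing $6$ agents times a fixed allocation. Summing over $\ell \in [m]$ turns the discrete count of flipped groups into $\approx c\cdot m^3 p_v + (\text{const})\cdot m$, which is the discrete-approximation-of-a-continuous-function idea from the proof overview. I would do this carefully for each of the three families, tracking (i) the constant multiplying $m^3 p_u$ or $m^3 p_v$, and (ii) the constant multiplying $m$, and keeping everything else in $O(1)$.

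\textbf{Step 2: the agent group $A_{e,1}$ and assembling $x^+(G_u,A_e)$ and $x^+(G_v,A_e)$.} The $48m^3$ agents of $A_{e,1}$ have utility $1$ for $G_{u,3}$ (price $\approx 2 - p_u$) and $1/2$ for $G_{v,1}$ (price $p_v$); by a first-order computation (the ``first key idea'') its optimal bundle puts $\approx (1 - p_v)/(2 - p_u - p_v) \approx \tfrac12 + O(1/m^2)$ mass on $G_{u,3}$ and the complementary mass on $G_{v,1}$, so $A_{e,1}$ contributes $\approx 24m^3 + O(m)$ to $x^+(G_u,A_e)$ and $\approx 24m^3 + O(m)$ to $x^+(G_v,A_e)$. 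Adding the $A_{e,1}$ contribution to the sums from Step 1: the $G_u$-allocation comes only from $A_{e,1}$ (via $G_{u,3}$), $A_{e,3,\ell}$ (via $G_{u,1}$) and $A_{e,4,\ell}$ (via $G_{u,2}$); the $A_{e,3,\ell}$ and $A_{e,4,\ell}$ contributions are designed so that their dependence on $p_u$ cancels — this is the ``careful cancellation'' — leaving $x^+(G_u,A_e) = 24m^3 + 12m \pm O(1)$. The $G_v$-allocation comes from $A_{e,1}$ (via $G_{v,1}$), $A_{e,2,\ell}$ (via $G_{v,1}$) and $A_{e,4,\ell}$ (via $G_{v,1}$); here the $p_u$-dependence does \emph{not} cancel — it accumulates to the $-6m^3 p_u$ term — and collecting the $m$-order constants gives $x^+(G_v,A_e) = -6m^3 p(G_{u,1}) + 24m^3 + 15m \pm O(1)$. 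Finally I must double-check that the numbers $S_u = (24m^3+12m)\outdeg(u) + (24m^3+15m)\indeg(u) - 3m$ are consistent: summing $x^+(G_u, A_e)$ over the $\outdeg(u)$ out-edges and $x^+(G_u, A_{e'})$ over the $\indeg(u)$ in-edges reproduces (up to $O(\outdeg + \indeg) = O(1)$ error and the $p_u$ terms) the $S_u$ appearing in the size of $G_{u,1}$, which is exactly what Lemma \ref{lem:demand} needs.

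\textbf{Main obstacle.} The hard part is Step 1 — rigorously converting ``which of the $m$ threshold groups flip'' into a clean linear-in-$p_u$ (resp.\ $p_v$) expression while controlling all error terms. Two subtleties: first, near a threshold (when $\ell/(2m^3)$ is within the $\delta$-suboptimality window of the relevant affine form) an agent-group may be in a mixed/intermediate state, but since there is at most $O(1)$ such borderline $\ell$ and each contributes $O(1)$ mass, this only costs $O(1)$ total — I need to make this quantitative using Lemma \ref{fact:subopt-goods} with $\delta$ chosen as a suitable power of $1/m$ (small compared to the $1/(2m^3)$ spacing but large compared to $\eps = 1/n^5$, which is fine since $N \gg m$). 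Second, the prices $p(G_{u,2}), p(G_{u,3}), p(G_e)$ are only known approximately (via Lemma \ref{lem:variable-price} and the $G_e$-pinning argument), so the affine forms being compared are themselves only known up to $O(1/m^7)$-type errors; I need these price errors, once multiplied by the $O(m^3)$ agents, to stay within the claimed $O(1)$ slack — which is why the error bounds in Lemma \ref{lem:variable-price} are pushed down to $O(1/m^7)$ rather than $O(1/m^2)$. Managing this hierarchy of scales ($\eps \ll 1/m^7 \ll 1/m^3 \ll 1/m^2$, with allocations scaled by $m^3$ or $m^5$) is the bookkeeping crux of the proof.
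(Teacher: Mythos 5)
Your plan follows the paper's proof essentially verbatim: the paper likewise pins $p(G_e)=2\pm O(1/m^4)$ via the $A_{e,*}$ agents, computes the $A_{e,1}$ bundle by first-order expansion ($\tfrac12+\tfrac{p_u-p_v}{4}\pm O(1/m^4)$ on $G_{u,3}$ and $\tfrac12+\tfrac{p_v-p_u}{4}\pm O(1/m^4)$ on $G_{v,1}$), treats each family $A_{e,2,\ell},A_{e,3,\ell},A_{e,4,\ell}$ as a threshold agent with $O(1)$ borderline values of $\ell$ and sums over $\ell$ to get totals linear in $m^3p_u$, $m^3p_v$, and then adds the four contributions so that the $p_v$-terms cancel in both counts and the $p_u$-terms cancel in the $G_u$ count while accumulating to $-6m^3p_u$ in the $G_v$ count. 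What you leave undone is only the routine evaluation of the explicit coefficients (the paper's $3m(1-m^2p_v)$, $4m(1-m^2p_u)$, $18m(\tfrac23+\tfrac{m^2}{3}p_u-\tfrac{m^2}{2}p_v)$ and $18m(\tfrac49-\tfrac{4m^2}{9}p_u+\tfrac{2m^2}{3}p_v)$); one small imprecision to fix is that for $A_{e,4,\ell}$ the flip is between buying $\approx\tfrac12$ of $G_{v,1}$ and $\approx\tfrac23$ of $G_{u,2}$ (an affine threshold in both $p_u$ and $p_v$), not between buying and not buying a single good, which your Step 2 accounting nevertheless uses correctly.
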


We now use these lemmas to prove Theorem \ref{theo:ppadhard}:

\begin{proof}[Proof of Theorem \ref{theo:ppadhard} assuming Lemmas \ref{lem:variable-price}, 
  \ref{lem:demand} and \ref{lem:edgedemand}]
Let $H=(V,E)$ be a\ threshold game, and 
  let $(x,p)$ be an $\eps$-approximate HZ equilibrium of $M_H$.
Let $(x_v:v\in V)$ be a profile for $H$ with 
$$x_v=\min\big(1,m^2p(G_{v,1})\big)$$
for each $v\in V$.
We prove below that $(x_v:v\in V)$ is a $\kappa$-approximate equilibrium
  of $H$.

Fix a node $v\in V$. We consider two cases.
\begin{flushleft}\begin{enumerate}
\item Case $1$: $\sum_{u\in N_v} x_u>0.5+\kappa$.
In this case, $x^+(G_v,\overline{A}_v)$ is at most
\begin{align*}
&\outdeg(v)\cdot \big(24m^3+12m+O(1)\big)+\sum_{u\in N_v} \big(24m^3 +15m -6m^3p(G_{u,1}) + O(1)\big)\\
&\hspace{1cm}=S_v+3m-6m^3\sum_{v\in N_v} p(G_{u,1})+O(1)<S_v-1.
\end{align*} 
It follows from Lemma \ref{lem:demand} that $p(G_{v,1})\le O(1/n^2)$ and thus, $x_v\le O(m^2/n^2)< \kappa$.
\item Case $2$: $\sum_{u\in N_v} x_u<0.5-\kappa$. Using $p(G_{u,1})\le 1/m^2+O(1/m^9)$, we have
$$\sum_{u\in N_v} m^2p(G_{u,1})<0.5-\kappa +O(1/m^7).$$
Similarly, $x^+(G_v,\overline{A}_v)$ is at least \begin{align*}
&\outdeg(v)\cdot \big(24m^3+12m- O(1)\big)+\sum_{u\in N_v} \big( 24m^3 +15m- 6m^3p(G_{u,1})- O(1)\big)\\
&\hspace{1cm}=S_v+3m-6m^3\sum_{u\in N_v}p(G_{u,1})-O(1)>S_v+1.
\end{align*} 
It follows from Lemma \ref{lem:demand} that $p(G_{v,1})\ge (1/m^2)-O(1/m^9)$ and thus, $x_v \ge 1-\kappa$.
\end{enumerate}\end{flushleft}
This finishes the proof of the theorem.
\end{proof}

\subsection{Analysis of Variable Gadgets}
\label{sec:gadgets}

We prove Lemma \ref{lem:variable-price} and
  Lemma \ref{lem:demand} in this section. We start with some simple  bounds on prices of
  goods in $G_{e }$ and $ G_{v,3} $, $e\in E$ and $v\in V$:

\begin{lemma}\label{haheha22}
Let $(x,p)$ be an $\eps$-approximate HZ equilibrium of $M_H$.
We have 
  $p(G_e)\ge 2(1-2\eps)$ for every $e\in E$ and $p(G_{v,3})\ge 5/3$ for every $v\in V$.
\end{lemma}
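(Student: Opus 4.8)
The plan is to prove both inequalities by the same ``averaging'' argument. An agent whose maximum utility value $1$ is attained on a group $G$ of goods is guaranteed, in any $\eps$-approximate equilibrium, a value of at least $\min(1,1/p(G))-\eps$: this combines Lemma~\ref{simplesimplelemma} with condition~4 of Definition~\ref{def:hz-approx-equilibrium}. Summing this lower bound over all agents in a large group $A^*$ and comparing it against the total weighted supply of the goods those agents care about — using that $\sum_{i\in A^*} x_{i,j}\le \sum_{\text{all }i}x_{i,j}=1$ for every good $j$ by condition~1 — forces $p(G)$ to be large. I do not expect a conceptual obstacle here; the only care needed is in bookkeeping the constants.

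For $p(G_e)$: each of the $64m^5$ agents in $A_{e,*}$ has utility $1$ on the $32m^5$ goods of $G_e$ and $0$ on everything else, so the value of agent $i\in A_{e,*}$ equals $\sum_{j\in G_e} x_{i,j}$, which by condition~4 and Lemma~\ref{simplesimplelemma} is at least $\val_p(i)-\eps\ge \min(1,1/p(G_e))-\eps$. Summing over $i\in A_{e,*}$ and using $\sum_{i\in A_{e,*}} x_{i,j}\le 1$ for every $j\in G_e$ gives
\[
64m^5\big(\min(1,1/p(G_e))-\eps\big)\ \le\ \sum_{i\in A_{e,*}}\sum_{j\in G_e} x_{i,j}\ \le\ 32m^5,
\]
hence $\min(1,1/p(G_e))\le 1/2+\eps$. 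This rules out $p(G_e)<1$, so $1/p(G_e)\le 1/2+\eps$, i.e.\ $p(G_e)\ge 2/(1+2\eps)\ge 2(1-2\eps)$, as claimed.

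For $p(G_{v,3})$: each of the $5m^{10}$ agents in $A_v$ has maximum utility $1$ on $G_{v,3}$, so again the value of $i\in A_v$ is at least $\min(1,1/p(G_{v,3}))-\eps$, while by the utility values in $(\ref{utilityutility})$ the value of $i$ equals $\frac{1}{2m^2-1}\,x_i(G_{v,1})+\frac{m^2+1}{4m^2-2}\,x_i(G_{v,2})+x_i(G_{v,3})$, where $x_i(G):=\sum_{j\in G}x_{i,j}$. Summing over $i\in A_v$, using $\sum_{i\in A_v}x_i(G)\le |G|$ for each group together with $|G_{v,1}|=m^{10}+S_v=m^{10}+O(m^3)$ (here $S_v=O(m^3)$ since $\indeg,\outdeg\le 3$) and $|G_{v,2}|=|G_{v,3}|=2m^{10}$, the total value is at most
\[
\frac{m^{10}+O(m^3)}{2m^2-1}+\frac{(m^2+1)\cdot 2m^{10}}{4m^2-2}+2m^{10}=\frac{5}{2}m^{10}+O(m^8).
\]
Combining with the lower bound $5m^{10}\big(\min(1,1/p(G_{v,3}))-\eps\big)$ yields $\min(1,1/p(G_{v,3}))\le \tfrac12+O(1/m^2)+\eps$, which again rules out $p(G_{v,3})<1$ and then forces $p(G_{v,3})\ge 1/(\tfrac12+O(1/m^2)+\eps)>5/3$.

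The only point requiring attention is that the $O(1/m^2)$ slack in the last display must leave room below $3/5$ (equivalently above $5/3$ after inverting); this is fine because $m=\lceil C/\kappa\rceil$ with $C$ a sufficiently large universal constant and $\eps=1/n^5$ is negligible, so $m$ may be taken large enough. The same remark makes the $2(1-2\eps)$ bound on $p(G_e)$ immediate from $2/(1+2\eps)\ge 2(1-2\eps)$.
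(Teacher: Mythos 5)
Your proof is correct and matches the paper's argument in essence: both combine the value lower bound $\val_p(i)\ge\min(1,1/p(G))$ from Lemma~\ref{simplesimplelemma} (together with condition~4 of Definition~\ref{def:hz-approx-equilibrium}) with an averaging/supply-counting bound over the agents in $A_{e,*}$ (resp.\ $A_v$), using the same constants $1/2+\eps$ and $\frac{1}{5m^{10}}\bigl(\frac{m^{10}+S_v}{2m^2-1}+2m^{10}\cdot\frac{m^2+1}{4m^2-2}+2m^{10}\bigr)<3/5$. The only difference is presentational: the paper bounds each agent's optimal value directly and derives a contradiction, while you sum the per-agent value guarantee and compare it to total supply, which is the same argument.
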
 
\begin{proof}
Fix an $e\in E$.
The optimal value of each agent in $A_{e,*}$ is at most $0.5+\eps$;
  otherwise each of them must receive a bundle with value more than $0.5$, which implies
  that each of them gets more than $0.5$ unit of goods in $G_{e}$, contradicting with the fact that 
  there are $64m^5$ many agents in $A_{e,*}$ but only $32m^5$ many goods in $G_e$. 
On the other hand, the optimal value of each agent in $A_{e,*}$ is at least 
  $\min(1,1/p(G_e))$ by Lemma \ref{simplesimplelemma} and thus,
   $p(G_e)\ge 1/(0.5+\eps)>2(1-2\eps)$.
  
Next fix a $v\in V$.
With a similar argument, the optimal value of each agent in $A_v$ is at most 
$$
\frac{1}{5m^{10}}\cdot \left(\frac{ m^{10}+S_v}{2m^2-1}+2m^{10}\cdot\frac{m^2+1}{4m^2-2}+2m^{10}\right)+\eps
< 3/5 
$$
when $m$ is sufficiently large.
On the other hand, by Lemma \ref{simplesimplelemma} the optimal value of each agent in $A_v$ is at least
  $\min(1,1/p(G_{v,3}))$  and thus, $p(G_{v,3})\ge 5/3$.
\end{proof}

From this we can show that every agent in $M_H$ has optimal value at most $0.9$:

\begin{lemma}
Let $(x,p)$ be an $\eps$-approximate HZ equilibrium of $M_H$.
Then every agent in $M_H$ has optimal value (with respect to $p$) at most $0.9$.
\end{lemma}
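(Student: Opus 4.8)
The plan is to show that the only candidate agents who could conceivably have optimal value exceeding $0.9$ are those who already received a dedicated bound in Lemma~\ref{haheha22}, and then to use that lemma's price bounds to rule them out. Concretely, I would partition the agents of $M_H$ by group and go through each group in turn, in each case identifying the goods on which the group has utility $1$ (the only utility that matters for pushing $\val_p(i)$ above $0.9$, since all other positive utilities in the construction are bounded well below $1$) and lower-bounding the price of those goods.

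First, consider agents in $A_v$ for some $v\in V$. Their only utility-$1$ goods are those in $G_{v,3}$, and by Lemma~\ref{haheha22} we have $p(G_{v,3})\ge 5/3$. By Lemma~\ref{simplesimplelemma}, an agent $i\in A_v$ with utility-$1$ good $\ell$ has $\val_p(i)\ge \min(1,1/p_\ell)$, but this is only a lower bound; for the upper bound I instead reuse the computation already carried out inside the proof of Lemma~\ref{haheha22}, which shows the optimal value of each agent in $A_v$ is at most $3/5+\eps<0.9$ (an agent's optimal value over the feasible set is at most the average allocation value available to the whole group, since the groups $A_v$ and $G_v$ are sized so that demand outstrips the cheap supply). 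Second, consider agents in $A_{e,*}$: the same proof already bounds their optimal value by $0.5+\eps<0.9$. Third, consider the agents in the edge gadgets $A_{e,1}$, $A_{e,2,\ell}$, $A_{e,3,\ell}$, $A_{e,4,\ell}$. Among these, only $A_{e,1}$ (utility $1$ on $G_{u,3}$) and the $A_{e,2,\ell}, A_{e,3,\ell}, A_{e,4,\ell}$ groups (utility $1$ on $G_e$) have any utility-$1$ good. For $A_{e,1}$: its utility-$1$ good lies in $G_{u,3}$, which by Lemma~\ref{haheha22} has price $\ge 5/3$, so buying one unit of the cheapest such good costs $\ge 5/3>1+\eps$ and is infeasible; the best this agent can do is buy a $3/(5)$-ish fraction at price $5/3$ (spending $\le 1+\eps$) plus the rest on a free good, together with possibly the $1/2$-utility good in $G_{v,1}$ whose price is $O(1/m^2)$ — a short LP computation bounds $\val_p(i)$ here by roughly $3/5 + 1/2 + O(1/m^2)$, which is not obviously below $0.9$, so this is the delicate case. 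For the $A_{e,j,\ell}$ groups: their utility-$1$ good lies in $G_e$, with $p(G_e)\ge 2(1-2\eps)$, so again one unit is infeasible and, since their only other positive-utility good has utility at most $1/4+1/(4m^2)+1/m^3<1/2$, the optimal value is at most about $1/2 + \text{(small)} < 0.9$.

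The main obstacle is the $A_{e,1}$ case: here the agent has a utility-$1$ good (in $G_{u,3}$, priced near $2$) and a utility-$1/2$ good (in $G_{v,1}$, priced near $0$), so naively it could buy a large fraction of the cheap half-utility good and a modest fraction of the expensive full-utility good and approach value $1/2+\text{something}$. I would handle this by writing down the dual LP for such an agent: with $p(G_{u,3})\ge 5/3$ and $p(G_{v,1})\le 1/m^2+O(1/m^6)$ and a free good available (price $0$), the dual constraints force $\mu_i^*\ge 1/2$ from the $G_{v,1}$ good only if the price there were $0$, but more carefully $\alpha_i^* p + \mu_i^*\ge 1$ at the $G_{u,3}$ good and $\alpha_i^* p(G_{v,1})+\mu_i^*\ge 1/2$, and $\mu_i^*\ge 0$ from the free good; minimizing $\alpha_i^*+\mu_i^*$ subject to these gives, since $p(G_{u,3})\ge 5/3$, an optimum at $\alpha_i^* = (1-\mu_i^*)\cdot\tfrac{1}{p(G_{u,3})}$, and plugging in one finds $\val_p(i)\le \mu_i^* + (1-\mu_i^*)/(5/3)$ optimized over $\mu_i^*\in[0, 1/2]$ (the $1/2$ cap coming from the $G_{v,1}$ constraint being loose), which evaluates to $\le 1/2 + (1/2)\cdot(3/5) = 0.8 < 0.9$. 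So a clean way to organize the whole proof is: for each agent, exhibit a dual-feasible $(\alpha_i^*,\mu_i^*)$ of objective value $<0.9$ using the price lower bounds from Lemma~\ref{haheha22} on the agent's utility-$1$ goods together with $\mu_i^*$ coming from a free good; by weak LP duality this upper-bounds $\val_p(i)$. I would present it in that dual-certificate form, checking the constant $0.9$ survives with room to spare in every group (with the $A_{e,1}$ group, at $0.8+o(1)$, being the tightest and hence the case to write out in full).
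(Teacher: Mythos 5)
Your proposal is correct and follows essentially the same route as the paper's proof: cite the preceding lemma's bounds of $3/5$ and $0.5+\eps$ for the groups $A_v$ and $A_{e,*}$, and handle the remaining edge-gadget agents $A_e$ via the price lower bounds $p(G_{u,3})\ge 5/3$ and $p(G_e)\ge 2(1-2\eps)$ on their utility-$1$ goods together with the fact that all their other utilities are at most $1/2$ (your explicit dual certificates, giving $0.8$ for $A_{e,1}$ and roughly $3/4$ for $A_{e,2,\ell},A_{e,3,\ell},A_{e,4,\ell}$, just spell out what the paper leaves implicit). One minor remark: the bound $p(G_{v,1})\le 1/m^2+O(1/m^6)$ you mention in passing is only proved later (and its proof relies on the present lemma), but your certificates never actually need it --- only $p(G_{v,1})\ge 0$ and Lemma~\ref{haheha22} --- so there is no circularity.
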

\begin{proof}
As shown in the previous lemma, the optimal value of each agent in a group $A_v$ of a variable gadget is at most 3/5, and the optimal value of each agent in a group $A_{e,*}$ of an edge gadget is at most $0.5+\epsilon$.
The claim for the agents in the groups $A_{e}$ follows from the
prices of the goods in $G_{u,3}$ and $G_e$, which are the goods that have utility 1 for these agents (the other goods have utility 1/2 or less).
\end{proof}

This allows us to apply lemmas in Section \ref{sec:basic}.
It immediately leads to the following corollary:

\begin{corollary}\label{closecoro}
For every group $G_j$ of goods, the maximum price in $G_j$ is at most
  $p(G_j)+1/n^2$.
\end{corollary}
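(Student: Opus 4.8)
The plan is to show that if two goods $j_1, j_2$ in the same group $G_j$ had prices differing by more than $1/n^2$, then some agent would be violating the approximate-optimality condition by shifting allocation from the expensive copy to the cheap copy. Recall the setup: goods in the same group $G_j$ yield the same utility to every agent by construction, and (by the previous lemma) every agent has optimal value at most $0.9$, so the results of Section~\ref{sec:basic} — in particular Corollary~\ref{secondcoro} — apply to every agent.

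First I would let $j_1 \in G_j$ attain the minimum price $p(G_j)$ and suppose for contradiction that some $j_2 \in G_j$ has $p_{j_2} > p(G_j) + 1/n^2$. Since good $j_2$ must be sold (condition~1 of the approximate equilibrium), there is some agent $i$ with $x_{i,j_2} > 0$; in fact $\sum_{i} x_{i,j_2} = 1$, so there is an agent $i$ buying at least $1/n$ of good $j_2$. Here the key point is that $j_1$ and $j_2$ give agent $i$ the \emph{same} utility. Consider the modified bundle $x_i'$ obtained from $x_i$ by moving all of the $x_{i,j_2}$ mass onto $j_1$: this keeps the total allocation at $1$ (condition~2 still holds), keeps the value $\sum_j u_{i,j} x_{i,j}$ unchanged since $u_{i,j_1} = u_{i,j_2}$, and \emph{decreases} the cost by $(p_{j_2} - p_{j_1}) x_{i,j_2} > (1/n^2)\cdot x_{i,j_2}$. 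So if agent $i$ were buying a non-negligible amount of $j_2$, the cost of $x_i'$ would be comfortably below that of $x_i$, giving slack in condition~3 that we can convert into extra value.

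To turn this into a contradiction I would combine it with Corollary~\ref{secondcoro}: since $\val_p(i) \le 0.9$, agent $i$ spends at least $1 - 20n\eps - 1/n^2$ on goods with positive utility, and hence essentially her whole unit budget is consumed by useful goods — she cannot afford to waste a $\Theta(1/n^2)$ amount of budget by holding the overpriced copy $j_2$ when the cheaper copy $j_1$ is available at the same utility. More precisely: with $\eps = 1/n^5$, if $x_{i,j_2}$ were as large as, say, $1/n$, then swapping to $j_1$ frees up $\Omega(1/n^3)$ in budget; using this freed budget to buy a bit more of a utility-$1$ good (which exists with price at most $2n$ by Lemma~\ref{fact:maximum-price}, so a $\Omega(1/n^4)$ amount is affordable) raises the value by $\Omega(1/n^4) \gg \eps$, contradicting condition~4. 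Doing the bookkeeping carefully, the conclusion is that no agent buys more than an $O(n\eps + 1/n^3)$ amount of any good whose price exceeds $p(G_j) + 1/n^2$; summing over all $n$ agents, the total amount of such a good that can be sold is $o(1) < 1$, contradicting condition~1. Hence $p_{j_2} \le p(G_j) + 1/n^2$.

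The main obstacle is not conceptual but bookkeeping: one must chase the constants so that the budget freed by the swap ($\Theta(1/n^2)$ per unit of mass moved) genuinely dominates $\eps = 1/n^5$ and the $1/n^2$ slack term from Corollary~\ref{secondcoro}, and one must be slightly careful that the good $j_1$ used as the cheap substitute and the utility-$1$ good used to absorb the freed budget are distinct from $j_2$ (they are, since $j_1$ has the minimum price in $G_j$ and $j_2$ is strictly more expensive, and a utility-$1$ good for agent $i$ has value to $i$ which $j_2$ may or may not — but even if $j_2$ has utility $1$ for $i$, buying more of $j_1$ already suffices). These are routine once the inequalities are lined up, and the factor-of-$n$ gaps between the relevant scales make the argument robust.
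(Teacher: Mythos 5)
Your overall strategy---show that each agent can hold only a negligible amount of any copy priced $1/n^2$ above the cheapest copy of its group, then sum over the $n$ agents and contradict market clearing (condition~1)---matches the skeleton of the paper's proof. But the central improvement step has a gap as written: you say the agent can ``use the freed budget to buy a bit more of a utility-$1$ good.'' In an HZ market she cannot buy more of anything without giving something up: condition~2 forces her total allocation to stay exactly one unit. So the extra $\theta$ units of the utility-$1$ good must displace $\theta$ units of other goods in her bundle, and the value gain is not $\theta$ but $\theta(1-u)$ where $u$ is the utility of the displaced mass; if everything she could displace also has utility (close to) $1$, the gain vanishes and no contradiction follows. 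The repair uses a fact you cite but do not deploy for this purpose: since $\val_p(i)\le 0.9$, her bundle's value is at most $\val_p(i)+\eps\alpha_i^*\le 0.9+\eps$, so a constant fraction of her unit allocation must lie on goods of utility at most (say) $0.95$, and displacing that mass yields a gain $\Omega(\theta)$, which then beats $\eps$ exactly as you compute. Two smaller points: the original cost may be $1+\eps$, so only $\Delta-\eps$ of the freed budget may be spent if the comparison bundle is to be feasible for the LP defining $\val_p(i)$ (harmless since $\eps=1/n^5\ll\Delta$); and your appeal to Corollary~\ref{secondcoro} does no work here, since the overpriced copy may itself have positive utility for the agent, so her spending on it is not constrained by that corollary.

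For comparison, the paper gets the per-agent bound without any explicit exchange: $\val_p(i)\le 0.9$ gives $\alpha_i^*\ge 1/(20n)$ by Lemma~\ref{heheclaim1}, so dual feasibility at the cheapest copy (which has the same utility) makes the overpriced copy $\Omega(1/n^3)$-suboptimal, and Lemma~\ref{fact:subopt-goods} then bounds its allocation to each agent by $O(n^3\eps)=O(1/n^2)$; summing over agents gives $O(1/n)<1$. Note also that your stated per-agent bound $O(n\eps+1/n^3)$ is not what your exchange argument actually yields (it gives $O(n^3\eps)=O(1/n^2)$), though either bound sums to $o(1)$, so that part is only a bookkeeping slip. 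In short: right architecture, but the ``buy more with the freed budget'' step needs the displacement argument above to be a proof.
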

\begin{proof}
Assume for a contradiction that there is a good in $G_j$ with price at least $p(G_i)+1/n^2$.~Then for each agent $i$ in the market, we have $\alpha^*_i\ge 1/(20n)$ by Lemma \ref{heheclaim1}
  and thus this good is $\Omega(1/n^3)$-suboptimal (by comparing with the 
  good in $G_j$ with price $p(G_j)$). Hence its allocation to agent $i$
  is $O(1/n^2)$, and the total allocation of this good in $x$ is $O(1/n)$, a contradiction.
\end{proof}

Before proving Lemma \ref{lem:variable-price} we show that $p(G_e)$ is very close to $2$:

\begin{lemma}\label{lem:goode}
For every edge $e\in E$ we have $p(G_e)=2\pm O(1/m^4)$.
\end{lemma}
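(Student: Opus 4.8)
\textbf{Proof plan for Lemma~\ref{lem:goode}.}
The goal is to upgrade the crude bound $p(G_e)\ge 2(1-2\eps)$ from Lemma~\ref{haheha22} to the two-sided estimate $p(G_e)=2\pm O(1/m^4)$. The plan is to use a counting/flow argument on the group $G_e$ together with the agents whose only unit-utility good is in $G_e$. First I would recall the structure around $G_e$: there are $32m^5$ goods in $G_e$, the $64m^5$ agents of $A_{e,*}$ have utility $1$ exactly on $G_e$ (and $0$ elsewhere), and by Fact~\ref{basicfact}(3) only $50m=O(m)$ further agents (those in $A_e$) have positive utility on $G_e$, each with budget $1$. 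Since every agent has optimal value at most $0.9$, Corollary~\ref{secondcoro} applies to each agent in $A_{e,*}$: such an agent spends essentially all her dollar on goods with positive utility, i.e.\ on $G_e$, so $\sum_{j\in G_e} p_j x_{i,j}\ge 1-20n\eps-1/n^2$.

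\textbf{Lower bound on $p(G_e)$.} Summing this over all $i\in A_{e,*}$ gives that the agents of $A_{e,*}$ collectively spend at least $64m^5(1-20n\eps-1/n^2)$ on $G_e$. The total amount of $G_e$ they can buy is at most $32m^5$ (the supply), minus whatever the $O(m)$ outside agents of $A_e$ buy, which is at most $O(m)$ units; so $A_{e,*}$ buys at least $32m^5-O(m)$ units of $G_e$. Meanwhile, by Corollary~\ref{closecoro} all prices in $G_e$ lie in $[p(G_e),p(G_e)+1/n^2]$. Hence the money spent by $A_{e,*}$ on $G_e$ is at most $(p(G_e)+1/n^2)\cdot 32m^5$. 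Combining, $(p(G_e)+1/n^2)\cdot 32m^5\ge 64m^5(1-20n\eps-1/n^2)-$(a bound for the rounding from subtracting the outside purchases)$\,$; dividing by $32m^5$ and absorbing the $n^{-\Omega(1)}$ and $O(m)/m^5$ error terms (recall $n$ is polynomially large in $m,|V|,|E|$, so these are far smaller than $1/m^4$) yields $p(G_e)\ge 2-O(1/m^4)$. Actually the clean way: total supply value of $G_e$ is at most $32m^5(p(G_e)+1/n^2)$; total money that \emph{must} land on $G_e$ is at least (money from $A_{e,*}$ forced onto $G_e$) $\ge 64m^5(1-o(1))$ minus the at most $O(m)$ that outside agents divert, but since goods must be sold out, all $32m^5$ units are bought, and the buyers are $A_{e,*}\cup(\text{$O(m)$ agents})$; the $A_{e,*}$ agents have total budget $64m^5$ but can only spend at most $1+\eps$ each, and they are forced (by Corollary~\ref{secondcoro}) to put $\ge 1-o(1)$ of it on $G_e$. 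So the revenue of $G_e$ is $\ge 64m^5(1-o(1)) - (\text{budget of }A_{e,*}\text{ not spent on }G_e)$; but a cleaner route is: since each of the $64m^5$ agents of $A_{e,*}$ spends $\ge 1-o(1)$ on $G_e$ and there are only $32m^5$ units worth $\le p(G_e)+o(1)$ each, we get $32m^5(p(G_e)+o(1))\ge $ revenue of $G_e\ge$ (actually revenue is exactly $\sum_j p_jx_{\cdot,j}$ summed, which is at least $A_{e,*}$'s contribution) $\ge 64m^5(1-o(1)) - O(m)\cdot(\text{max price})$. Wait—that over-counts, since $A_{e,*}$ cannot spend $64m^5$ on only $32m^5\cdot p(G_e)\approx 64m^5$ worth of goods unless the outside agents buy almost nothing; so this simultaneously forces $p(G_e)\gtrsim 2$ \emph{and} shows the outside agents buy $O(m)\cdot(1/m^5)$-ish... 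I will organize it as: revenue$(G_e)=\sum_{i}\sum_{j\in G_e}p_jx_{i,j}\le 32m^5(p(G_e)+1/n^2)$, and revenue$(G_e)\ge\sum_{i\in A_{e,*}}\sum_{j\in G_e}p_jx_{i,j}\ge 64m^5(1-20n\eps-1/n^2)$, giving $p(G_e)\ge 2-O(1/m^4)$.

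\textbf{Upper bound on $p(G_e)$.} For the reverse inequality I would argue that if $p(G_e)$ were too large then $A_{e,*}$ could not afford the supply and some unit of $G_e$ would go unsold, contradicting item~1 of the approximate-equilibrium definition. Concretely: the total budget available to agents who value $G_e$ positively is $64m^5+50m$ (the $A_{e,*}$ agents plus the $O(m)$ agents of $A_e$), and each can spend at most $1+\eps$, so the total revenue obtainable from $G_e$ is at most $(64m^5+50m)(1+\eps)$; but all $32m^5$ units must be sold, at price at least $p(G_e)$ each, so $32m^5\,p(G_e)\le(64m^5+50m)(1+\eps)$, i.e.\ $p(G_e)\le 2+O(1/m^4)$. (Here I use that only agents with positive utility on $G_e$ can buy $G_e$ in a near-optimal bundle — strictly, an agent with zero utility on $G_e$ could still buy a tiny $O(\eps/\delta)$ amount by Lemma~\ref{fact:subopt-goods} if those goods are suboptimal, but $G_e$'s goods are priced near $2>$ the price of a free good, so they are $\Omega(1)$-suboptimal for any agent valuing them $0$, and the total such stray allocation over all agents is $O(n\eps/\delta)=o(1)$, harmless at the $1/m^4$ scale.) Combining the two bounds gives $p(G_e)=2\pm O(1/m^4)$.

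\textbf{Main obstacle.} The delicate point is bookkeeping the $o(1)$-type error terms and making sure the contributions of the $O(m)$ ``outside'' agents of $A_e$ — which can each divert up to one full dollar away from or toward $G_e$ — are genuinely negligible at the $O(1/m^4)$ resolution; since their total budget $50m$ is a $\Theta(1/m^4)$ fraction of the $32m^5$ revenue scale, they must be accounted for exactly in the inequalities (as I did above: $(64m^5+50m)$ rather than $64m^5$), and one must check this does not swamp the target error. The other mild subtlety is confirming that agents with zero utility on $G_e$ buy only a totally negligible amount of it, which follows because $p(G_e)\approx2$ makes those goods far from optimal (they are $\Omega(1)$-suboptimal), so Lemma~\ref{fact:subopt-goods} bounds each such allocation by $O(\eps/\delta)$ and the sum over all $n$ agents is still $o(1/m^4)$ given $n$ is polynomially large in the instance size.
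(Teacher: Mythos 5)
Your proof is correct and follows essentially the same budget/supply counting as the paper's: the paper likewise removes the $O(m)$ units bought by the agents of $A_e$ and the negligible amount bought by zero-utility agents, and then compares the price of the remaining $32m^5-O(m)$ units of $G_e$ against the $64m^5(1+\eps)$ total budget of $A_{e,*}$ to get the upper bound. Two small remarks: the lower bound needs no new derivation, since $p(G_e)\ge 2(1-2\eps)$ from Lemma~\ref{haheha22} with $\eps=1/n^5$ is already within $O(1/m^4)$ of $2$; and for an agent with zero utility the goods of $G_e$ are only guaranteed to be $\Omega(1/n)$-suboptimal (one only knows $\alpha_i^*\ge 1/(20n)$ from Lemma~\ref{heheclaim1}), not $\Omega(1)$-suboptimal, though the resulting $O(n^2\eps)$ total stray allocation is still negligible exactly as you claim.
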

\begin{proof}
We have by Lemma \ref{haheha22} that $p(G_e)\ge 2-O(\eps)$.
For the upper bound note that by Lemma \ref{haheha22} and Lemma \ref{heheclaim1}, goods in 
  $G_e$ are $\Omega(1/n)$-suboptimal to agents with zero utility so their total
  allocation to such agents is $O(n^2\eps)$ by Lemma \ref{fact:subopt-goods}.
By Fact \ref{basicfact} the total allocation of $G_e$ to agents outside $A_{e,*}$ with a positive utility is $O(m)$ and thus,
  the rest of $32m^5-O(m)$ units of $G_e$ are allocated to agents in $A_{e,*}$.
So 
$$
\left(32m^5-O(m)\right)p(G_e)\le 64m^5(1+\eps),
$$
which implies that $p(G_e)\le 2+O(1/m^4)$. This finishes the proof of the lemma.
\end{proof}

We are now ready to prove Lemma \ref{lem:variable-price}:
\begin{figure}[!ht]
	\centering
	\includegraphics[width=0.8\textwidth]{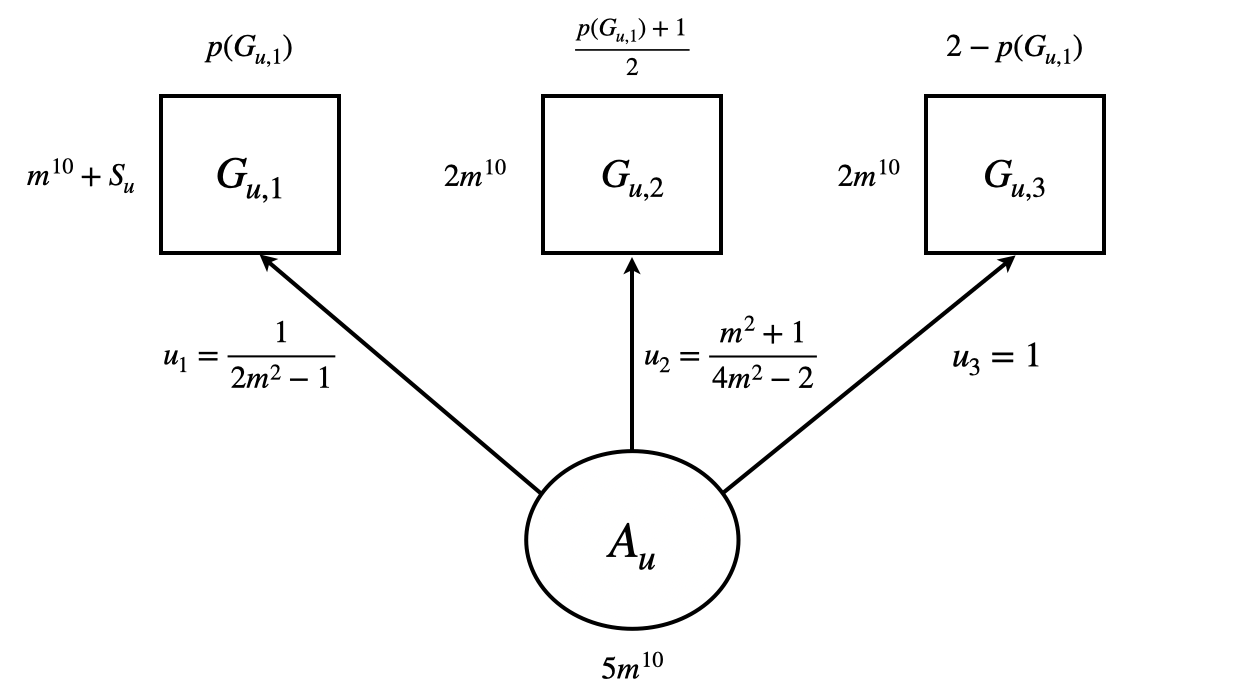}
  \vspace{0.2cm}\caption{The variable gadget.}
\end{figure}
\begin{proof}[Proof of Lemma \ref{lem:variable-price}]
Fixing any node $v\in V$, we let $q_\ell$ denote $p(G_{v,\ell})$ and $y_\ell$ to denote the 
  total allocation of $G_{v,\ell}$ to agents in $A_v$ in $x$, for each $\ell\in \{1,2,3\}$.
  We also write $u_\ell$ to denote the utility of $G_{v,\ell}$ to agents in $A_{v}$
    given in (\ref{utilityutility}).
We start by showing that most goods in $G_v$ go to $A_v$.

\begin{claim}\label{claim:recall}
We have $y_1\ge m^{10}-O(m^3)$ and $y_2,y_3\ge 2m^{10}-O(m^3)$. 
\end{claim}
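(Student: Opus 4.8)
The plan is to use the fact that each of the $5m^{10}$ agents in $A_v$ receives exactly one unit of goods (item 2 of Definition \ref{def:hz-approx-equilibrium}), so their total allocation is exactly $5m^{10}$, and to show that almost all of it lands on $G_v$. Write $Z$ for the total amount of goods having \emph{zero} utility to the $A_v$-agents that nonetheless gets allocated to $A_v$. Since $G_v$ and these zero-utility goods are the only options for an $A_v$-agent, we have the exact identity $y_1+y_2+y_3 = 5m^{10}-Z$. Combining this with the trivial upper bounds $y_1\le |G_{v,1}| = m^{10}+S_v$, $y_2\le 2m^{10}$, $y_3\le 2m^{10}$, together with $S_v=O(m^3)$ and $S_v\ge -3m$ (as $\indeg,\outdeg\le 3$), any bound of the form $Z=o(1)$ immediately yields all three claimed inequalities, e.g. $y_1\ge (5m^{10}-Z)-y_2-y_3\ge m^{10}-O(m^3)$. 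So the task splits into (i) controlling the "leakage" of $G_v$ to the few agents outside $A_v$ and (ii) bounding $Z$.

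For the leakage I would first record that every agent has optimal value at most $0.9$ (the preceding lemma), so Lemmas \ref{simplelemma1} and \ref{heheclaim1} apply and give $\mu_i^*\ge 0$ and $\alpha_i^*\in[1/(20n),1]$ for every agent $i$. Whenever a group $G_{v,\ell}$ has $p(G_{v,\ell})$ bounded below by some $\rho$, every good in it is $(\rho/(20n))$-suboptimal for every agent $i$ with $u_{i,j}=0$ there (since $\alpha_i^* p_j+\mu_i^*\ge \rho\alpha_i^*$), so Lemma \ref{fact:subopt-goods} bounds what such an agent can take by $O(n\eps/\rho)$, hence $O(n^2\eps/\rho)=O(1/(n^3\rho))$ over all $\le n$ such agents; combined with Fact \ref{basicfact}(2) (at most $O(m^3)$ agents outside $A_v$ touch $G_v$, absorbing $\le 1$ unit each), this gives that $A_v$ gets all but $O(m^3)$ of that group. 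Lemma \ref{haheha22} gives $p(G_{v,3})\ge 5/3$, so this already proves $y_3\ge 2m^{10}-O(m^3)$ unconditionally; and it also handles $G_{v,1}$ as soon as $q_1:=p(G_{v,1})>u_1/2=\Omega(1/m^2)$ and $G_{v,2}$ as soon as $q_2:=p(G_{v,2})>u_2/2=\Omega(1)$.

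It remains to treat the complementary regime, in which $q_1\le u_1/2$ or $q_2\le u_2/2$; there I would bound $Z$ directly via the dual. For $i\in A_v$ the dual constraint at the cheapest good of $G_{v,\ell}$ reads $u_\ell\le\alpha_i^* q_\ell+\mu_i^*\le q_\ell+\mu_i^*$ (using $\alpha_i^*\le 1$), so $\mu_i^*\ge u_\ell-q_\ell$; thus $q_1\le u_1/2$ forces $\mu_i^*\ge u_1/2$, and $q_2\le u_2/2$ forces $\mu_i^*\ge u_2/2\ge u_1/2$ (recall $u_2>u_1$). In either case every zero-utility good of $i$ is $(u_1/2)$-suboptimal, so Lemma \ref{fact:subopt-goods} gives that $i$ takes at most $4\eps/u_1=O(m^2\eps)$ of them, whence $Z\le 5m^{10}\cdot O(m^2\eps)=O(m^{12}\eps)=o(1)$ since $\eps=1/n^5$ and $n$ exceeds any function of $m$. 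Feeding this into the identity of the first paragraph closes the claim.

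I expect step (ii) to be the real obstacle, and it is the reason for the case split: when $q_1$ or $q_2$ is near $0$ the generic suboptimality estimate is useless (a nearly-free good is not suboptimal for a zero-utility agent, so a priori it could be soaked up outside $A_v$), but a small price is exactly what forces the $A_v$-agents' dual multiplier $\mu_i^*$ to be bounded below, which in turn makes their zero-utility goods suboptimal and pins down $Z$. The two regimes are complementary, so one of the two arguments always applies; the only remaining care is the bookkeeping of $S_v$ and of the various $o(1)$ error terms, all of which are dominated by the $O(m^3)$ slack we are allowed.
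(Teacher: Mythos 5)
Your proposal is correct and takes essentially the same route as the paper's proof: the same dichotomy (either the $A_v$-agents' dual value $\mu^*$ is at least $u_1/2$, so all zero-utility goods are suboptimal for $A_v$ and almost all of their $5m^{10}$ units must come from $G_v$, closing via the counting identity; or all relevant prices in $G_v$ are bounded below, so $G_v$ is suboptimal for outside zero-utility agents and the leakage beyond the $O(m^3)$ positive-utility outsiders is negligible), using the same lemmas. The only difference is cosmetic: the paper cases directly on $\mu^*$ versus $u_1/2$, while you case on the prices $q_1,q_2$; via the dual constraint together with $\alpha^*\le 1$ the two splits are interchangeable.
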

\begin{proof}
Let $\alpha^*$ and $\mu^*$ be an optimal solution to the dual LP of agents in $A_v$.
Then $\alpha^* q_\ell+\mu^*\ge u_\ell$ for each $\ell$.
We consider the following two cases. 

First we consider the case when $\mu^*\ge u_1/2=\Omega(1/m^2)$.
This implies that goods outside of $G_v$ are  $\Omega(1/m^2)$-suboptimal for $A_v$ and thus, by Lemma \ref{fact:subopt-goods},
the total allocation of them to agents in $A_v$ is $O(m^{10})\cdot O(m^2 \eps)<1$ .
As a result,  
 $y_1+y_2+y_3\ge 5m^{10}-1$ from which the claim follows. 

Next consider the case when $\mu^*<u_1/2$. By Lemma \ref{simplelemma1} ($\alpha^*\le 1$) we have
  $q_\ell\ge \Omega(1/m^2)$ for every~$\ell$.
This implies that agents with zero utilities to $G_v$ can be allocated only
  $n\cdot O(n m^2 \eps) < 1$ units of $G_v$ given that they are $\Omega(1/nm^2)$-suboptimal by Lemma \ref{heheclaim1}.
On the other hand, by Fact \ref{basicfact} the allocation to agents outside $A_v$ with positive utilities for
  $G_v$ is at most $O(m^3)$.
So all the rest of $G_v$ must be allocated to $A_v$ and the claim follows. %
\end{proof}

	Now that we have $y_\ell\ge m^{10}-O(m^3)$ for all $\ell\in \{1,2,3\}$, we proceed to prove (\ref{heheeqeq}). 
	Let $(\alpha^*,\mu^*)$ denote an optimal solution to the dual LP for $A_v$.
By Lemma~\ref{fact:subopt-goods} and taking $\delta = 20\eps$,
  we have 
\begin{equation}\label{blablabla}
u_\ell\le \alpha^*q_\ell +\mu^*\le u_\ell+\delta,\quad\text{for all $\ell\in \{1,2,3\}$.}
\end{equation}  
If this were not the case (i.e. the second inequality is violated for some $\ell$), then goods in $G_{v,\ell}$ are
  $\delta$-suboptimal to $A_v$ and their total allocation to agents in $A_v$
  can be no more than 
$
5m^{10}\cdot 2\eps/\delta=m^{10}/2,
$
a contradiction.

Combining (\ref{blablabla}) and $u_2=(3u_1+u_3)/4$, we have
$$
\alpha^*\left(\frac{3q_1+q_3}{4}\right)+\mu^*-\delta\le \alpha^*q_2+\mu^*
\le \alpha^*\left(\frac{3q_1+q_3}{4}\right)+\mu^*+\delta.
$$
Using $\alpha^*\ge 1/(20n)$ from Lemma \ref{heheclaim1}, we have
	\begin{align}\label{hehehehehehe1}
	q_2 = \frac{3q_1 + q_3}{4} \pm O(n\eps) . 
	\end{align}
	
	Next, using Corollary \ref{secondcoro} and Corollary \ref{closecoro} we have
$$
5m^{10}(1-O(1/n^2))\le q_1y_1+q_2y_2+q_3y_3\le 5m^{10}(1+\eps+1/n^2).
$$	
Plugging in $y_1=m^{10}\pm O(m^3)$ and $y_2,y_3=2m^{10}\pm O(m^3)$ and (\ref{hehehehehehe1}), we have
$ 
q_1+q_3=2\pm O(1/m^7).
$ 
Together with (\ref{hehehehehehe1}) again we obtain
%
$$q_3=2-q_1\pm O(1/m^7)\quad\text{and}\quad q_2=(1+q_1)/2\pm O(1/m^7).$$
	
	Finally we give an upper bound on $q_1$. We first note that $q_1<q_3$; otherwise
	  goods in $G_{v,1}$ are $\Omega(1)$-suboptimal to agents in $A_v$, contradicting with $y_1=m^{10}\pm O(m^3)$.
	  Using (\ref{blablabla}) we have
	$$
	\mu^*\le \frac{u_1q_3-u_3q_1+ O(\delta)}{q_3-q_1}.
	$$
But when $q_1\ge 1/m^2+1/m^6$ (and thus, $q_3\le 2-1/m^2 $), the nominator of the RHS is
$$
u_1q_3-u_3q_1\le \frac{1}{2m^2-1}\cdot \left(2-\frac{1}{m^2}\right)-1\cdot \left(\frac{1}{m^2}+\frac{1}{m^6}\right)\le -\frac{1}{m^6}.
$$
So we have $\mu^*<0$, in contradiction with $\mu^*\ge 0$ by Lemma \ref{simplelemma1}.
\end{proof}

Next we prove Lemma \ref{lem:demand}:

\begin{proof}[Proof of Lemma \ref{lem:demand}]
We use the same notation from the proof of the last lemma.

First given that $q_2$ and $q_3$ are $\Omega(1)$, the total allocation of
  $G_{v,2}$ to agents with zero utility on $G_{v,2}$ is at most $n\cdot O(n\eps)$ using Lemma \ref{fact:subopt-goods}; 
  the same applies to $G_{v,3}$.

  Suppose that $x^+(G_v,\overline{A}_v)\le S_v-1$.
Because $G_v$ contains $5m^{10}+S_v$ goods while $A_v$ contains only $5m^{10}$ agents, 
  for $G_v$ to be fully sold out,
   the total allocation of $G_{v,1}$ to agents with zero utility 
  on $G_{v,1}$ must be $1-o_n(1)$. This implies that $q_1\le 1/n^2$ since otherwise,
  the total allocation for $G_{v,1}$ is at most $n\cdot O(n^3\eps)=o_n(1)$, using $\eps=1/n^5$.


Next, suppose $x^+(G_v,\overline{A}_v)\ge S_v+1$. 
Given that there are $5m^{10}+S_v$ goods in $G_v$ and $5m^{10}$ agents in $A_v$,
  there must be an agent in $A_v$ who is allocated 
  at least $1/(5m^{10})$-unit of goods outside of $G_v$ (for which it has zero utility).
Since such goods are $\mu^*$-suboptimal,
  we have $\mu^*\le 5m^{10}\eps$.
On the other hand, recall (\ref{blablabla}) with 
   $\delta=20\eps$.
We have $\alpha^* (q_1+q_3)+2\mu^*=u_1+u_3\pm 2\delta$ and thus,
  $$\alpha^*=\frac{u_1+u_3}{2}\left(1\pm O\left(\frac{1}{m^7}\right)\right)$$
  using Lemma \ref{lem:variable-price}.
Then $q_1=(u_1-\mu^*\pm \delta)/\alpha^* = 1/m^2 \pm O(1/m^{9})$. 
\end{proof}

\subsection{Analysis of Edge Gadgets}\label{sec:edgegadgets}

In this subsection we prove Lemma \ref{lem:edgedemand}.
Let $(x,p)$ be an $\eps$-approximate HZ equilibrium of $M_H$~and $e=(u,v)\in E$ be an edge in $H$.
We work on agents in 
  $A_{e,1},A_{e,2,\ell},A_{e,3,\ell}$ and $A_{e,4,\ell}$ to understand their allocations of goods
  with non-zero utilitites. 
We start with agents in $A_{e,1}$:




 



\begin{figure}[!ht]
	\centering
	\includegraphics[width=0.8\textwidth]{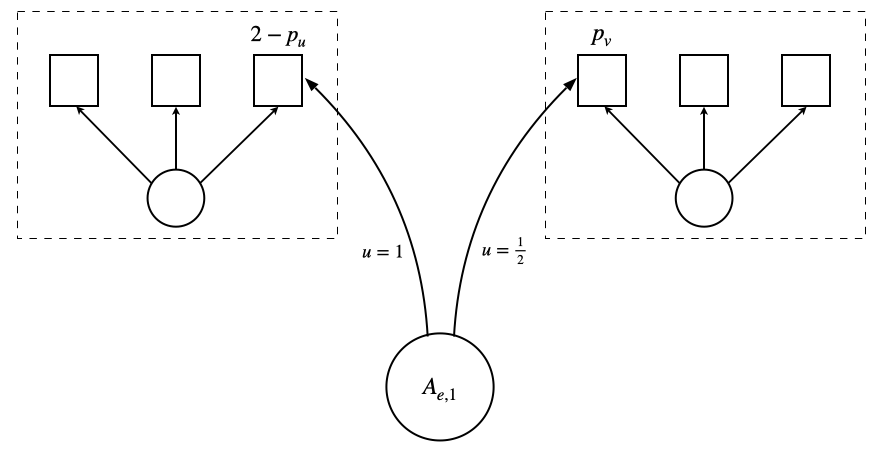}
	 \vspace{0.15cm}\caption{Edge gadget: Agents $A_{e,1}$}
\end{figure}




\begin{lemma}
	\label{lem:gadget1}
The allocation of $G_{u,3}$ and $G_{v,1}$ to each agent in $A_{e,1}$ is given by  
$$\frac{1}{2} + \frac{p(G_{u,1}) - p(G_{v,1})}{4}+ O\left(\frac{1}{m^4}\right)\quad\text{and}\quad
 \frac{1}{2} + \frac{p(G_{v,1}) - p(G_{u,1})}{4} + O\left(\frac{1}{m^4}\right),$$
 respectively. 
\end{lemma}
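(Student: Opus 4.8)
The plan is to analyze the optimization LP for a single agent $i$ in group $A_{e,1}$, who has utility $1$ for goods in $G_{u,3}$ and utility $1/2$ for goods in $G_{v,1}$ (and utility $0$ for everything else). Let $q_3 := p(G_{u,3})$ and $r_1 := p(G_{v,1})$ denote the minimum prices of these two groups. By Lemma \ref{lem:variable-price} we know $q_3 = 2 - p(G_{u,1}) \pm O(1/m^7)$ and $r_1 = p(G_{v,1}) \le 1/m^2 + O(1/m^6)$; also the maximum price inside each group exceeds the minimum by at most $1/n^2$ by Corollary \ref{closecoro}. First I would pin down the agent's optimal value $\val_p(i)$. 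Since her only positive-utility goods are in $G_{u,3}$ (price $\approx 2 - p(G_{u,1}) \in [1, 2]$) and $G_{v,1}$ (price $\approx 0$), the dual LP is essentially a two-variable calculation: the optimal bundle splits the unit demand between a good of $G_{v,1}$ and a good of $G_{u,3}$ so as to spend exactly $1$ dollar (spending less is wasteful here since $G_{u,3}$ has the higher utility-per-unit and costs more). Solving $a \cdot 0 + b \cdot q_3 = 1$, $a + b = 1$ gives $b = 1/q_3$ allocated to $G_{u,3}$ and $a = 1 - 1/q_3$ to $G_{v,1}$; but this must be checked against the alternative of buying $G_{v,1}$ together with a free zero-price good, which yields value $1/2$, versus $1/q_3 + (1/2)(1 - 1/q_3)$. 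Since $q_3 < 2$, the former (spend on $G_{u,3}$) wins, so $\val_p(i) = 1/q_3 + (1/2)(1 - 1/q_3) = 1/2 + 1/(2q_3)$, and the corresponding dual optimum is $\mu^* = 0$ (zero-price goods are not $\delta$-suboptimal only if $\mu^* = u_{i,\ell} = 0$), $\alpha^* = 1/(2q_3)$ — wait, more carefully $\alpha^* q_3 + \mu^* = 1$ and $\alpha^* \cdot 0 + \mu^* \ge 1/2$, so actually $\mu^* = 1/2$, $\alpha^* = 1/(2q_3)$; I would recompute this cleanly in the proof.

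Next I would identify which goods are $\delta$-suboptimal for agent $i$, with $\delta$ chosen as a suitable polynomial in $1/m$ (something like $\delta = 1/m^3$ or $\delta = \Theta(\eps/\cdot)$ — I would tune it so that the error terms come out as $O(1/m^4)$). The key point: goods in $G_{u,3}$ at the minimum price $q_3$ are exactly optimal (tight dual constraint), goods in $G_{v,1}$ at the minimum price $r_1$ are exactly optimal, and every other good — including higher-priced goods within $G_{u,3}$ and $G_{v,1}$, which are $\Omega(1/n^2)$-suboptimal by Corollary \ref{closecoro}, and all zero-utility goods other than the free one — is $\delta$-suboptimal. Then Lemma \ref{fact:subopt-goods} says the total allocation in $x_i$ to $\delta$-suboptimal goods is at most $2\eps/\delta = O(1/m^4)$ for the right choice of $\delta$. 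Hence, up to an additive $O(1/m^4)$, agent $i$'s bundle consists only of $G_{u,3}$-goods at price $q_3$ and $G_{v,1}$-goods at price $r_1$. Denote by $D_3$ and $D_1$ the allocations to these two, so $D_3 + D_1 = 1 \pm O(1/m^4)$ (from constraint 2 and the suboptimal-mass bound) and $q_3 D_3 + r_1 D_1 = 1 \pm O(1/m^4)$ — here I use Corollary \ref{secondcoro} (since $\val_p(i) \le 0.9$, which follows from $q_3 \ge 5/3$ by Lemma \ref{haheha22}, giving $\val_p(i) = 1/2 + 1/(2q_3) \le 1/2 + 3/10 = 0.8$) to get the cost is $1 - O(1/n^2) \le \cdot \le 1 + \eps$, plus the cost of the $G_{v,1}$-portion is negligible anyway since $r_1 = O(1/m^2)$.

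Then it is pure linear algebra: from $q_3 D_3 + r_1 D_1 = 1 \pm O(1/m^4)$ and $D_3 + D_1 = 1 \pm O(1/m^4)$ we get $D_3 (q_3 - r_1) = 1 - r_1 \pm O(1/m^4)$, so $D_3 = (1 - r_1)/(q_3 - r_1) \pm O(1/m^4)$. Now substitute $q_3 = 2 - p(G_{u,1}) \pm O(1/m^7)$ and $r_1 = p(G_{v,1})$, both of which lie in a controlled range ($q_3$ bounded away from $0$, in fact $q_3 \ge 5/3$; and $r_1 = O(1/m^2)$), and do a first-order Taylor expansion of $(1-r_1)/(q_3 - r_1)$ around $q_3 = 2$, $r_1 = 0$: $D_3 = 1/2 + (p(G_{u,1}) - p(G_{v,1}))/4 + O(1/m^4)$, where the $O(1/m^4)$ absorbs the quadratic terms $p(G_{u,1})^2$, $p(G_{v,1})^2$, $p(G_{u,1})p(G_{v,1})$ (each $O(1/m^4)$ since $p(G_{u,1}) = O(1/m^2)$) and the $O(1/m^7)$ slack. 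This is exactly the claimed first formula; then $D_1 = 1 - D_3 \pm O(1/m^4) = 1/2 + (p(G_{v,1}) - p(G_{u,1}))/4 + O(1/m^4)$, the second formula.

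The main obstacle I anticipate is bookkeeping the error terms: I must choose $\delta$ so that $2\eps/\delta = O(1/m^4)$ while $\delta$ itself is small enough that "$\delta$-suboptimal" still captures all the goods I want to exclude (in particular that higher-price goods within the two groups, which are only $\Omega(1/n^2)$-suboptimal — not $\Omega(1/m^4)$-suboptimal — get excluded; since $n \gg m$ this is fine, $1/n^2 \ll 1/m^4$, but it must be stated). I also need to double-check the degenerate possibility that agent $i$'s optimal bundle does \emph{not} spend the full dollar or does not touch $G_{u,3}$ at all — this is where I must use that $q_3 \le 2 - \Omega(1)$ strictly (from Lemma \ref{lem:variable-price} with $p(G_{u,1}) \le 1/m^2$, actually $q_3$ could be as large as $2 - O(1/m^6)$, hmm — so I should instead argue directly that buying $G_{u,3}$ at price $q_3 < 2$ beats buying two units' worth... let me just say: value per dollar of $G_{u,3}$ is $1/q_3 > 1/2$ which exceeds the value per dollar $1/2$ achievable by mixing $G_{v,1}$ with a free good only when... actually $1/q_3 > 1/2 \iff q_3 < 2$, true, so $G_{u,3}$ is strictly the best "purchasable" good and the agent buys as much of it as the budget allows, i.e.\ spends the whole dollar; the remaining $1 - 1/q_3$ unit is filled with the cheapest good, namely $G_{v,1}$ at price $r_1 = O(1/m^2)$, not a free good, since $1/q_3 < 1$ leaves room and $G_{v,1}$ is still optimal at that price). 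This case analysis, done carefully, is the crux; everything after it is routine.
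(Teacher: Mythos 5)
Your proposal is correct and takes essentially the same route as the paper's own proof: solve the agent's dual LP (both utility constraints tight, giving $\alpha^*\approx 1/4$, $\mu^*\approx 1/2$), use the suboptimality lemma to confine the bundle to $G_{u,3}\cup G_{v,1}$, combine the unit-allocation and unit-cost equations via Corollaries \ref{maincoro}, \ref{secondcoro} and \ref{closecoro}, and first-order expand $(1-p_v)/(2-p_u-p_v)$ to get the stated $O(1/m^4)$ formulas. The only slips are cosmetic: Corollary \ref{closecoro} upper-bounds within-group price gaps rather than making the pricier group members suboptimal (their $\le 1/n^2$ discrepancy is simply absorbed into the error of the cost equation, as your displayed equations already do), and the worry about $q_3<2$ is moot since both dual constraints are tight whenever $q_3>1$, so the conclusion is unaffected.
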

\begin{proof}
The dual LP for each agent in $A_{e,1}$ is to minimize $\alpha+\mu$ subject to the following constraints: $\alpha\ge 0$;
$\alpha p_j + \mu \geq 1$ for $j \in G_{u,3}$;
$\alpha p_j + \mu \geq 1/2$ for   $j \in G_{v,1}$;
and $\alpha p_j + \mu \geq 0$ for  $j \notin G_{u,3} \cup G_{v,1}$.
The constraints for the minimum-priced goods in the groups
$G_{u,3}$, $G_{v,1}$
dominate the constraints for the others goods in the groups
(since $\alpha \geq 0$),
hence these constraints 
are equivalent to 
\begin{equation}\label{hah}
\alpha\cdot p(G_{u,3})+\mu\ge 1\quad\text{and}\quad 
\alpha\cdot p(G_{v,1})+\mu \ge 1/2.
\end{equation}
The constraint for the good $j$ with price $0$ yields $\mu \geq 0$,
and this subsumes the constraints for all $j \notin G_{u,3} \cup G_{v,1}$.
Thus, the dual LP is to minimize $\alpha+\mu$ subject to $\alpha\ge 0$, $\mu \geq 0$, and (\ref{hah}) above.
The optimal solution $\alpha^*, \mu^*$ satisfies the (\ref{hah}) as equalities.
Thus, solving the dual LP we get 
$$
\alpha^*=\frac{1}{2(p(G_{u,3})-p(G_{v,1}))}\approx \frac{1}{4}\quad\text{and}\quad \mu^*=\frac{p(G_{u,3})-2p(G_{v,1})}{2(p(G_{u,3})-p(G_{v,1}))}\approx\frac{1}{2}.
$$
Consider an agent in $A_{e,1}$ and write
$x_u$ and $x_v$ respectively to denote her allocation of goods from $G_{u,3}$ and $G_{v,1}$.
Since every good that is not in $G_{u,3} \cup G_{v,1}$
  is $\mu^*$-suboptimal for the agent, 
  it follows from Corollary \ref{maincoro} and Corollary \ref{closecoro} that  
$$
x_u+x_v=1\pm O(\eps)\quad\text{and}\quad p(G_{u,3})x_u+p(G_{u,1})x_v=1\pm O(1/n^2).
$$
From Lemma \ref{lem:variable-price},
$p(G_{u,3})=2-p(G_{u,1})\pm O(1/m^7)$.
For ease of notation, let us use $p_v $ to denote $p(G_{v,1})$
and $p_u$ for $p(G_{u,1})$. 
Solving the above equations for $x_u$ and $x_v$  gives us
	\begin{align*}
	x_v = \frac{1 - p_u}{2 - p_u - p_v} \pm O\left(\frac{1}{m^7}\right) \quad \text{and} \quad
	x_u = \frac{1 - p_v}{2 - p_u - p_v} \pm O\left(\frac{1}{m^7}\right).
	\end{align*}
	Since $p_u, p_v \in [0, 1/m^2 \pm O(1/m^6)]$, by performing the first order approximation, we have
	\begin{align*}
 \frac{1 - p_u}{2 - p_u - p_v} 
	=  \frac{(1 - p_u)(2+p_u + p_v)}{(2 - p_u - p_v)(2+p_u+p_v)}  
	=  \frac{2 + p_v - p_u - p_u(p_u+p_v)}{4 - (p_u + p_v)^2}  
	= \frac{2+p_v - p_u}{4} \pm O\left(\frac{1}{m^4}\right).
	\end{align*}
The result for $x_u$ can be shown similarly.
\end{proof}

 

Now we work on agents in $A_{e,2,\ell}$ for each $\ell\in [m]$:

\begin{figure}[!ht]
	\centering
	\includegraphics[width=\textwidth]{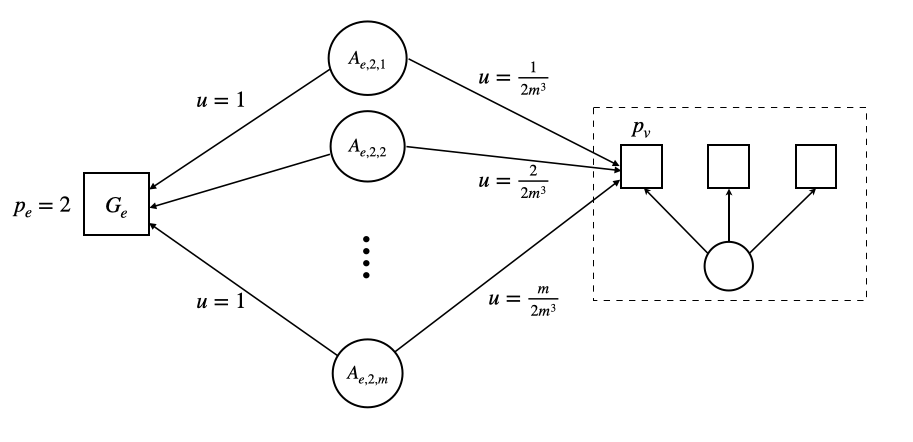}
  \vspace{-0.8cm}\caption{Edge gadget: Agents $A_{e,2,\ell}$ with $\ell\in [m]$.}
\end{figure}




\begin{lemma}
	\label{lem:gadget2}
For each $\ell\in [m]$, the allocation of goods in $G_{v,1}$ 
  to each agent in $A_{e,2,\ell}$ is $O( m^4\eps )$ if $\ell/m^3\le p(G_{v,1})-1/m^4$ and 
  $0.5 \pm O(1/{m^2})$
  if $\ell/m^3\ge p(G_{v,1})+ 1/m^4$.
\end{lemma}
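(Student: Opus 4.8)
The plan is to mimic the proof of Lemma~\ref{lem:gadget1}. Fix $e=(u,v)$ and an agent $i\in A_{e,2,\ell}$, and write $P=p(G_e)$, $q=p(G_{v,1})$, and $t=\ell/(2m^3)$ for $i$'s utility for $G_{v,1}$; note that the threshold $\ell/m^3$ in the statement equals $2t$. By Lemma~\ref{lem:goode} we have $P=2\pm O(1/m^4)$, by Lemma~\ref{lem:variable-price} we have $0\le q\le 1/m^2+O(1/m^6)$, and $t\le 1/(2m^2)$. Exactly as in Lemma~\ref{lem:gadget1}, for $\alpha\ge 0$ the constraints of the non-minimum-priced goods of $G_e$ and $G_{v,1}$ are dominated, and the price-$0$ good gives $\mu\ge 0$ and subsumes all remaining constraints, so the dual LP of agent $i$ reduces to
\[
\min\ \alpha+\mu\quad\text{s.t.}\quad \alpha\ge 0,\ \ \mu\ge 0,\ \ \alpha P+\mu\ge 1,\ \ \alpha q+\mu\ge t.
\]

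I would next enumerate the vertices of this two-variable polyhedron. The behavior is governed by the sign of $Pt-q$ (equivalently, by how $q$ compares with $\approx 2t=\ell/m^3$): if $q\ge 2t$ then the vertex $(\alpha^*,\mu^*)=(1/P,0)$ is feasible and optimal; if $q\le 2t$ the optimum is the intersection of the $G_e$- and $G_{v,1}$-constraints, namely $\alpha^*=(1-t)/(P-q)$ and $\mu^*=(Pt-q)/(P-q)$, with $\mu^*>0$. The one quantitative point to nail down — and the part I expect to be the real obstacle — is a uniform lower bound on $\mu^*$ in the second regime: when moreover $q\le 2t-1/m^4$, using $P=2\pm O(1/m^4)$ and $t\le 1/(2m^2)$ one gets $Pt-q\ge 2t-O(1/m^6)-(2t-1/m^4)=\Omega(1/m^4)$ and $P-q=\Theta(1)$, hence $\mu^*=\Omega(1/m^4)$. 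This is exactly where the $1/m^4$ slack in the hypothesis (and in the choice of the utility $\ell/(2m^3)$) is used: it must dominate the $O(1/m^6)$-scale perturbation of $P$ coming from Lemma~\ref{lem:goode}.

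For the first case of the lemma, $\ell/m^3\le q-1/m^4$ (i.e.\ $q\ge 2t+1/m^4$): here $\mu^*=0$ and $\alpha^*=1/P=\tfrac12\pm O(1/m^4)$, so for the minimum-priced good $j\in G_{v,1}$,
\[
\alpha^* p_j+\mu^*-t=\frac{q}{P}-t=\frac{1}{2}(q-2t)\pm O\!\left(\frac1{m^6}\right)=\Omega\!\left(\frac1{m^4}\right),
\]
and every other good of $G_{v,1}$ is even more suboptimal since it has larger price. Thus every good of $G_{v,1}$ is $\Omega(1/m^4)$-suboptimal for agent~$i$, and Lemma~\ref{fact:subopt-goods} bounds its total allocation of $G_{v,1}$ by $2\epsilon/\Omega(1/m^4)=O(m^4\epsilon)$, as claimed.

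For the second case, $\ell/m^3\ge q+1/m^4$ (i.e.\ $q\le 2t-1/m^4$): the minimum-priced goods of $G_e$ and $G_{v,1}$ are exactly dual-optimal and, by Corollary~\ref{closecoro}, all other goods of these two groups are only $O(1/n^2)$-suboptimal, while every good outside $G_e\cup G_{v,1}$ is $\mu^*$-suboptimal with $\mu^*=\Omega(1/m^4)\gg 1/n^2$. Since every agent of $M_H$ has optimal value at most $0.9$, Corollary~\ref{maincoro} applies with $\delta=\mu^*$; writing $x_e$ and $x_v$ for agent~$i$'s total allocations of $G_e$ and $G_{v,1}$, it gives $x_e+x_v=1\pm O(m^4\epsilon)$ and, after absorbing the $O(1/n^2)$ intra-group price spread of Corollary~\ref{closecoro}, $P x_e+q x_v=1\pm O(1/n^2)$. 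Eliminating $x_e$ and substituting $P=2\pm O(1/m^4)$ and $q=O(1/m^2)$ yields
\[
x_v=\frac{P-1\pm O(1/n^2)}{P-q}=\frac{1\pm O(1/m^4)}{2\pm O(1/m^2)}=\frac12\pm O\!\left(\frac1{m^2}\right),
\]
which is the asserted bound. All $\epsilon$- and $1/n^2$-level error terms here are negligible against $1/m^2$ because $n$ is taken enormous relative to $m$, so once the vertex analysis and the bound $\mu^*=\Omega(1/m^4)$ are in hand, the rest is routine bookkeeping.
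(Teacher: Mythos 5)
Your proposal is correct and follows essentially the same route as the paper: reduce the dual LP to the two group constraints plus $\alpha,\mu\ge 0$, split on which constraints are tight at the optimum according to whether $p(G_{v,1})$ exceeds or falls below $\ell/m^3$ by the $1/m^4$ slack, show $\Omega(1/m^4)$-suboptimality of the relevant goods via Lemma~\ref{lem:goode} and Lemma~\ref{fact:subopt-goods}, and in the second case solve the two allocation equations from Corollary~\ref{maincoro} and Corollary~\ref{closecoro} to get $x_v=\tfrac12\pm O(1/m^2)$. The quantitative details ($\mu^*=\Omega(1/m^4)$, the error bookkeeping with $\eps=1/n^5\ll$ any power of $1/m$) match the paper's argument.
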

\begin{proof}
The dual LP of each agent in $A_{e,2,\ell}$ is to minimize $\alpha+\mu$ subject to the following constraints: 
$\alpha\ge 0$; $\mu\ge 0$;
$\alpha\cdot p(G_e)+\mu\ge 1$; 
$\alpha\cdot p(G_{v,1})+\mu\ge \ell/2m^3$.
(We have simplified the dual LP following similar arguments used at the beginning of the proof for the previous lemma.)
The optimal solution  $(\alpha^*,\mu^*)$ now has two cases.

If $p(G_{v,1})\ge \ell/m^3+1/m^4$, 
  the optimal solution $(\alpha^*,\mu^*)$ is 
  $\alpha^*=1/p(G_e)$ and $\mu^*=0$. Thus,  
$$
\alpha^*\cdot p(G_{v,1})+\mu^*=\frac{p(G_{v,1})}{p(G_e)} \ge \frac{\ell}{2m^3}+\Omega\left(\frac{1}{m^4}\right)   
$$
using Lemma \ref{lem:goode}.
So goods in $G_{v,1}$ are $\Omega(1/m^4)$-suboptimal to the agent and it follows directly from
  Lemma \ref{fact:subopt-goods} that the agent is allocated at most $O(m^4 \eps)$ units of $G_{v,1}$.

If $p(G_{v,1})< \ell/m^3-1/m^4$,
  the optimal solution is given by
$$
\alpha^*=\frac{1-\ell/(2m^3)}{p(G_e)-p(G_{v,1})}\approx \frac{1}{2}\quad\text{and}\quad 
\mu^*=\frac{(\ell/2m^3)p(G_e)-p(G_{v,1})}{p(G_e)-p(G_{v,1})}\ge \Omega\left(\frac{1}{m^4}\right).
$$
So goods not in $G_e \cup G_{v,1}$ are $\mu^*$-suboptimal.
Let $x_e$ and $x_v$ be respectively the allocation of goods in $G_e$ and $G_{v,1}$ to the agent.
It follows from Corollary \ref{maincoro} and Corollary \ref{closecoro} that 
\begin{align*}
x_e+x_v&=1\pm O(m^4\eps)\quad\text{and}\quad
p(G_e)x_e+p(G_{v,1})x_v =1\pm O(1/n^2).
\end{align*}
	Solving the above equations, and recalling  
	$p(G_e) = 2 \pm O(1/m^4)$ and $p(G_{v,1}) = O(1/m^2)$, we have 
	\begin{align*}
	x_v = \frac{p(G_e)-1}{p(G_e)-p(G_{v,1})} \pm O\left(\frac{1}{n^2}\right) = \frac{2+p(G_{v,1})}{4}  \pm O\left(\frac{1}{m^{4}}\right). 
	\end{align*}
	This finishes the proof of the lemma.
\end{proof}


\begin{lemma}
	\label{lem:gadget2-com}
The total allocation of goods in $G_{v,1}$ to all
  agents in $A_{e,2,\ell}$, $\ell\in [m]$, is $$3m\big(1 - m^2 p(G_{v,1})\big) \pm O(1).$$
\end{lemma}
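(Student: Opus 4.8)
The plan is to sum the per-agent allocation formula from Lemma~\ref{lem:gadget2} over all groups $A_{e,2,\ell}$, $\ell\in[m]$, and observe that the threshold at $\ell/(2m^3)$ versus $p(G_{v,1})$ partitions the index set $[m]$ into "small-$\ell$" indices that contribute essentially nothing and "large-$\ell$" indices that each contribute $\approx 3$ (since each group has $6$ agents, each getting $0.5\pm O(1/m^2)$). First I would set $p:=p(G_{v,1})$ and recall from Lemma~\ref{lem:variable-price} that $p\le 1/m^2+O(1/m^6)$, so $m^3 p\le m+O(1/m^3)$; in particular the critical index $k:=\lfloor 2m^3 p\rfloor$ satisfies $0\le k\le m$ (up to the lower-order slack), which is what makes the count come out to roughly $m-k \approx m - 2m^3 p$... wait, let me recompute: the target is $3m(1-m^2p)$, so the number of "large" groups should be about $m(1-m^2p)\cdot\frac{3}{\,?\,}$. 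Let me instead keep it abstract: I will define $k$ to be the number of indices $\ell\in[m]$ with $\ell/(2m^3)\ge p+1/m^4$ — equivalently $\ell\ge 2m^3(p+1/m^4)$ — and show $k = m - 2m^3 p \pm O(1)$ using $\ell$ ranges over $[m]$ and $2m^3 p\le m+O(1/m^3)$; then the "small" indices number $m-k-O(1) = 2m^3 p\pm O(1)$...

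Hmm, that gives total allocation $6k(0.5\pm O(1/m^2)) + (\text{small indices})\cdot O(m^4\eps) + (\text{at most }O(1)\text{ boundary groups})\cdot O(1)$, and $6k\cdot 0.5 = 3k = 3m - 6m^3 p\pm O(1) = 3m(1 - 2m^2 p)\pm O(1)$. That does not match $3m(1-m^2 p)$. So I must have the threshold scaling wrong: the agents in $A_{e,2,\ell}$ have utility $\ell/(2m^3)$ for $G_{v,1}$, and the comparison in Lemma~\ref{lem:gadget2} is between $\ell/m^3$ and $p(G_{v,1})$ — note it's $\ell/m^3$, not $\ell/(2m^3)$, in the hypothesis of that lemma. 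So the "large" condition is $\ell/m^3\ge p+1/m^4$, i.e. $\ell\ge m^3 p + m^{-1}$, so $k = m - m^3 p\pm O(1)$, giving $3k = 3m - 3m^3 p \pm O(1) = 3m(1-m^2 p)\pm O(1)$. Good — so the key arithmetic point is that the threshold index in $[m]$ is $m^3 p\pm O(1)$, and I should be careful that the hypothesis of Lemma~\ref{lem:gadget2} uses $\ell/m^3$.

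Concretely, the steps: (1) Let $p=p(G_{v,1})$ and let $S_{\mathrm{lo}}=\{\ell\in[m]:\ell/m^3\le p-1/m^4\}$, $S_{\mathrm{hi}}=\{\ell\in[m]:\ell/m^3\ge p+1/m^4\}$, and $S_{\mathrm{mid}}=[m]\setminus(S_{\mathrm{lo}}\cup S_{\mathrm{hi}})$. (2) Bound $|S_{\mathrm{mid}}|\le O(1)$ since the window has width $2/m^4$ in the variable $\ell/m^3$, i.e. width $2/m$ in $\ell$. Wait, width $2/m^4$ in $\ell/m^3$ means width $2m^3/m^4 = 2/m < 1$ in $\ell$, so $|S_{\mathrm{mid}}|\le 1$; fine, $O(1)$. (3) Compute $|S_{\mathrm{hi}}| = \#\{\ell\in[m]:\ell\ge m^3 p + m^{-1}\} = m - \lceil m^3 p + m^{-1}\rceil + 1 = m - m^3 p \pm O(1)$, using that $m^3 p \in [0, m+O(1/m^3)]$ from Lemma~\ref{lem:variable-price} so the count is a genuine nonnegative integer in $[0,m]$ up to $O(1)$. (4) For $\ell\in S_{\mathrm{lo}}$, each of the $6$ agents gets $O(m^4\eps)$ of $G_{v,1}$, contributing $O(m^5\eps)=O(1/n^4)$ total over all such $\ell$. (5) For $\ell\in S_{\mathrm{hi}}$, each of the $6$ agents gets $0.5\pm O(1/m^2)$, contributing $6|S_{\mathrm{hi}}|(0.5\pm O(1/m^2)) = 3|S_{\mathrm{hi}}|\pm O(|S_{\mathrm{hi}}|/m^2) = 3|S_{\mathrm{hi}}|\pm O(1)$ since $|S_{\mathrm{hi}}|\le m$. (6) For $\ell\in S_{\mathrm{mid}}$, crudely bound each agent's allocation of $G_{v,1}$ by $1$, contributing $O(1)$ since $|S_{\mathrm{mid}}|\le 1$ and each group has $6$ agents. (7) Add up: total $= 3|S_{\mathrm{hi}}| \pm O(1) = 3(m - m^3 p) \pm O(1) = 3m(1 - m^2 p)\pm O(1)$, as claimed.

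The main obstacle — really the only subtlety — is step (6): I must be sure that for the boundary groups $A_{e,2,\ell}$ with $\ell\in S_{\mathrm{mid}}$ the allocation of $G_{v,1}$ to each agent is genuinely $O(1)$, which is trivial (it's at most the agent's total allocation, which is $1$), so no delicate case analysis is needed there; and in step (3) I must confirm that $|S_{\mathrm{hi}}|$ genuinely lies in $\{0,1,\dots,m\}$ so that writing it as $m - m^3 p\pm O(1)$ is legitimate even when $p$ is near its extremes $0$ or $1/m^2+O(1/m^6)$ — this follows directly from the bound $0\le p\le 1/m^2+O(1/m^6)$ in Lemma~\ref{lem:variable-price}, which forces $m^3 p\in[0,m+O(1/m^3)]$. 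Everything else is elementary summation, so the proof is short.
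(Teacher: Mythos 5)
Your proposal is correct and follows essentially the same route as the paper's proof: partition the indices $\ell\in[m]$ at the threshold $\ell/m^3\approx p(G_{v,1})$ given by Lemma~\ref{lem:gadget2}, note that only $O(1)$ boundary groups fall in the ambiguous window (the paper uses the two groups $\ell^*,\ell^*+1$ with $\ell^*=\lfloor m^3p(G_{v,1})\rfloor$, you use the at-most-one index in a window of width $2/m$), bound those trivially by $1$ per agent, and sum $6\cdot(0.5\pm O(1/m^2))$ over the $m-m^3p(G_{v,1})\pm O(1)$ high indices plus a negligible $O(m^5\eps)$ from the low ones. Your self-correction about the threshold being $\ell/m^3$ (not $\ell/(2m^3)$) is exactly the right reading of Lemma~\ref{lem:gadget2}, so no gap remains.
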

\begin{proof}
Let $\ell^*=\lfloor m^3 p(G_{v,1})\rfloor\in [0:m]$. 
Then for each $\ell \in [m]$ with $\ell\le \ell^*-1$, the allocation of $G_{v,1}$ to each agent in $A_{e,2,\ell}$ is 
  $O(m^4\eps)$;
for each $\ell\in [m]$ with $\ell\ge \ell^*+2$, the allocation is $0.5 \pm O(1/m^2)$.
For $\ell^*$ and $\ell^*+1$, the allocation is between $0$ and $1$ trivially.

Since there are 6 agents in each group $A_{e,2,\ell}$, the total allocation to all agents in all groups is 
$$
6\sum_{\ell\in [\ell^*-1]} O(m^4\eps)+6\sum_{\ell\in [\ell^*+2:m]} \left(\frac{1}{2}\pm O\left(\frac{1}{m^2}\right)\right)+ O(1)
=6\sum_{\ell\in [\ell^*+2:m]} \frac{1}{2} \pm O(1).
$$
The number of summands in the last expression is 
$$\max(0,m-\ell^*-1)= 
m \left( 1-m^2 p(G_{v,1})\right)\pm O(1) $$
and the lemma follows.
\end{proof}

The following lemma for agents in $A_{e,3,\ell}$ can be proved similarly:

\begin{lemma}\label{lem:similar}
The allocation of goods in $G_{u,1}$ to all
  agents in $A_{e,3,\ell}$, $\ell\in [m]$, is $$4m\big(1 - m^2 p(G_{u,1})\big) \pm O(1).$$
\end{lemma}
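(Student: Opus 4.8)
The claim is the exact analogue of Lemma~\ref{lem:gadget2-com}, with the groups $A_{e,3,\ell}$ (which have utility $1$ for $G_e$ and utility $\ell/(2m^3)$ for $G_{u,1}$) playing the role of the groups $A_{e,2,\ell}$, and with the good group $G_{u,1}$ playing the role of $G_{v,1}$; the only numerical differences are that each group $A_{e,3,\ell}$ has $8$ agents instead of $6$, which turns the coefficient $3m$ into $4m$. So the plan is to mirror the two-step argument behind Lemma~\ref{lem:gadget2-com}.

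\textbf{Step 1 (per-agent allocation).} First I would establish the analogue of Lemma~\ref{lem:gadget2}: for each $\ell\in[m]$, the allocation of goods in $G_{u,1}$ to an agent in $A_{e,3,\ell}$ is $O(m^4\eps)$ when $\ell/m^3\le p(G_{u,1})-1/m^4$, and is $0.5\pm O(1/m^2)$ when $\ell/m^3\ge p(G_{u,1})+1/m^4$. The proof is verbatim the proof of Lemma~\ref{lem:gadget2} with $u,v$-roles swapped: write down the (simplified) dual LP with constraints $\alpha\ge 0$, $\mu\ge 0$, $\alpha\,p(G_e)+\mu\ge 1$, $\alpha\,p(G_{u,1})+\mu\ge \ell/(2m^3)$; split into the two cases according to the sign of $p(G_{u,1})-\ell/m^3$; in the first case $(\alpha^*,\mu^*)=(1/p(G_e),0)$ makes $G_{u,1}$ be $\Omega(1/m^4)$-suboptimal, so Lemma~\ref{fact:subopt-goods} bounds the allocation by $O(m^4\eps)$; in the second case solve the $2\times 2$ system coming from Corollary~\ref{maincoro} and Corollary~\ref{closecoro}, namely $x_e+x_u=1\pm O(m^4\eps)$ and $p(G_e)x_e+p(G_{u,1})x_u=1\pm O(1/n^2)$, and use $p(G_e)=2\pm O(1/m^4)$ (Lemma~\ref{lem:goode}) and $p(G_{u,1})=O(1/m^2)$ (Lemma~\ref{lem:variable-price}) to get $x_u=(2+p(G_{u,1}))/4\pm O(1/m^4)=0.5\pm O(1/m^2)$. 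This is exactly the statement already flagged as ``can be proved similarly'' just before Lemma~\ref{lem:similar}, so I would either prove it explicitly or simply cite the parallel argument.

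\textbf{Step 2 (summation).} Set $\ell^*=\lfloor m^3 p(G_{u,1})\rfloor\in[0:m]$. By Step~1, for $\ell\le \ell^*-1$ each agent in $A_{e,3,\ell}$ gets $O(m^4\eps)$ of $G_{u,1}$, and for $\ell\ge \ell^*+2$ each gets $0.5\pm O(1/m^2)$; the two leftover indices $\ell^*,\ell^*+1$ contribute an allocation trivially in $[0,1]$. Since each group $A_{e,3,\ell}$ has $8$ agents, summing over $\ell\in[m]$ gives total allocation $8\sum_{\ell\in[\ell^*+2:m]}\tfrac12\pm O(1)$; note $\sum_{\ell\le \ell^*-1}8\cdot O(m^4\eps)=O(m)\cdot O(m^4/n^5)=o(1)$ since $n$ is superpolynomial in $m$, and the $O(1/m^2)$ error terms sum to $O(1/m)$, both absorbed in the $O(1)$. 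The number of summands is $\max(0,m-\ell^*-1)=m-m^3 p(G_{u,1})\pm O(1)=m(1-m^2 p(G_{u,1}))\pm O(1)$, so the total allocation is $4m(1-m^2 p(G_{u,1}))\pm O(1)$, as claimed.

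\textbf{Main obstacle.} There is essentially no new obstacle: the entire content is a relabeling of Lemmas~\ref{lem:gadget2} and~\ref{lem:gadget2-com}, so the only thing to be careful about is bookkeeping of the error terms and the change of constant ($6\to 8$ agents, hence $3m\to 4m$), together with confirming that all the ambient facts invoked (Lemma~\ref{lem:variable-price}, Lemma~\ref{lem:goode}, Corollaries~\ref{maincoro} and~\ref{closecoro}, Lemma~\ref{fact:subopt-goods}) apply to $G_{u,1}$ exactly as they did to $G_{v,1}$, which they do since $u$ is just another vertex of $H$ and $e=(u,v)$ gives $A_{e,3,\ell}$ access to $G_{u,1}$.
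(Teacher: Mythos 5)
Your proposal is correct and matches the paper's intent exactly: the paper gives no separate proof for this lemma, stating only that it "can be proved similarly" to Lemmas~\ref{lem:gadget2} and~\ref{lem:gadget2-com}, and your two-step argument (per-agent dual-LP case analysis with $G_{u,1}$ in place of $G_{v,1}$, then summation over $\ell$ with $8$ agents per group giving the coefficient $4m$) is precisely that parallel argument with the bookkeeping carried out correctly.
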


Finally we work on agents in $A_{e,4,\ell}$, $\ell\in [2m]$:

\begin{figure}[!ht]
	\centering
	\includegraphics[width=\textwidth]{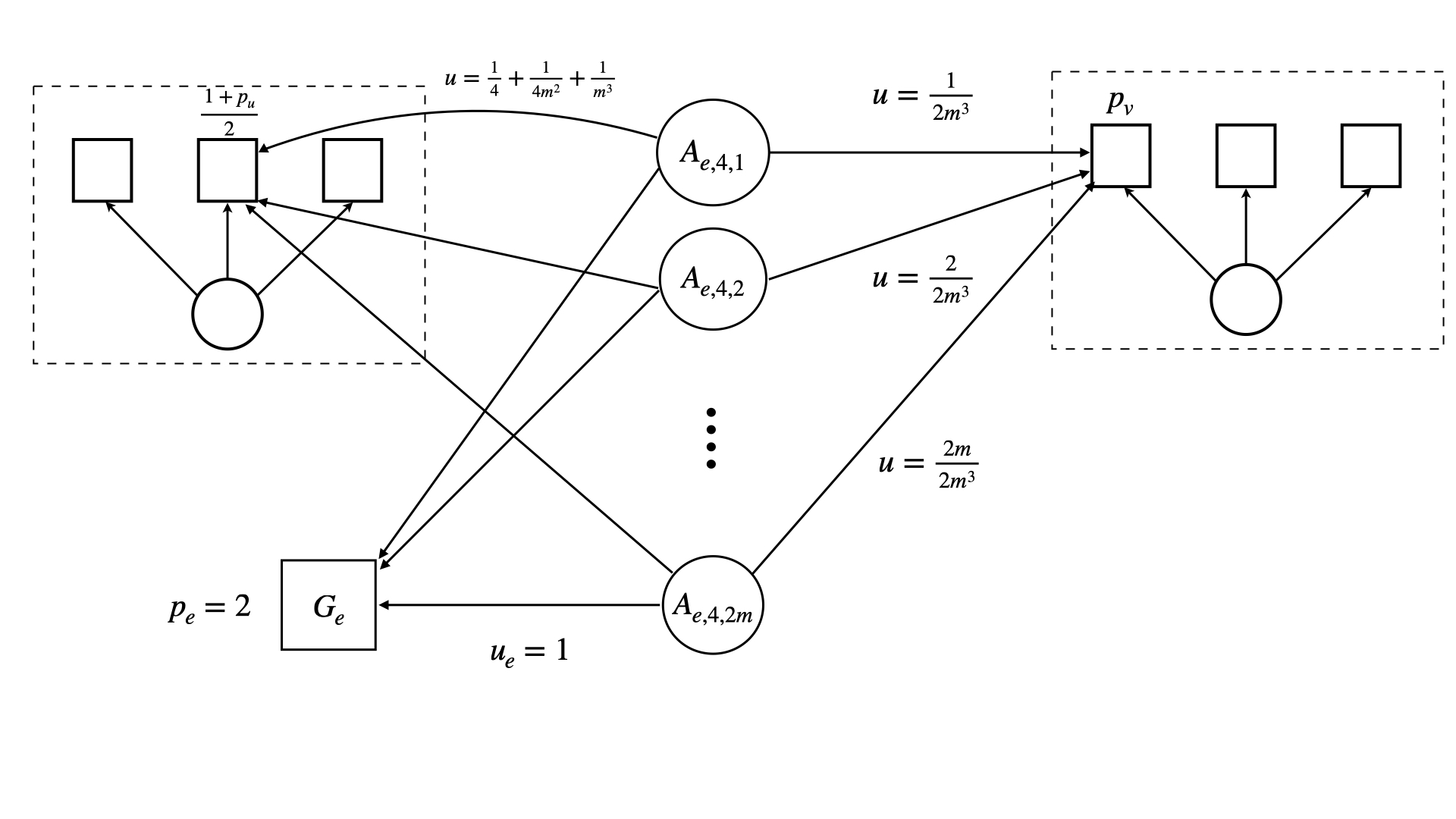}
  \vspace{-1.8cm}\caption{Edge gadget: Agents $A_{e,4,\ell}$ with $\ell\in [2m]$.}
\end{figure}

\begin{lemma}
	\label{lem:gadget3}
For each $\ell\in [2m]$, we have the following two cases for an agent in $A_{e,4,\ell}$:
\begin{flushleft}\begin{enumerate}
\item The allocation of $G_{u,2}$ is $O(m^3\eps)$ and the allocation of $G_{v,1}$ is 
  $0.5\pm O(1/m^2)$, if $\ell$ satisfies
\begin{equation}\label{eq:case1}
\frac{\ell}{2m^3}\ge \frac{p(G_{v,1})}{2}-\frac{p(G_{u,1})}{3}+\frac{1}{3m^2}+\frac{2}{m^3}.
\end{equation}
\item The allocation of $G_{v,1}$ is $O(m^3\eps)$ and the allocation of $G_{u,2}$ is
  $2/3\pm O(1/m^2)$, if $\ell$ satisfies
\begin{equation}\label{eq:case2}
\frac{\ell}{2m^3}\le \frac{p(G_{v,1})}{2}-\frac{p(G_{u,1})}{3}+\frac{1}{3m^2}.
\end{equation}
\end{enumerate}\end{flushleft}
\end{lemma}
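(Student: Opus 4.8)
\noindent\emph{Proof plan.} The plan is to proceed exactly as in the proofs of Lemmas~\ref{lem:gadget1} and~\ref{lem:gadget2}: write down and simplify the dual LP of an agent $i\in A_{e,4,\ell}$, solve it explicitly in each of the two regimes of $\ell$ carved out by (\ref{eq:case1}) and (\ref{eq:case2}), identify which of the three relevant groups of goods becomes suboptimal for $i$, and then pin down the remaining two allocations via Corollaries~\ref{maincoro} and~\ref{closecoro}. First I would simplify the dual LP: since $\alpha\ge 0$, within each group the constraint for its minimum-priced good dominates, and the zero-priced good forces $\mu\ge 0$, so the dual LP of $i$ is to minimize $\alpha+\mu$ subject to $\alpha,\mu\ge 0$ and
$$\alpha\, p(G_e)+\mu\ge 1,\qquad \alpha\, p(G_{v,1})+\mu\ge \frac{\ell}{2m^3},\qquad \alpha\, p(G_{u,2})+\mu\ge \frac14+\frac1{4m^2}+\frac1{m^3}.$$
I would then substitute the available price estimates $p(G_e)=2\pm O(1/m^4)$ (Lemma~\ref{lem:goode}), $p(G_{u,2})=(1+p(G_{u,1}))/2\pm O(1/m^7)$, and $p(G_{u,1}),p(G_{v,1})\in[0,\,1/m^2+O(1/m^6)]$ (Lemma~\ref{lem:variable-price}); in particular $p(G_e)-p(G_{u,2})=(3-p(G_{u,1}))/2\pm O(1/m^4)$ and $p(G_e)-p(G_{v,1})=2-p(G_{v,1})\pm O(1/m^4)$.

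Next, in Case~1 (hypothesis (\ref{eq:case1})) I would take for $(\alpha^\ast,\mu^\ast)$ the intersection of the first two constraint lines, $\alpha^\ast=(1-\ell/(2m^3))/(p(G_e)-p(G_{v,1}))$ and $\mu^\ast=1-\alpha^\ast p(G_e)=\ell/(2m^3)-\alpha^\ast p(G_{v,1})$; in Case~2 (hypothesis (\ref{eq:case2})) I would take the intersection of the first and third lines, $\alpha^\ast=(\tfrac34-\tfrac1{4m^2}-\tfrac1{m^3})/(p(G_e)-p(G_{u,2}))$ and $\mu^\ast=1-\alpha^\ast p(G_e)$. In each case $\alpha^\ast=\Theta(1)>0$ is clear, and one checks $\mu^\ast=\Omega(1/m^3)>0$: under (\ref{eq:case1}) this holds because that hypothesis forces $\ell/(2m^3)\ge p(G_{v,1})/2+\Omega(1/m^3)$, and under (\ref{eq:case2}) it is the direct estimate $\mu^\ast=\tfrac1{3m^2}+\tfrac4{3m^3}-\tfrac{p(G_{u,1})}{3}\pm O(1/m^4)$ combined with $p(G_{u,1})\le 1/m^2+O(1/m^6)$. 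Since $p(G_{v,1})\le 1\le p(G_e)$ (resp.\ $p(G_{u,2})\le 1\le p(G_e)$) and the candidate has both coordinates nonnegative, it minimizes $\alpha+\mu$ over the relaxed LP obtained by dropping the third (resp.\ second) constraint; hence to conclude that it is the true optimum it suffices to verify it satisfies that dropped constraint.

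This verification is the crux. A first-order expansion should give, in Case~1,
$$\alpha^\ast p(G_{u,2})+\mu^\ast=\frac14+\frac{p(G_{u,1})}{4}+\frac{3\ell}{8m^3}-\frac{3p(G_{v,1})}{8}\pm O\!\Big(\frac1{m^4}\Big),$$
and feeding in the threshold from (\ref{eq:case1}) shows this exceeds $\frac14+\frac1{4m^2}+\frac1{m^3}$ by $\Omega(1/m^3)$; symmetrically, in Case~2, $\alpha^\ast p(G_{v,1})+\mu^\ast=\frac{p(G_{v,1})}{2}-\frac{p(G_{u,1})}{3}+\frac1{3m^2}+\frac4{3m^3}\pm O(1/m^4)$, which by (\ref{eq:case2}) lies at least $\Omega(1/m^3)$ above $\ell/(2m^3)$. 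Thus in Case~1 the group $G_{u,2}$ is $\Omega(1/m^3)$-suboptimal for $i$, and in Case~2 the group $G_{v,1}$ is; in both cases every good outside $G_e\cup G_{v,1}\cup G_{u,2}$ is $\mu^\ast=\Omega(1/m^3)$-suboptimal.

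Finally I would read off the allocations using the suboptimality bounds. In Case~1, Lemma~\ref{fact:subopt-goods} bounds the allocation of $G_{u,2}$ to $i$ by $O(m^3\eps)$, and Corollaries~\ref{maincoro} and~\ref{closecoro} (with $\delta=\Theta(1/m^3)$) give $x_e+x_v=1\pm O(m^3\eps)$ and $p(G_e)x_e+p(G_{v,1})x_v=1\pm O(1/n^2)$ for the allocations $x_e,x_v$ of $G_e,G_{v,1}$, which solve to $x_v=\frac{p(G_e)-1}{p(G_e)-p(G_{v,1})}\pm O(1/n^2)=0.5\pm O(1/m^2)$ (the $O(1/m^2)$ coming from $p(G_{v,1})=O(1/m^2)$). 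Case~2 is identical with $G_{v,1}$ and $G_{u,2}$ interchanged, yielding allocation $O(m^3\eps)$ of $G_{v,1}$ and $x_{u,2}=\frac{p(G_e)-1}{p(G_e)-p(G_{u,2})}\pm O(1/n^2)=2/3\pm O(1/m^2)$. The main obstacle is the first-order verification above: the utility value $\frac14+\frac1{4m^2}+\frac1{m^3}$ must be exactly the right choice, and the expansion must be carried to $O(1/m^4)$ precision so that the $\Theta(1/m^3)$ slack deliberately inserted into (\ref{eq:case1}) and (\ref{eq:case2}) dominates the errors inherited from the price estimates of Lemmas~\ref{lem:variable-price} and~\ref{lem:goode}; the LP simplification, the feasibility-of-the-relaxation argument, and the final $2\times2$ solve are all routine and mirror the earlier gadget lemmas.
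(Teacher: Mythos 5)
Your proposal is correct and follows essentially the same route as the paper: simplify the dual LP to the three group-level constraints, identify which two are tight in each regime (verifying $\mu^*=\Omega(1/m^3)$ and that the remaining group is $\Omega(1/m^3)$-suboptimal via a first-order expansion against the thresholds in (\ref{eq:case1})--(\ref{eq:case2})), then apply Lemma~\ref{fact:subopt-goods} and Corollaries~\ref{maincoro}, \ref{closecoro} to solve the resulting $2\times 2$ system for the allocations. Your explicit drop-a-constraint relaxation argument is just a slightly more spelled-out justification of the paper's geometric claim about which vertex of the dual feasible region is optimal, and your expansions and final values match the paper's.
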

\begin{proof}
The dual LP for an agent in $A_{e,4,\ell}$ is to minimize $\alpha +\mu $ subject to $\alpha \ge 0$, $\mu \ge 0$,
\begin{equation}\label{eq:lp}
\alpha \cdot p(G_e)+\mu \ge 1,\quad
\alpha \cdot p(G_{u,2})+\mu \ge \frac{1}{4}+\frac{1}{4m^2}+\frac{1}{m^3}\quad\text{and}\quad
\alpha \cdot p(G_{v,1})+\mu \ge \frac{\ell}{2m^3}.
\end{equation}

Geometrically, the feasible space of the LP is a region of
the $\alpha -\mu$ plane, bounded from below by a a piecewise linear
convex curve whose segments correspond to (some of)
the above constraints. The optimum is achieved at a vertex 
of the boundary curve, where the constraints corresponding to the two adjacent segments are tight.
We consider the two cases.

First we consider the case of (\ref{eq:case1}).
In this case, the first and third inequalities of (\ref{eq:lp}) are tight at optimality,
and the optimal solution $(\alpha^*,\mu^*)$ is 
$$
\alpha^*=\frac{1-(\ell/2m^3)}{p(G_e)-p(G_{v,1})}\quad\text{and}\quad
\mu^*=\frac{(\ell/2m^3)\cdot p(G_e)-p(G_{v,1})}{p(G_e)-p(G_{v,1})}.
$$
Using (\ref{eq:case1}) and $p(G_e) \approx 2$, 
$p(G_{u,1}) \leq 1/m^2 \pm O(1/m^6)$, it is easy to check
that $\mu^* = \Omega(1/m^3)$.
Therefore, all goods that are not in $G_e \cup G_{v,1} \cup G_{u,2}$
are $\Omega(1/m^3)$-suboptimal.
Furthermore, also all goods in $G_{u,2}$ are  $\Omega(1/m^3)$-suboptimal, i.e.,
$$
\alpha^*\cdot p(G_{u,2})+\mu^*-\left(\frac{1}{4}+\frac{1}{4m^2}+\frac{1}{m^3}\right) \ge \Omega\left(\frac{1}{m^3}\right),
$$
because multiplying the left-hand-side by $p(G_e)-p(G_{v,1})\approx 2$ we get
\begin{align*}
&\frac{1}{2}-\frac{\ell}{4m^3}+\frac{p(G_{u,1})}{2}+\frac{\ell}{m^3}-p(G_{v,1})
-\frac{1}{2}-\frac{1}{2m^2}-\frac{2}{m^3}+\frac{p(G_{v,1})}{4}\pm O\left(\frac{1}{m^4}\right)\\[0.8ex]
&\hspace{0.5cm}=\frac{3\ell}{4m^3}+\frac{p(G_{u,1})}{2}-\frac{3p(G_{v,1})}{4}-\frac{1}{2m^2}-\frac{2}{m^3}\pm O\left(\frac{1}{m^4}\right)
\ge \Omega\left(\frac{1}{m^3}\right).
\end{align*}
So it follows from Lemma \ref{fact:subopt-goods} that the total allocation to the agent (and the cost) of all goods that are not
in $G_e$ and $G_{v,1}$ is $O(m^3\eps)$.

Using $x_e$ and $x_v$ to denote the allocation of goods in $G_e$ and $G_{v,1}$ to the agent respectively,
  it follows from Corollary \ref{maincoro} and Corollary \ref{closecoro} that
	\begin{align*}
		x_e + x_v &= 1 \pm O(m^3\eps ) \quad\text{and}\quad
		p(G_{v,1}) x_v + p(G_{e})x_e  = 1 \pm O(1/n^2)
	\end{align*}
	and using $p(G_e) \approx 2$ and $p(G_{v,1}) = O(m^2)$,
	one can derive that $x_v = 0.5 \pm O(1/m^2)$.

	Similarly, one can show that when (\ref{eq:case2}) holds,
then the first two inequalities of (\ref{eq:lp}) are tight at the optimal solution
$\alpha^*, \mu^*$. We have again $\mu^* = \Omega(1/m^3)$,
and thus all goods outside $G_e \cup G_{u,2} \cup G_{v,1}$ are $\Omega(1/m^3)$-suboptimal.
Furthermore, in this case all goods in $G_{v,1}$ are
also $\Omega(1/m^3)$-suboptimal. Hence the total allocation
(and cost) of these goods to the agent is  $O(m^3\eps)$.
We can set up similarly as in the previous case 
the equations for the allocations
$x_e, x_u$ of the goods in $G_e$, $G_{u,2}$ respectively to the agent,
and solve them to get $x_u = 2/3 \pm O(1/m^2)$.
This concludes the proof.
\end{proof}


\begin{lemma}
	\label{lem:gadget3-com}
The total allocation of goods in $G_{v,1}$ to agents in $A_{e,4,\ell}$, $\ell\in [2m]$, is  
	\[
	18m\left(\frac{2}{3} + \frac{m^2}{3}\cdot p(G_{u,1}) - \frac{m^2}{2}\cdot p(G_{v,1})\right) + O(1),
	\]
	 and the total allocation of goods in  $G_{u,2}$ to these agents is 
	 \[
	 18m\left(\frac{4}{9} - \frac{4m^2}{9}\cdot p(G_{u,1}) + \frac{2m^2}{3}\cdot p(G_{v,1})\right) + O(1).
	 \]
\end{lemma}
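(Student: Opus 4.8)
The strategy is to sum the per-agent allocations from Lemma \ref{lem:gadget3} over all $\ell \in [2m]$, treating the two regimes (\ref{eq:case1}) and (\ref{eq:case2}) separately, exactly as in the proof of Lemma \ref{lem:gadget2-com}. First I would identify the threshold index. Set
\[
\ell^* = \left\lfloor 2m^3\left(\frac{p(G_{v,1})}{2} - \frac{p(G_{u,1})}{3} + \frac{1}{3m^2}\right)\right\rfloor.
\]
For $\ell \ge \ell^* + $ (a constant accounting for the $2/m^3$ gap in (\ref{eq:case1}) and the rounding), the agents in $A_{e,4,\ell}$ fall in Case 1: each contributes $0.5 \pm O(1/m^2)$ to $G_{v,1}$ and $O(m^3\eps)$ to $G_{u,2}$. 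For $\ell \le \ell^* - 1$, they fall in Case 2: each contributes $O(m^3\eps)$ to $G_{v,1}$ and $2/3 \pm O(1/m^2)$ to $G_{u,2}$. The $O(1)$-many boundary values of $\ell$ near $\ell^*$ contribute at most $O(1)$ in total (since each group has $18$ agents), and get absorbed into the error term.

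Next I would count the summands. There are $18$ agents per group. The number of $\ell \in [2m]$ in Case 1 is $2m - \ell^* \pm O(1)$, and since $\ell^* = 2m^3\big(\tfrac{p(G_{v,1})}{2} - \tfrac{p(G_{u,1})}{3} + \tfrac{1}{3m^2}\big) \pm O(1) = m^3 p(G_{v,1}) - \tfrac{2m^3}{3}p(G_{u,1}) + \tfrac{2m}{3} \pm O(1)$, this count equals $2m - m^3 p(G_{v,1}) + \tfrac{2m^3}{3}p(G_{u,1}) - \tfrac{2m}{3} \pm O(1) = \tfrac{4m}{3} + \tfrac{2m^3}{3}p(G_{u,1}) - m^3 p(G_{v,1}) \pm O(1)$. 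Multiplying by $18 \times 0.5 = 9$ gives the $G_{v,1}$ allocation from Case 1 agents; the Case 2 agents contribute only $18 \cdot \ell^* \cdot O(m^3\eps) = O(1)$ (using $\ell^* = O(m^3)$ and $\eps = 1/n^5$), and similarly the multiplicative $O(1/m^2)$ errors across $O(m)$ summands contribute $O(1)$. This yields $9\big(\tfrac{4m}{3} + \tfrac{2m^3}{3}p(G_{u,1}) - m^3 p(G_{v,1})\big) + O(1) = 18m\big(\tfrac{2}{3} + \tfrac{m^2}{3}p(G_{u,1}) - \tfrac{m^2}{2}p(G_{v,1})\big) + O(1)$, matching the first claimed formula.

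For the $G_{u,2}$ allocation I would do the symmetric computation: only Case 2 agents contribute substantially, with $2/3 \pm O(1/m^2)$ each, and the number of Case 2 summands is $\ell^* \pm O(1) = m^3 p(G_{v,1}) - \tfrac{2m^3}{3}p(G_{u,1}) + \tfrac{2m}{3} \pm O(1)$. Multiplying by $18 \cdot \tfrac{2}{3} = 12$ gives $12\big(m^3 p(G_{v,1}) - \tfrac{2m^3}{3}p(G_{u,1}) + \tfrac{2m}{3}\big) + O(1) = 18m\big(\tfrac{4}{9} - \tfrac{4m^2}{9}p(G_{u,1}) + \tfrac{2m^2}{3}p(G_{v,1})\big) + O(1)$, as claimed. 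The Case 1 agents' contribution to $G_{u,2}$ is $O(m^3\eps)$ per agent, totalling $O(1)$.

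**Main obstacle.** The only real subtlety is bookkeeping the error terms: I must check that (i) the $O(1)$-many boundary indices near $\ell^*$, plus the rounding in $\ell^*$ and the gap constants in (\ref{eq:case1})–(\ref{eq:case2}), together contribute only $O(1)$; (ii) the multiplicative $O(1/m^2)$ errors in the per-agent allocations, summed over $O(m)$ values of $\ell$ with $O(1)$ agents each, stay $O(1)$; and (iii) the $O(m^3\eps)$ terms, summed over $O(m)$ groups of $O(1)$ agents, vanish into $O(1)$ since $\eps = 1/n^5$ and $n$ dominates every power of $m$. Each of these is routine given Lemma \ref{lem:gadget3}; there is no conceptual difficulty beyond careful accounting, and the computation closely parallels Lemma \ref{lem:gadget2-com}.
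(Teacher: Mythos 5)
Your proposal is correct and follows essentially the same route as the paper's proof: the same threshold index $\ell^*$ (your expression expands to the paper's $\lfloor m(\tfrac{2}{3}+m^2 p(G_{v,1})-\tfrac{2m^2}{3}p(G_{u,1}))\rfloor$), the same two-regime summation via Lemma \ref{lem:gadget3}, and the same counting and error bookkeeping, including absorbing the $O(1)$ boundary indices, the $O(1/m^2)$ per-agent errors, and the $O(m^3\eps)$ terms into the $O(1)$ slack. The only detail the paper mentions that you leave implicit is that $\ell^*\le \tfrac{5m}{3}\le 2m$ (via Lemma \ref{lem:variable-price}), which is needed so the count of Case-1 indices is meaningful; this is immediate and not a gap.
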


\begin{proof}
Let $\ell^*$ be chosen as
$$
\ell^*=\left\lfloor m\left(\frac{2 }{3}+m^2\cdot p(G_{v,1})-
\frac{2m^2 }{3}\cdot p(G_{u,1})\right)\right\rfloor \le \frac{5m}{3}.
$$
We start with goods in $G_{v,1}$. By Lemma~\ref{lem:gadget3}, the total allocation is $18$ times
$$
\sum_{\ell\in [\ell^*]} O(m^3\eps)+\sum_{\ell\in [\ell^*+6:2m]} 
  \left(0.5\pm O\left(\frac{1}{m^2}\right)\right)+ O(1)
  =\sum_{\ell\in [\ell^*+6:2m]} 
   0.5 \pm O(1).
$$
The number of terms in the last sum is 
$$
2m-m\left(\frac{2 }{3}+m^2\cdot p(G_{v,1})-
\frac{2m^2}{3}\cdot p(G_{u,1})\right)
\pm O(1)
$$
It follows that the total allocation is 
$$
18m\left(\frac{2}{3}+\frac{ m^2}{3}\cdot p(G_{u,1})-\frac{m^2}{2}\cdot p(G_{v,1})\right) \pm O(1).
$$

	
	Similarly, the total allocation of goods $G_{u,2}$ is 18 times
	$$
	\sum_{\ell\in [\ell^*]} \left(\frac{2}{3}\pm O\left(\frac{1}{m^2}\right)\right)
	+\sum_{\ell\in [\ell^*+6:2m]} O(m^3\eps)+ O(1)=\sum_{\ell\in [\ell^*]} \frac{2}{3} \pm O(1).
	$$
	The number of terms in the last sum is 
	$$
	m\left(\frac{2 }{3}+m^2\cdot p(G_{v,1})-
\frac{2m^2}{3} \cdot p(G_{u,1})\right)\pm O(1)
	$$
and the lemma follows.
\end{proof}

We are now ready to prove Lemma \ref{lem:edgedemand}:

\begin{proof}[Proof of Lemma \ref{lem:edgedemand}]
Combining Lemma~\ref{lem:gadget1}, Lemma~\ref{lem:gadget2-com}, Lemma \ref{lem:similar} and Lemma~\ref{lem:gadget3-com}, we have the total allocation of nonzero-utility goods in $G_v$ to agents in $A_e$ is 
\begin{align*}
&~48m^3 \left(\frac{1}{2} + \frac{p(G_{v,1})- p(G_{u,1})}{4}\right) + 18m\left( 
\frac{2}{3} + \frac{m^2}{3}\cdot p(G_{u,1}) - \frac{m^2}{2}
\cdot p(G_{v,1})\right) \\[0.6ex]
&~\hspace{3cm}+ 3m\big(1-m^2\cdot  p(G_{v,1})\big) \pm O(1)  \notag\\[0.8ex]
&\hspace{1cm}= -6m^3\cdot p(G_{u,1}) + 24m^3 +15m\pm O(1). 
\end{align*}
The total allocation of nonzero-utility goods in $G_u$ to $A_e$ is 
\begin{align}
&~48m^3 \left(\frac{1}{2} + \frac{p(G_{u,1}) - p(G_{v,1})}{4}\right) + 18m 
  \left(\frac{4}{9} - \frac{4m^2}{9}\cdot p(G_{u,1}) + \frac{2m^2}{3}\cdot p(G_{v,1})\right) \notag\\[0.6ex]
&~\hspace{4cm}+ 4m \big(1 - m^2\cdot p(G_{u,1})\big) \pm O(1)   \notag\\[0.8ex]
&\hspace{1cm}= 24m^3+ 12m \pm O(1). 
\end{align}
This finishes the proof of the lemma.
\end{proof}

\section{Hardness of Approximating Optimal Social Welfare}
\label{sec:np}

In this section we study the problem of 
  approximating the optimal social welfare (defined as the total utility of all agents)
  achievable by an HZ equilibrium.
For this purpose we study the following gap problem for a constant $\rho<1$:
  the input is an HZ market $M$ together with a parameter $\textsf{SW}$, and it is promised
  that the optimal social welfare achievable by an exact HZ equilibrium of $M$ is either at least $\textsf{SW}$
  or at most $\rho\cdot \textsf{SW}$.
The goal is to tell which case it is.
We show that there is no polynomial-time algorithm
  for the gap problem when $\rho>175/176$, assuming $\mathsf{NP}\ne \mathsf{P}$.

\begin{theorem}
	\label{thm:np-hard}
	Assuming $\mathsf{NP}\ne \mathsf{P}$, for any constant $\eps > 0$, there is no polynomial-time algorithm 
	for the gap problem when $\rho=({175}/{176}) + \eps$.
\end{theorem}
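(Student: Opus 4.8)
The plan is to reduce from a hard constraint-satisfaction problem whose optimization version has a known inapproximability gap — the natural candidate is \textsf{MAX-E3-SAT} (or \textsf{MAX-3-LIN}), for which it is NP-hard to distinguish satisfiable instances from those where at most a $(7/8+\eps)$-fraction of clauses can be satisfied (Håstad). The constant $175/176$ suggests the gap of the source problem is diluted by embedding the SAT gadget inside a larger HZ market where most of the social welfare comes from "easy" agents who always achieve their full utility, while a small but constant fraction of the welfare is controlled by a gadget whose total utility at equilibrium encodes the number of satisfied clauses. So $176 = $ (size of easy part) $+$ (size of gadget part) in appropriate units, and a $1/8$ loss on the gadget part scales down to a $1/176$ loss overall; I would choose the relative sizes of the two parts to make the arithmetic give exactly $175/176$.

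The key steps, in order, would be: (1) For each variable $x_i$ of the SAT instance, build a variable gadget — a small sub-market (two agents, two goods, as in the PPAD construction) whose equilibria come in two families corresponding to the truth values true/false, engineered so that in \emph{every} exact HZ equilibrium the gadget is in one of these two discrete states (no intermediate equilibria, by a careful choice of utilities so the optimal bundle is forced). (2) For each clause, build a clause gadget with agents whose optimal bundle, and hence realized utility, depends on the states of the three variable gadgets feeding into it: the agents in a clause gadget collectively achieve high utility iff at least one literal is set to satisfy the clause, and strictly lower utility otherwise. The agents must get their goods from the variable gadgets in a way that does not disturb the variable gadgets' two-state behavior — this is exactly the edge-gadget idea from Section \ref{sec:edgegadgets}, and I would reuse the "negative allocation via offsetting supply" trick and the discrete step-function agents. (3) Add a large block of "filler" agents and goods that are matched to each other at a fixed price and contribute a fixed large amount $W_0$ to social welfare regardless of the equilibrium. (4) Show completeness: if the SAT instance is satisfiable, the equilibrium realizing the satisfying assignment gives social welfare $\textsf{SW} = W_0 + (\text{all clauses satisfied})$. (5) Show soundness: in any HZ equilibrium, the variable gadgets pick out \emph{some} assignment, the clause-gadget welfare is at most $W_0 + (\text{number of clauses that assignment satisfies})$, and since no assignment satisfies more than a $(7/8+\eps')$-fraction, the total is at most $\rho\cdot\textsf{SW}$ with $\rho = 175/176 + \eps$ after choosing $W_0$ and the per-clause weight appropriately.

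The main obstacle I expect is soundness, specifically controlling the behavior of \emph{all} exact HZ equilibria. Unlike the PPAD proof, which only needs approximate equilibria to roughly track a threshold game, here I need that every exact equilibrium genuinely corresponds to a Boolean assignment — in particular the variable gadgets must have \emph{exactly} two equilibrium states with no continuum in between, and the clause gadgets must not create spurious equilibria where some clause gadget gets high utility "for free" without a satisfying literal. Ruling out these requires a tight analysis of the agents' LPs (the dual LP and complementary slackness from Section \ref{sec:pre}), showing the bundle allocations are pinned down exactly rather than just up to $O(1/\mathrm{poly}(n))$. A secondary technical point is verifying that adding the clause gadgets does not change the price structure of the variable gadgets — one must check the total demand each variable gadget sees is balanced regardless of the assignment, so prices (and hence the two-state dichotomy) are preserved. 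Finally I would double-check the counting that produces the precise constant $175/176$; the gadget sizes (number of agents per clause, supply of the filler block) have to be calibrated against Håstad's $7/8$ so that the diluted gap lands exactly there.
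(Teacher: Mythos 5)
You start from the right place (Håstad's $7/8$ gap for MAX-3SAT, variable gadgets plus clause gadgets), but the proposal defers exactly the step that constitutes the proof, and the specific plan for that step would not go through as stated. Your soundness hinges on variable gadgets that have \emph{exactly two} exact-equilibrium states, so that every equilibrium "picks out an assignment." You do not construct such a gadget, and the natural candidate does not have this property: the paper's variable gadget (which is precisely the disconnected-equilibria example of Appendix~\ref{sec:disconnectedEq}) has \emph{three} disconnected equilibrium price families, roughly $(4/5,8/5,0)$, $(0,8/5,4/5)$ and the symmetric $(2/3,4/3,2/3)$, in addition to possibly vacant gadgets. The paper's soundness argument does not eliminate the third state; it tolerates it, because all that matters is a price lower bound: if no literal of a clause has its good at price (near) $0$, then every literal-good the clause agent can buy costs at least about $2/3$, and the budget constraint alone caps that agent's utility at $\tfrac14\cdot 1+\tfrac34\cdot\tfrac56=\tfrac78$, versus $\tfrac12\cdot 1+\tfrac12\cdot\tfrac56=\tfrac{11}{12}$ when some literal-good is free. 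Your plan has no mechanism for handling intermediate or vacant gadget states, and ruling them out (your "main obstacle") is likely impossible for gadgets of this type, so the gap is real rather than a deferred technicality.

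Relatedly, your explanation of the constant is off: you attribute $175/176$ to dilution by a large filler block ("$176=$ easy part $+$ gadget part"), but in the paper there is no welfare-relevant filler --- all agents except the one designated agent per clause have utilities $O(1/K^2)$, so the welfare is carried entirely by the clause agents, and the constant comes from the per-clause dichotomy $11/12$ versus $7/8$: $\bigl(\tfrac78\cdot\tfrac{11}{12}+\tfrac18\cdot\tfrac78\bigr)\big/\tfrac{11}{12}=\tfrac{175}{176}$. A filler-based calibration could in principle hit the same number, but only if you first establish an exact "satisfied $\Rightarrow$ high utility, unsatisfied $\Rightarrow$ strictly lower utility in every equilibrium" dichotomy, which is the unproved core. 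Finally, importing the PPAD edge-gadget machinery (offset supplies, step-function agents, first-order approximations) is both unnecessary and unsupported here: the NP-hardness argument concerns exact equilibria, and the working construction needs only a single clause agent with utilities $1$ and $5/6$ plus a large group anchoring the clause good's price near $2$, analyzed directly through the agents' LPs.
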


\subsection{Construction}
We reduce from $\MAX~3\SAT$, which is hard to approximate better than ${7}/{8}$ \cite{haastad2001some}: Given a $3\SAT$ instance, it is NP-hard to distinguish the case that the formula is satisfiable from the
case that every truth assignment satisfies at most a fraction $\frac{7}{8}+\eps$ of the clauses, for any $\eps>0$.
Given a $3\SAT$ instance with $m$ clauses and $n$ variables, we construct the following HZ market. Throughout the proof, we fix $K = m^3$.

\paragraph{Creating Variable Gadget} We first introduce the variable gadget. For convenience, we only list non zero utilities. 
For each $i \in [n]$
\begin{enumerate}
	\item Create three groups of goods $G_{i, 1}$, $G_{i, 2}$, $G_{i,3}$, and $|G_{i,1}| = K$, $|G_{i,2}| = 2K$ and $|G_{i,3}| = K$.
	\item Create two groups of agents $A_{i,1}$, $A_{i,2}$, and $|A_{i,1}| = |A_{i,2}| = 2K$. 
	\item Agents in $A_{i,1}$ have utility $\frac{1}{2K^2}$ for $G_{i,1}$,  $\frac{1}{K^2}$ for $G_{i,2}$.  Agents in $A_{i,2}$ have utility $\frac{1}{2K^2}$ for $G_{i,3}$, $\frac{1}{K^2}$ for $G_{i,2}$.
\end{enumerate}

In an (exact) HZ equilibrium, all goods within a group have the same price. We use $p(G_{i, \ell})$ to denote the price, $\ell \in [3]$.

\paragraph{Creating Clause Gadget}

We next construct clause gadgets. For each $j \in [m]$, 
\begin{enumerate}
	\item Create a group $G_j$ of $K$ goods
	\item Create a group $A_{j, *}$ of $2K$ agents, who have utility $1/K^2$ for $G_{j}$.
	\item Create an agent $A_{j}$ with utility $1$ for $G_j$. It has utility $5/6$ for $G_{i, 1}$ if the $j$-th clause contains $x_i$ and utility $5/6$ for $G_{i, 3}$ if the $j$-th clause contains $\bar{x}_i$.
\end{enumerate}

\paragraph{Adding Dummy Goods}
Thus far, we have described $4Kn + Km$ goods and $4Kn + (2K+1)m$ agents. We add $(K +1)m$ extra dummy goods that have zero utilities for all agents. In a normalized (exact) HZ equilibrium, these goods have zero price.

\subsection{Proof of Correctness}

We provide the proof of completeness and soundness separately.

\paragraph{Completeness} Given a $3\SAT$ instance that has a satisfying assignment, we construct a HZ equilibrium with social welfare at least ${11}m/{12} - O(1/m^2)$. Fix a satisfying assignment.

We assign the $j$-th clause to the $\phi(j)$-th variable, if the latter satisfies the clause. If there are multiple such variables, we choose an arbitrary one. 
We set $\ell(j) = 1$ if the $j$-th clause contains $x_{\phi(j)}$, otherwise $\ell(j) = 3$. 
Let $s_i$ be the total number of clauses assigned to the $i$-th variable.
The equilibrium prices are as follows. 
\begin{enumerate}
	\item The price of dummy goods is $0$.
	\item The price of $G_j$ is $p(G_j) = \frac{2K+1}{K}$, $j \in [m]$.
	\item For variable gadget $i \in [n]$, if $x_i = 1$, then $(p(G_{i, 1}), p(G_{i, 3}), p(G_{i, 2})) = (0, \frac{8}{5}, \frac{4}{5})$,  otherwise, we have $(p(G_{i, 1}), p(G_{i, 2}), p(G_{i, 3})) = (\frac{4}{5}, \frac{8}{5}, 0)$. 
\end{enumerate}
Next, we specify the equilibrium allocation.
\begin{enumerate}
	\item Agents of $A_{j, *}$ take $\frac{2K^2}{2K+1}$ of $G_j$ and $\frac{2K^2 + 2K}{2K+1}$ of dummy goods, $j \in [m]$.
	\item Agent $A_j$ takes $\frac{K}{2K+1}$ of $G_j$ and $\frac{K+1}{2K+1}$ of $G_{\phi(j), \ell(j)}$, $j \in [m]$.
	\item If $x_i = 1$, then agents in $A_{i, 1}$ obtain $\frac{5K}{4}$ of $G_{i, 2}$, $\frac{3K}{4}$ of $G_{i, 1}$; agents in $A_{i,2}$ obtain $\frac{3K}{4}$ of $G_{i,2}$, $K$ of $G_{i,3}$, $\frac{K}{4} - s_j\cdot \frac{K+1}{2K+1}$ of $G_{i, 1}$ and $s_j\cdot \frac{K+1}{2K+1}$ of the dummy good, $i \in [n]$. If $x_i=0$, then we define the allocation symmetrically, switching the groups of agents $A_{i, 1}$ and $A_{i,2}$, and the groups of goods $G_{i, 1}$ and $G_{i, 3}$.
\end{enumerate}

One can verify that this is indeed a HZ equilibrium. Agent $A_j$ has utility  $\frac{K}{2K+1} + \frac{K+1}{2K+1}\cdot \frac{5}{6} = \frac{11K + 5}{12K+6}$, and hence, the social welfare is at least $m \cdot \frac{11K + 5}{12K+6} \geq \frac{11}{12}m - O(1/m^2)$.

\paragraph{Soundness} Consider any normalized HZ equilibrium $(x,p)$. We first characterize the equilibrium behaviour of variable gadgets. In an (exact) HZ equilibrium, we say a variable gadget is {\em vacant} if no agents outside of the gadget purchase goods inside the gadget, and we call other gadgets {\em non-vacant}. Loosely speaking, only non-vacant gadgets are of interest, as vacant gadgets do not interact with the rest of market and their utility is negligible.

\begin{lemma}
	\label{lem:np-variable}
	For any $i \in [n]$, suppose the $i$-th variable gadget is non-vacant. Then the equilibrium price is one of the following three cases.
	\begin{align}
	 &p(G_{i,1}) = \frac{4}{5} \pm O\left(\frac{1}{m^2}\right), \quad p(G_{i,2}) = \frac{8}{5} \pm O\left(\frac{1}{m^2}\right), \quad p(G_{i,3}) = 0 \pm O\left(\frac{1}{m^2}\right)\label{eq:price-case1}\\
	  \text{or}  \quad	& p(G_{i,1}) = 0 \pm O\left(\frac{1}{m^2}\right),\quad  p(G_{i,2}) = \frac{8}{5} \pm O\left(\frac{1}{m^2}\right), \quad p(G_{i,3}) = \frac{4}{5} \pm O\left(\frac{1}{m^2}\right) \label{eq:price-case2}\\
	\text{or}  \quad	& p(G_{i,1}) = \frac{2}{3} \pm O\left(\frac{1}{m^2}\right), \quad p(G_{i,2}) = \frac{4}{3} \pm O\left(\frac{1}{m^2}\right), \quad p(G_{i,3}) = \frac{2}{3} \pm O\left(\frac{1}{m^2}\right) \label{eq:price-case3}
	\end{align}
\end{lemma}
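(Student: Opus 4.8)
\textbf{Proof plan for Lemma \ref{lem:np-variable}.}
The plan is to mimic the analysis of the variable gadget in Section \ref{sec:gadgets}, but now exploiting the cleaner structure of an \emph{exact} equilibrium: within each group $G_{i,\ell}$ all goods have the same price, so I may write $q_\ell = p(G_{i,\ell})$ without error terms from Corollary \ref{closecoro}. First I would argue that, since the gadget is non-vacant, at least one good of the gadget is bought by an outside agent; only agent $A_j$ for some clause $j$ can do this (the other outside agents have zero utility for everything in the gadget), and $A_j$ only ever touches $G_{i,1}$ or $G_{i,3}$. I would then set up the dual LPs for $A_{i,1}$ and $A_{i,2}$. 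For $A_{i,1}$, whose positive utilities are $\tfrac{1}{2K^2}$ on $G_{i,1}$ and $\tfrac{1}{K^2}$ on $G_{i,2}$, the binding constraints (after reducing to minimum-priced goods, using $\min_j p_j = 0$ so $\mu\ge 0$) are $\alpha q_1 + \mu \ge \tfrac{1}{2K^2}$, $\alpha q_2 + \mu \ge \tfrac{1}{K^2}$; similarly for $A_{i,2}$ with $G_{i,3}, G_{i,2}$. A counting argument (each of $A_{i,1}, A_{i,2}$ has $2K$ agents, each $G_{i,\ell}$ has $K$ or $2K$ goods, and the total external demand is $O(m)$ units spread over $n$ gadgets) forces almost all of $G_{i,1}, G_{i,2}, G_{i,3}$ to be sold to $A_{i,1}\cup A_{i,2}$, hence the suboptimality slack of each of these groups for the relevant agent group is $O(1/K^2)\cdot(\text{something})$, and I would conclude the dual constraints above hold as near-equalities (up to $O(1/m^2)$ after clearing the $K^2$ denominators).

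Next I would extract the price relations. From $A_{i,1}$: either $\mu_{i,1}^* = 0$, which forces $q_1 = 0$ (up to error) and then $\alpha_{i,1}^* q_2 = \tfrac{1}{K^2}$ with $\alpha_{i,1}^* q_1 \le \tfrac{1}{2K^2}$ only constraining trivially — so I'd instead read off the ratio: solving the two tight constraints gives $q_2 - q_1 = \tfrac{1}{2K^2\alpha_{i,1}^*}$ and $q_1 = \tfrac{1}{2K^2\alpha_{i,1}^*} - \tfrac{\mu_{i,1}^*}{\alpha_{i,1}^*}$, i.e. $q_2 = 2q_1 + (\text{term})$ — more cleanly, eliminating $\alpha,\mu$ from the two tight constraints of $A_{i,1}$ yields a relation of the form "$q_2$ lies on the line through the two constraint points", which after scaling says $q_2 = 2q_1 \pm O(1/m^2)$ when $\mu^*=0$, or $q_2 - q_1$ proportional to $q_1$ in general. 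The honest way: both $A_{i,1}$ and $A_{i,2}$ constraints tight give $\alpha_{i,1}^*(q_2 - q_1) = \tfrac{1}{2K^2}$ and $\alpha_{i,2}^*(q_2 - q_3) = \tfrac{1}{2K^2}$; combined with the budget/spending identity for these agents (Corollary \ref{secondcoro}: each such agent spends $1 \pm o(1)$, and with $2K$ agents buying $\approx K$ units of their cheap good and $\approx K$ units of $G_{i,2}$) I get two more equations pinning $\alpha_{i,1}^* q_1 + \alpha_{i,1}^* q_2 \approx$ const, etc. Carrying this through, the whole system has exactly the three solution branches: (i) $q_1 = 0$ forces $q_3 = 4/5, q_2 = 8/5$; (ii) symmetrically $q_3 = 0$ forces $q_1 = 4/5, q_2 = 8/5$; (iii) neither $q_1$ nor $q_3$ is $0$, in which case both $A_j$-type external constraints and the internal symmetry force $q_1 = q_3 = 2/3$, $q_2 = 4/3$. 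The key structural input distinguishing the branches is: a variable group can have nonzero price only if it is "under demand pressure", which (since external demand comes only from clause agents $A_j$ valuing $G_{i,1}$ or $G_{i,3}$ at $5/6$) happens on the $x_i=1$ side, the $x_i=0$ side, or balanced.

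The one genuinely delicate step is justifying \emph{why} the prices land near the specific rational values $\{0, 2/3, 4/5, 4/3, 8/5\}$ rather than some continuum. I expect this to come from the normalization $\min_j p_j = 0$ together with the fact that $A_j$ (utility $1$ on $G_j$, utility $5/6$ on a variable good, and $0$ elsewhere) must find its purchase of $G_{i,\ell(j)}$ optimal: writing $A_j$'s dual constraints $\alpha' p(G_j) + \mu' \ge 1$ and $\alpha' q_{\ell} + \mu' \ge 5/6$ with $\mu'\ge 0$, and knowing $p(G_j) = 2 \pm o(1)$ (from the $2K$-agents-versus-$K$-goods count on $G_j$, exactly as in Lemma \ref{lem:goode}), optimality of buying $G_{i,\ell}$ forces that constraint tight, giving $\alpha' q_\ell + \mu' = 5/6$ with $\alpha' \cdot 2 + \mu' = 1$; this is one equation relating $q_\ell$ to the multiplier, and the complementary equation from the $A_{i,\cdot}$ side ($q_2 = 2 q_\ell$-type relation, i.e. $G_{i,2}$ being the "expensive" good at twice the utility) closes the system. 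So the main obstacle is bookkeeping the three coupled dual systems ($A_{i,1}$, $A_{i,2}$, and the clause agent $A_j$) simultaneously and checking that the only consistent near-solutions are the three listed; once the equations are written down it is linear algebra plus a case split on which of $q_1, q_3$ equals $0$. I would organize the proof as: (1) non-vacant $\Rightarrow$ some $A_j$ buys $G_{i,1}$ or $G_{i,3}$; (2) counting $\Rightarrow$ internal groups nearly fully sold internally, all dual constraints nearly tight; (3) solve the $A_{i,1}, A_{i,2}$ systems to get $q_2 \approx 2q_1$ when $q_1$ is the cheaper endpoint, and the symmetric statement; (4) use the $A_j$ tight constraint to pin the nonzero endpoint price; (5) case split on $q_1 = 0$, $q_3 = 0$, or both nonzero, yielding \eqref{eq:price-case1}, \eqref{eq:price-case2}, \eqref{eq:price-case3} respectively.
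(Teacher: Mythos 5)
Your overall scaffolding (dual LPs for $A_{i,1},A_{i,2}$, a counting argument, budget/market-clearing identities, a case split) is the same as the paper's, but the step you yourself flag as the crux --- step (4), using the clause agent $A_j$'s tight dual constraints to pin the nonzero endpoint price --- is wrong and would fail. The multipliers $(\alpha',\mu')$ of $A_j$ are two fresh unknowns, so the two tight constraints $\alpha'\, p(G_j)+\mu'=1$ and $\alpha' q_\ell+\mu'=5/6$ impose no restriction on $q_\ell$: with $p(G_j)\approx 2$ they are solvable with $\mu'\ge 0$ for every $q_\ell\in[0,5/3]$ (e.g.\ $q_\ell=4/5$ gives $\alpha'=5/36$, $\mu'=13/18$, which is exactly what happens in case \eqref{eq:price-case1}). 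The only way $A_j$'s dual could "pin" $q_\ell$ is if you additionally knew $\mu'=0$, but that would force $q_\ell=5/3$, which is not among the equilibrium values --- so this route cannot close the system; dual multipliers adapt to prices, not the other way around. In the paper the specific rationals come from a purely gadget-internal mechanism: (i) non-vacancy plus the count ``$4K$ gadget goods vs.\ $4K$ gadget agents'' forces some agents of $A_{i,1}$ (or symmetrically $A_{i,2}$) to buy outside zero-price zero-utility goods, hence $\mu^*=0$ for that group, $p(G_{i,1})>0$, and $p(G_{i,2})=2p(G_{i,1})$ (resp.\ $2p(G_{i,3})$); and (ii) budget exhaustion plus market clearing --- at most $m$ units of $G_{i,1}\cup G_{i,3}$ and none of $G_{i,2}$ go to outsiders, so the $4K$ dollars of $A_{i,1}\cup A_{i,2}$ buy essentially the whole gadget, giving $(K\pm O(m))\,p(G_{i,1})+2K\,p(G_{i,2})+(K\pm O(m))\,p(G_{i,3})=4K$, together with a second identity $(K\pm O(m))\,p(G_{i,2})+(K\pm O(m))\,p(G_{i,3})=2K$ when $A_{i,2}$ spends only inside. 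The case split is then on the support of $A_{i,2}$'s bundle (does it include outside zero-price goods; is $p(G_{i,3})$ zero or positive), and linear algebra on these internal equations yields exactly \eqref{eq:price-case1}--\eqref{eq:price-case3}. You mention the spending identities in passing, but you treat them as auxiliary and put the logical weight on the clause agent, so as written the plan does not establish the lemma.

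Two smaller points. First, since the lemma concerns exact equilibria, the approximate-equilibrium slack machinery you import (Corollary \ref{maincoro}/\ref{secondcoro}-style bounds) is unnecessary; what is actually needed, and what your sketch leaves implicit, is the support bookkeeping: which constraints are tight for which agent group, and in particular that $\mu^*=0$ for a group exactly when that group buys outside zero-price goods --- this is precisely what the non-vacancy hypothesis buys, and it is what distinguishes the branch where $p(G_{i,2})=2p(G_{i,3})$ holds from the branch where one must instead use $A_{i,2}$'s spending identity. Second, your step (1) claim that the outside purchaser must be a clause agent is neither needed nor immediate (a zero-utility outsider could in principle take a zero-priced gadget good); the paper only uses the counting consequence that some inside agent buys outside goods.
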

\begin{proof}
	We have $p(G_{i, 2}) > 1$, otherwise $G_{i, 2}$ is oversold. Since there are outside agents that purchase goods inside the gadget, we conclude that one of the agents in $A_{i,1}$, $A_{i,2}$ must purchase goods outside the gadget, i.e. those zero-price zero-utility goods. 
Assume that some agents in $A_{i,1}$ purchase such goods (the other case is symmetric.)

	It is easy to see then that $p(G_{i, 1}) > 0$, and therefore, the optimal bundles of agents in $A_{i, 1}$ contain $G_{i,1}, G_{i,2}$ and zero-price zero-utility goods.
	Hence, $p(G_{i,2}) = 2p(G_{i,1})$. We further divide into two cases based on the optimal bundle of $A_{i,2}$.
	
	First, suppose agents in $A_{i,2}$ purchase $G_{i, 2}$, $G_{i,3}$ and zero-price zero-utility goods. Then we have $p(G_{i,2}) = 2p(G_{i,3})$. Note that there are at most $m$ agents outside the gadget that have nonzero utility for $G_{i, 1}$ or $G_{i, 3}$, and no such agents for $G_{i, 2}$. Hence, the agents in $A_{i,1}, A_{i,2}$ buy all the goods in the gadget except for at most $m$ units of $G_{i, 1}$ and $G_{i, 3}$.	 Therefore,  
	\begin{align*}
	(K \pm O(m))p(G_{i,1}) + 2K p(G_{i,2}) + (K \pm O(m)) p(G_{i,3}) = 4K. \label{eq:constraint1}
	\end{align*}
Solving the system of this and the previous two equations yields Eq.~\eqref{eq:price-case3}. 
	
	Second, suppose agents in $A_{i,2}$ only buy $G_{i, 2}$, $G_{i,3}$. Suppose $p(G_{i, 3}) > 0$. Then we know agents $A_{i,2}$ buy $K\pm O(m)$ of $G_{i,3}$, and therefore, $K\pm O(m)$ of $G_{i,2}$. Hence, we have
	\begin{align}
	 &(K \pm O(m)  ) p(G_{i,2}) + (K \pm O(m)  )p(G_{i,3}) = 2K \notag\\
\text{and} \quad  &(K \pm O(m)  )p(G_{i,1}) + 2K p(G_{i,2}) +  (K \pm O(m)  )p(G_{i,3}) = 4K. 
	\end{align}
	These equations, together with $p(G_{i,2}) = 2p(G_{i,1})$ yield the same solution, i.e., Eq.~\eqref{eq:price-case3}.
	
	Finally, assume  $p(G_{i, 3}) = 0$. Then we only have the constraint Eq.~\eqref{eq:constraint1}. These equations, together with
$p(G_{i,2}) = 2p(G_{i,1})$ yield Eq.~\eqref{eq:price-case1}.
	The symmetric case, where some agents of $A_{i,2}$ buy some goods outside the gadget, yields  Eq.~\eqref{eq:price-case2}.
\end{proof}

The following lemma follows a similar argument of Lemma~\ref{lem:goode}, we omit the proof
\begin{lemma}
	\label{lem:np-price2}
	The price of goods $G_{j}$ satisfies $p(G_{j}) = 2 + O({1}/{m^2})$, $j \in [m]$. 
\end{lemma}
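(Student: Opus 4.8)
The final statement to prove is Lemma~\ref{lem:np-price2}, asserting $p(G_j) = 2 + O(1/m^2)$ for every clause gadget $j \in [m]$. The paper explicitly says this follows a similar argument to Lemma~\ref{lem:goode}, so the plan is to mimic that proof in the social-welfare setting. First I would establish the lower bound: each of the $2K$ agents in $A_{j,*}$ has utility $1/K^2$ only for the $K$ goods in $G_j$ (and zero for everything else, apart from the zero-priced dummy goods), so their optimal value is at most $1/(2K^2) \cdot 2 = 1/K^2$ (scaled appropriately)—more precisely, since $2K$ agents compete for $K$ units of $G_j$, each gets at most $1/2$ a unit of $G_j$, and the rest must come from zero-utility goods. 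By Lemma~\ref{simplesimplelemma}-type reasoning (a good with utility $1$... here the max utility is $1/K^2$, so one should first normalize utilities or argue directly), the agent can guarantee value proportional to $\min(1, 1/p(G_j))$ worth of $G_j$, which forces $p(G_j) \gtrsim 2$.

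Since this is an \emph{exact} HZ equilibrium (not approximate), the argument is actually cleaner than Lemma~\ref{lem:goode}: all goods in a group share exactly the same price, and an agent's bundle is exactly value-optimal. So the key steps are: (i) show $p(G_j) > 1$, i.e. $G_j$ is "expensive," since otherwise the $2K$ agents of $A_{j,*}$ plus agent $A_j$ would all want to load up on the cheap high-utility $G_j$ and it would be oversold (there are only $K$ units but demand would exceed it); this gives $p(G_j) \geq 2 - O(1/m^2)$ roughly by a counting/budget argument. (ii) For the upper bound, observe that the only agents with positive utility for $G_j$ are the $2K$ agents in $A_{j,*}$ and the single agent $A_j$. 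Agent $A_j$ buys at most $1$ unit total, so at least $K - 1$ units of $G_j$ are bought by $A_{j,*}$. Those $2K$ agents have total budget $2K$, and $G_j$ is their only positively-valued good besides the dummies, so they spend nothing on anything except $G_j$ and dummies; since $G_j$'s price exceeds that of dummies (price $0$), they spend as little as possible on $G_j$ consistent with utility maximization—actually they must be allocated at least $\sum = K - O(1)$ units, costing $(K - O(1)) p(G_j) \leq 2K$, giving $p(G_j) \leq 2 + O(1/K) = 2 + O(1/m^2)$ since $K = m^3$.

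Putting it together: the lower bound $p(G_j) \geq 2 - O(1/m^2)$ comes from the optimal-value guarantee for agents in $A_{j,*}$ (they can secure value equivalent to buying $\min(1,1/p(G_j))$ units of $G_j$ at utility rate $1/K^2$; if $p(G_j) < 2$ then more than $1/2$ unit of $G_j$ per agent would be demanded in aggregate across $2K$ agents, exceeding the $K$ available units, unless $p(G_j)$ is at least essentially $2$), and the upper bound comes from the budget constraint $(K - O(1)) p(G_j) \leq 2K$. Both error terms are $O(1/K) = O(1/m^3) = O(1/m^2)$, so $p(G_j) = 2 \pm O(1/m^2)$ as claimed.

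The main obstacle—though modest—is handling the interaction with agent $A_j$ and with the $O(m)$ outside agents cleanly: agent $A_j$ has utility $1$ for $G_j$, which is much larger than the $1/K^2$ utility of the $A_{j,*}$ agents, so $A_j$ genuinely wants $G_j$; one must confirm $A_j$ buys at most $1$ unit (trivial, budget/demand is $1$) and that this perturbs the count of units sold to $A_{j,*}$ by only $O(1)$, which feeds into the $O(1/m^2)$ error. One must also be slightly careful that in the lower-bound direction the relevant "guaranteed value" computation uses the correct utility scale $1/K^2$ and that dividing through by it is legitimate (equivalently, invoke the scale-invariance of HZ equilibria under utility scaling noted in the preliminaries, or just track constants explicitly). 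None of this is deep; it is the same bookkeeping as in the proof of Lemma~\ref{lem:goode}, which is why the authors elected to omit it.
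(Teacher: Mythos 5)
Your proposal follows exactly the route the paper intends (it defers to the argument of Lemma \ref{lem:goode}): the lower bound comes from the fact that the $2K$ identical agents of $A_{j,*}$, whose only positive-utility goods are the $K$ goods of $G_j$, must each be allocated the common optimal quantity $\min\bigl(1,1/p(G_j)\bigr)$ of $G_j$ (padding with zero-price dummy goods), so $2K\min\bigl(1,1/p(G_j)\bigr)\le K$ forces $p(G_j)\ge 2$ (exactly, since the equilibrium here is exact); the upper bound is the budget count $(K-O(1))\,p(G_j)\le 2K$, i.e.\ $p(G_j)\le 2+O(1/K)=2+O(1/m^3)$.

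One step you leave implicit should be made explicit: to claim that at least $K-1$ units of $G_j$ are bought by $A_{j,*}$ you must rule out purchases of $G_j$ by agents with zero utility for it. In the approximate-equilibrium template of Lemma \ref{lem:goode} this is the $\delta$-suboptimality step via Lemma \ref{fact:subopt-goods}; in the exact setting it is not automatic from the definition (the paper drops the cost-minimizing tie-breaking rule, so a priori an agent could park budget on expensive zero-utility goods), but it follows from complementary slackness: every agent in this market has some positive-utility good and finite prices, hence positive optimal value $\alpha^*+\mu^*>0$, so once $p(G_j)\ge 2>0$ we have $\alpha^*p(G_j)+\mu^*>0=u_{i,j}$ for every zero-utility agent $i$, and any exactly optimal bundle for her contains none of $G_j$. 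With that one line added (and the easy observations that $A_j$ takes at most one unit and that the dummy goods have price $0$ in a normalized exact equilibrium, which your lower-bound padding uses), your bookkeeping yields $p(G_j)=2\pm O(1/m^2)$ as claimed.
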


We are now ready to wrap up the proof of soundness. 
Given an equilibrium $(x, p)$ that (approximately) maximizes the social welfare, we look at each non-vacant variable gadget. 
Based on the three cases stated in Lemma~\ref{lem:np-variable}, we extract the $i$-th variable to be $1$ if Eq.~\eqref{eq:price-case2} holds and $0$ if Eq.~\eqref{eq:price-case1} holds. We do nothing for the case of Eq.~\eqref{eq:price-case3} and those vacant variables (gadgets).

The total utility of all agents in $A_{i, \ell}$ , $i \in [n], \ell \in [2]$, and all agents in $A_{j, *}$,  $j \in [m]$  is at most $O({1}/{m^2})$. We focus on the utility of agents $A_j$, $j \in [m]$.
 If the $j$-th clause is satisfied, then one of the $5/6$ utility goods has zero price, and one can see that the utility is (at most) 
 \[
 \left(\frac{1}{2} \pm O\left(\frac{1}{m^2}\right)\right)\cdot 1 + \left(\frac{1}{2} \pm O\left(\frac{1}{m^2}\right)\right)\cdot \frac{5}{6} = \frac{11}{12} \pm O\left(\frac{1}{m^2}\right).
 \] 
On the other hand, if the $j$-th clause is not satisfied, we still don't need to consider the vacant gadgets (as there is no interactions), and the $5/6$ utility goods have price at least $({2}/{3})\pm O( {1}/{m^2})$.
Hence the utility is at most 
\[
\left(\frac{1}{4} \pm O\left(\frac{1}{m^2}\right)\right)\cdot 1 + \left(\frac{3}{4} \pm O\left(\frac{1}{m^2}\right)\right)\cdot \frac{5}{6} = \frac{7}{8} \pm O\left(\frac{1}{m^2}\right).
\]
Thus, if the truth assignment satisfies at most $(\frac{7}{8}+\eps) m$ clauses then the social welfare is at most 
\[
\left(\frac{7}{8}+\eps\right)m \cdot \left(\frac{11}{12} + O\left(\frac{1}{m^2}\right)\right) + \left(\frac{1}{8}-\eps\right)m\cdot \left(\frac{7}{8} + O\left(\frac{1}{m^2}\right)\right) + O\left(\frac{1}{m}\right) = \frac{175}{192}m +\frac{1}{24}\eps m + O\left(\frac{1}{m}\right).
\] 

From \cite{haastad2001some}, it is NP-hard to distinguish the case that all clauses can be satisfied (in which case there is an equilibrium with social welfare $\frac{11}{12}m -O(1/m^2)$) from the
case that at most $(\frac{7}{8}+\eps) m$ clauses can be satisfied
(in which case the maximum social welfare is at most $\frac{175}{192}m +\frac{1}{24}\eps m + O(1/m)$). The theorem follows.

The construction can be easily modified, if desired, so that all
utilities are in $[0,1]$, and every agent has minimum utility 0 and maximum utility 1.


\section{Discussion}

In this paper we resolved the complexity of computing an approximate equilibrium in the Hylland-Zeckhauser scheme for one-sided matching markets: we showed that the problem is PPAD-complete, and this holds even for inverse polynomial approximation and four-valued utilities. We leave open the complexity of exact equilibria, in particular whether the problem is FIXP-complete. Another open question is whether the PPAD-hardness of the approximation problem holds also for 3-valued utilities.

\bibliographystyle{alpha}
\bibliography{ref}

\newpage
\appendix
\ifdefined\isArxiv
\else
\graphicspath{{./Figures/}}
\fi

\section{Padding}\label{sec:padding}

Recall that the goal is to reduce the problem of finding a
  $(1/n^5)$-approximate HZ equilibrium to that of finding a $(1/n^c)$-approximate
  HZ equilibrium, for some constant $c>0$.
Let $M$ be an HZ market with $n$ goods, $n$ agents and utilities $u_{i,j}$.
Let $N=n^{5/c}$. We create a new market $M^*$ with $nN$ goods and $nN$ agents.
This is done by replacing each good $j$ in $M$ by a group
  of $N$ goods $G_j$ and replacing each agent $i$ in $M$ by a group
  of $N$ agents.
Agents in each group $G_i$ have the same utility $u_{i,j}$ for goods in $G_j$.
Given that $c$ is a constant, $M^*$ can be constructed in polynomial time.

Now let $(x^*,p^*)$ be an $\eps$-approximate HZ equilibrium of $M^*$ with
$$
\eps=\frac{1}{(nN)^{c}}\le \frac{1}{n^5}.
$$
We derive a pair $(x,p)$ for the original market $M$ as follows:
  set $p_j$ to be the minimum price of goods in $G_j$; set $x_{i,j}$ to be 
  the total allocation of goods in $G_j$ to agents in $A_i$ divided by $N$.
We prove that $(x,p)$ is an $\eps$-approximate HZ equilibrium of $M$.
The first two conditions hold trivially by the uniform scaling.
For the third condition we note that the total cost of bundles of
  $A_i$ in $(x^*,p^*)$ is at most $N(1+\eps)$.
Therefore the cost of the bundle $x_i$ of agent $i$ is at most $1+\eps$
  because prices of goods can only go down.
For the last property, we note that the utility of agent $i$ from $x_i$
  is the same as the total utility of agents in $G_i$ divided by $N$.
On the other hand, the LP for agent $i$ (with respect to $p$)
is the same as the LP for each agent 
  in $G_i$ (with respect to $p^*$), after removing subsumed constraints.
So they have the same optimal value. This finishes the correctness proof of the reduction.
\section{Disconnected Equilibria}
\label{sec:disconnectedEq}

\begin{figure}[!ht]
    \centering
    \includegraphics[width=.6\textwidth]{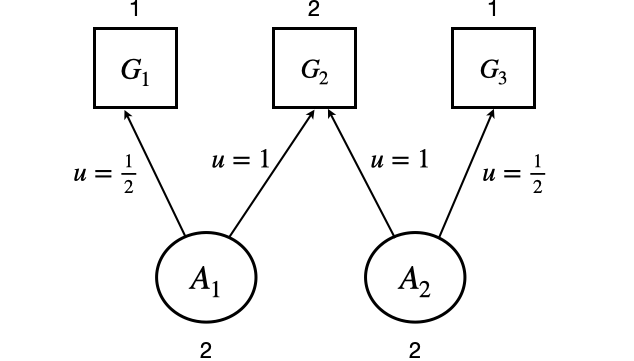}
    \vspace{0.2cm}\caption{Market with disconnected equilibria.}
    \label{fig:galaxy1}
\end{figure}

We provide a simple example showing that there exists disconnected equilibria even when the HZ market contains only four agents (i.e., $n = 4$) and the utility is drawn from $\{0, 1/2, 1\}$. The example is indeed the variable gadget we used in Section~\ref{sec:np}, we present it here for completeness.

Consider the following HZ market. There are two groups of agents, $A_1$ and $A_2$, and there are three groups of goods, $G_1$, $G_2$ and $G_3$. Let $|A_1| = |A_2| = 2$, $|G_1| = |G_3| =1$ and $|G_2| = 2$. Each agent in $A_1$ has utility $1/2$ for goods $G_1$, utility $1$ for goods $G_2$ and utility $0$ for goods $G_3$. Each agent in $A_2$ has utility $0$ for goods $G_1$, utility $1$ for goods $G_2$ and utility $1/2$ for goods $G_3$.

In any (exact) equilibrium, all goods in the same group
must clearly have the same price, because otherwise the most expensive
good in the group will remain unsold.  
Let $p = (p_1, p_2, p_3)$ denote an equilibrium price vector of $G_1, G_2, G_3$, with $\min_i p_i =0$.
Since both $A_1$ and $A_2$ have utility $1$ for $G_2$, the price $p_2 > 1$, because otherwise $G_2$ will be oversold.
\begin{claim}
	There are three disconnected equilibria in the above HZ market, with equilibria prices $(0,2,0)$, $(0, 8/5, 4/5)$ and $(4/5, 8/5,0)$ respectively. 
\end{claim}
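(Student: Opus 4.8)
The plan is to determine \emph{all} equilibrium price vectors and show the set is exactly the three stated points; since these are three distinct points of $\R^3$, they are pairwise disconnected (and for each the clearing allocation is easily written down). The starting point is the observation already recorded before the claim: in any exact equilibrium the two goods of $G_2$ share a common price $p_2$, and $p_2>1$ (else $G_2$ is oversold), while normalization forces $\min(p_1,p_2,p_3)=0$, hence $p_1=0$ or $p_3=0$. The instance is invariant under the relabeling that swaps $A_1\leftrightarrow A_2$ and $G_1\leftrightarrow G_3$, which sends a price vector $(p_1,p_2,p_3)$ to $(p_3,p_2,p_1)$; so it suffices to analyze the case $p_1=0$ and then apply this symmetry.

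First I would pin down $A_1$. With $p_1=0$ and $p_2>1$, good $G_1$ gives an $A_1$-agent free utility $1/2$, strictly more than the $0$ from $G_3$ (whatever $p_3\ge 0$ is), so its \emph{unique} optimal bundle spends the whole dollar on $G_2$ and fills the remaining capacity with $G_1$: $x_{G_2}=1/p_2$, $x_{G_1}=1-1/p_2$, $x_{G_3}=0$. Hence $A_1$ demands $2/p_2$ of $G_2$ and $2(1-1/p_2)$ of $G_1$ in total. If moreover $p_3=0$, the same reasoning applied to $A_2$ (with $G_3$ in the role of $G_1$) forces each $A_2$-agent to $x_{G_2}=1/p_2$, $x_{G_3}=1-1/p_2$, $x_{G_1}=0$; clearing $G_2$ then gives $4/p_2=2$, i.e.\ $p_2=2$, and one checks $G_1,G_3$ also clear. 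This is the equilibrium $(0,2,0)$.

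The substantive case is $p_1=0$, $p_3>0$. Here $G_3$ is still dominated for $A_1$, so $A_1$'s demand is again $(2(1-1/p_2),\,2/p_2,\,0)$ for $(G_1,G_2,G_3)$; clearing $G_1$ forces $A_2$ to demand $2/p_2-1$ of $G_1$ in total, so $p_2\le 2$, and $p_2=2$ is impossible (then $A_1$ already absorbs all of $G_1$, forcing each $A_2$-agent to $(0,1/2,1/2)$ at cost $1+p_3/2>1$). So $1<p_2<2$. Clearing the three goods then forces the \emph{aggregate} $A_2$-demand to be $(2/p_2-1,\,2-2/p_2,\,1)$, i.e.\ each $A_2$-agent demands $z:=(1/p_2-1/2,\,1-1/p_2,\,1/2)$ on average. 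A short argument (if an $A_2$-agent's budget were slack she could shift a little capacity onto $G_2$ and strictly gain) shows every optimal $A_2$-bundle costs exactly $1$, so $z$ costs $1$: $p_2(1-1/p_2)+p_3/2=1$, i.e.\ $2p_2+p_3=4$. Moreover the optimal-bundle set of $A_2$ is convex, so the average $z$ is itself optimal; and since $1<p_2<2$ all three coordinates of $z$ are positive, so complementary slackness with the dual LP forces $\mu^*=0$ (from $G_1$), $\alpha^*=1/p_2$ (from $G_2$) and $\alpha^*=1/(2p_3)$ (from $G_3$), whence $p_2=2p_3$. Together with $2p_2+p_3=4$ this gives $p_2=8/5$, $p_3=4/5$: the equilibrium $(0,8/5,4/5)$. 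Applying the symmetry yields $(4/5,8/5,0)$ from the case $p_3=0<p_1$, completing the characterization.

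Finally, for each of the three price vectors I would record an explicit clearing allocation and verify it is value-maximizing for every agent against its LP/dual (for $(0,8/5,4/5)$ the $A_2$-agents split $3/5$ of a dollar on $G_2$ and $2/5$ on $G_3$, then fill capacity with $G_1$). The main obstacle is the case $p_1=0<p_3$: one must rule out every way $A_2$'s optimum could fail to have full support and correctly exploit the resulting indifference $p_2=2p_3$ — and throughout keep in mind that two agents in the same group need not buy identical bundles, so only their aggregate demand is constrained by market clearing.
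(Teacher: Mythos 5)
Your proposal is correct and follows essentially the same route as the paper's proof: use the $G_1\leftrightarrow G_3$ symmetry to assume $p_1=0$, pin down the $A_1$-agents' forced bundle $(1-1/p_2,\,1/p_2,\,0)$, split into the case $p_2=2$ (yielding $(0,2,0)$) and the interior case where market clearing plus budget exhaustion give $2p_2+p_3=4$ and indifference across all three goods gives $p_2=2p_3$, hence $(0,8/5,4/5)$, with $(4/5,8/5,0)$ by symmetry. The only difference is that you make explicit some steps the paper leaves implicit (budget tightness of optimal $A_2$-bundles, convexity of the optimal set so the averaged bundle is optimal, and complementary slackness to get $p_2=2p_3$), which is a welcome tightening rather than a different argument.
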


\begin{proof}
	Since goods $G_1$ and $G_3$ are symmetric, w.l.o.g., we can assume $p_1 = 0$. The optimal bundle of $A_1$ contains exactly goods $G_1$ and $G_2$ in this case, and we know agents $A_1$ purchase $2/p_2$ unit of goods $G_2$ and $2 - 2/p_2$ unit of goods $G_1$ in total.
	We conclude $p_2 \leq 2$ since there is at most $1$ unit of goods $G_1$. 
	
	When $p_2 = 2$, agents $A_1$ get $1$ unit of $G_1$ and $1$ unit of $G_2$, and therefore, agents $A_2$ get $1$ unit of $G_2$ and $1$ unit of $G_3$. We have $p_3 = 0$ in this case. Thus, the price vector in
	this case is $(0,2,0)$.
	
	On the other hand, when $p_2 < 2$, the optimal bundle of agent $A_2$ must contain all three goods. Hence, we have $p_2 = 2p_3$ and $2p_2 +  p_3 = 4$. This leads to $p_2 = 8/5$ and $p_3 = 4/5$. The equilibrium allocation of $A_1$ equals $(3/4,5/4, 0)$ for agents $A_1$ and $(1/4, 3/4, 1)$ for agents $A_2$. 
The equilibrium price vector in this case is $(0, 8/5, 4/5)$.
Symmetrically, when $p_3=0$, there is an equilibrium price vector
	$(4/5, 8/5,0)$.	
\end{proof}

\begin{remark} If we consider also unnormalized prices (i.e. include price vectors with $\min_i p_i >0$), the set of equilibrium price vectors consists of three disjoint regions: 
$\{ (p, 2-p, p) | p \geq 0 \}$,  $\{ (1-q,1+3q,1-5q) |  0 < q \leq 1/5 \}$,  and $\{ (1-5q,1+3q,1-q) |  0 < q \leq 1/5 \}$.
The equilibrium allocations are the same as in the normalized price vectors in the three cases.
\end{remark}

\end{document}